\newcounter{casecounter}
\newcommand{\Case}[1]{\refstepcounter{casecounter}\par\medskip\noindent\textbf{Case \thecasecounter} (#1):\ \ignorespaces
}
\newcommand{\ineqref}[1]{\hyperref[#1]{Inequality~\ref*{#1}}}
\newtheorem{theorem}{Theorem}[section]
\newtheorem{thm}{Theorem}[section]
\newtheorem{lemma}[thm]{Lemma}
\newtheorem{corollary}[thm]{Corollary}
\newtheorem{claim}[thm]{Claim}
\newtheorem{remark}[thm]{Remark}
\newtheoremstyle{noparen}
{3pt}{3pt}
{}{}
{\bfseries}{:}
{12pt}{\thmname{#1}\thmnumber{ #2}\ \textnormal{\thmnote{ #3}}}
\newtheorem{problem}[thm]{Problem}
\renewcommand\b{\bm{\beta}}
\renewcommand\a{\bm{\alpha}}
\newcommand\F{\mathbb{F}}
\newcommand\R{\mathbb{R}}
\newcommand\C{\mathbb{C}}
\newcommand\Z{\mathbb{Z}}
\newcommand\Q{\mathbb{Q}}
\def\eps{\varepsilon}
\def\rhs{\text{RHS}}
\def\poly{\mathrm{poly}}
\def\NP{\mathsf{NP}}
\def\HN{\mathsf{HN}}
\def\etr{\exists\R}
\def\utr{\forall\R}
\def\lin{\mathsf{LIN}}
\def\bineg{\mathsf{Biquadratic Negativity}}
\def\proj{\mathsf{PolyProj}}
\def\ss{\mathsf{SparseShift}}
\def\coNP{\mathsf{coNP}}
\def\VP{\mathsf{VP}}
\def\VNP{\mathsf{VNP}}
\def\Perm{\mathsf{Perm}}
\def\Det{\mathsf{Det}}
\def\PH{\mathsf{PH}}
\DeclarePairedDelimiter{\norm}{\lVert}{\rVert}
\newcommand{\PSPACE}{\mathsf{PSPACE}}
\global\long\def\paren#1{\left(#1\right)}\newcommand{\AM}{\mathsf{AM}}
\newcommand{\defeq}{\coloneqq}
\title{Problems from Optimization and Computational Algebra Equivalent to Hilbert's Nullstellensatz}
\author[1]{Markus Bläser\thanks{Email: \texttt{mblaeser@cs.uni-saarland.de}.}}
\author[2]{Sagnik Dutta\thanks{Email: \texttt{sadutta@mpi-inf.mpg.de}.}}
\author[3]{Gorav Jindal\thanks{Email: \texttt{gjindal@mimuw.edu.pl}.}}
\affil[1]{Saarland University, Saarland Informatics Campus, Saarbr\"ucken, Germany}
\affil[2]{Max Planck Institute for Informatics, Saarland Informatics Campus, Saarbr\"ucken, Germany}
\affil[3]{University of Warsaw, Warsaw, Poland}
\date{}
\date{\vspace{-6ex}}
\begin{document}

\maketitle

\begin{abstract}
Solving polynomial systems is a powerful tool in optimization and computational algebra. The most efficient algorithms for many problems from optimization or algebra often stem from formulating these problems as a system of polynomial equations. This is due to the fact that there are astonishing algorithms for deciding feasibility of polynomial systems, like Koiran's $\AM$-algorithm over the complex numbers (assuming GRH) or Renegar's $\PSPACE$-algorithm over the reals.

To formalize this, Blum, Shub, and Smale \cite{blum1998complexity} introduced Hilbert's Nullstellensatz Problem: Hilbert's Nullstellensatz Problem $\HN_R$ over some ring $R$ asks whether a given set of multivariate polynomials over $R$ has a common solution in $R$. $\HN_{\F_2}$ defines the class $\NP$, since every Boolean formula can be arithmetized appropriately. Over the real numbers,
$\HN_\R$ is equivalent to the existential theory of the reals $\etr$, which is astonishing, since in $\etr$, we can define semi-algebraic sets. We can view $\HN_R$ as a parameterized complexity class: for every ring $R$, we get a class by taking the downward closure of $\HN_R$ under polynomial-time many-one reductions.
In this work, we show that for many important problems from optimization and algebra, formulating them as a system of polynomials equations is optimal, since we can reduce Hilbert's Nullstellensatz to them.

We first show that two well-known problems from algebra are equivalent to or at least as hard as Hilbert's Nullstellensatz over fields. The first one is the Affine Polynomial Projection Problem, which given two polynomials, asks whether one of them can be transformed into the other by an affine linear projection of the variables. Kayal \cite{kayal} proves that this problem is $\NP$-hard. Here we improve this lower bound by showing that it is harder than $\HN_F$ for any field $F$. As a corollary, we obtain that this problem is complete for the existential theory of the reals, which answers an open question asked in  \cite[A14, open question]{SCM}. The second problem is the Sparse Shift Problem, which asks whether for a given polynomial, there is an affine shift that reduces the number of monomials. For integral domains $R$ that are not fields, Chillara, Grichener, and Shpilka \cite{chillara} showed that this problem is hard for $\HN_R$. They left it as an open problem whether this is also true for fields. We here resolve this question in the positive. Over infinite fields $F$, where $\HN_F$ is complete for $\NP_F$ (in the BSS model), we show that the Sparse Shift Problem is even equivalent to $\HN_F$.

Second, we turn to an important case of Hilbert's Nullstellensatz over the real numbers. Real-stable polynomials have been a successful tool in mathematics and computer science in the recent years. For instance, the solution to the Kadison-Singer problem by Marcus, Spielman, and Srivastava \cite{zbMATH06456012} is deeply connected to real stable polynomials, and so is the improvement of the approximation performance of metric TSP by Karlin, Klein, and Oveis Gharan \cite{KNO}, just to give two examples. We here prove that testing whether a given polynomial is real stable is equivalent to the complement of $\HN_\R$, or equivalently, to the universal theory of the reals $\utr$. We show that the same is true for testing convexity and testing hyperbolicity, as well as for testing whether a biquadratic form is nonnegative, completely settling the complexity of all of these problems.
\end{abstract}

\section{Introduction}
\label{sec:intro}

Solving systems of polynomial equations is a central technique in both optimization and computational algebra, serving as a unifying framework for a wide range of algorithmic problems. Many of the most efficient algorithms in these domains are built upon the insight that complex problems---ranging from feasibility in nonlinear programming to questions in algebraic geometry---can be reduced to the task of solving a polynomial system.

This approach is powerful because it takes advantage of the existence of surprisingly efficient algorithms for deciding the solvability of polynomial systems. Over the complex numbers, Koiran \cite{koiran} demonstrated that the feasibility of polynomial systems lies in the class $\AM$ (Arthur-Merlin games), assuming the Generalized Riemann Hypothesis (GRH). This result (conditionally) places the problem in a probabilistic class that is believed to be much smaller than $\PSPACE$. Canny \cite{canny} and Renegar \cite{DBLP:journals/jsc/Renegar92}  developed $\PSPACE$-algorithms for deciding the feasibility over real numbers. These results highlight that, despite the apparent algebraic and geometric complexity of polynomial systems, their computational complexity can be surprisingly low. As a result, formulating problems as polynomial systems not only provides a unifying language, but also yields algorithms  whose complexity is as low as possible. Note that in this context, even an  upper bound like $\PSPACE$ is surprisingly low, since real numbers are infinite objects and until the work of Tarski \cite{traski}, it was not even clear whether this problem was decidable at all. 
To formalize this, Blum, Shub, and Smale, see \cite{blum1998complexity}, introduced Hilbert's Nullstellensatz Problem: 
For an arbitrary ring $R$, the problem is defined as follows:

\smallskip
$\HN_{R}$ (Hilbert's Nullstellensatz over ring $R$) : Given a set of polynomial equations $f_1 = 0, \cdots, f_t=0$ where each $f_i$ is from $R[x_1,\cdots, x_n]$, decide if there exists a vector $(a_1,\cdots,a_n) \in R^n$ such that $f_i(a_1,\cdots,a_n) = 0$ for all $i \in [t]$.
\smallskip

This problem is also called the Polynomial Feasibility Problem, but in the present paper, we will use the term Hilbert's Nullstellensatz as coined by Blum, Shub and Smale.
When $R$ is an arbitrary ring, the coefficients of the input polynomials might not be efficiently representable on a Turing machine. Therefore, it makes more sense to use the Blum-Shub-Smale (BSS) model of computation where the registers can store arbitrary elements of the ring and each ring operation is assumed to cost unit time. However, often it's possible to consider the problem on the Turing machine model as well. If $R$ is a finite field, then the coefficients of the input polynomials will have constant bit complexity. If $R = \R$ or $\C$, we can assume the coefficients of the inputs to be integers. For other rings, we can restrict the coefficients to be ring elements with small bit complexity. In such settings, $\HN_{\F}$ is known to be $\NP$-hard for all fields $\F$.

The problem of $\HN$ over the real field $\R$ is particularly interesting because it is complete for the well-studied complexity class $\etr$. This class captures the complexity of deciding the truth of sentences in the existential theory of the reals (ETR). Formally, the ETR language consists of all true sentences of the form $\exists x_1, \dots, x_n \ \phi(x_1, \dots, x_n)$, interpreted over the reals, where $\phi$ is a quantifier-free formula written as a Boolean combination of atomic formulas involving the constants $0,1$, variables, the operations $+, \cdot$, and the comparison operators $<, >, \leq, \geq, =$. The class $\etr$ is the set of all languages that are polynomial-time many-one reducible to the ETR language.  Similarly, there is also the universal theory of reals, defined by its complete problem—the negation of $\HN_\R$, i.e., the problem of deciding whether a system of polynomials has no solutions over the reals. This theory corresponds to the class $\utr$, which captures the complexity of deciding the truth of universally quantified sentences over the reals. Over the past decade, these two classes have received significant attention in algorithms and complexity theory due to their ability to precisely characterize the complexity of numerous natural problems in fields such as algebra, computational geometry, game theory, and machine learning, among others. For a comprehensive account of $\etr$-complete and $\utr$-complete problems, see the survey \cite{SCM}.

For $\HN_{\R}$, the best known upper bound is due to Canny \cite{canny} and Renegar \cite{DBLP:journals/jsc/Renegar92}, who placed it in $\PSPACE$. Similarly, the best known unconditional upper bound for $\HN_{\C}$ is also $\PSPACE$. However, under the Generalized Riemann Hypothesis, Koiran \cite{koiran} showed that $\HN_{\C}$ lies within the second level of the polynomial hierarchy ($\PH$), and even $\AM$. Matiyasevich \cite{mati} established that $\HN_{\Z}$ is undecidable, while the decidability of $\HN_{\Q}$ remains a longstanding open problem. These results illustrate that Hilbert's Nullstellensatz is generally much harder than $\NP$. As a result, strengthening $\NP$-hardness (or $\coNP$-hardness) results by proving equivalence to Hilbert’s Nullstellensatz (or its negation) significantly enhances our understanding of the complexity of the concerned problems and also shows that formulating them as systems of polynomial equations is optimal.

In this paper, we go beyond previously known $\NP$-hardness and $\coNP$-hardness results. We prove that many well-known open problems—previously only known to be $\NP$-hard or $\coNP$-hard—are in fact $\etr$-complete or $\utr$-complete. This establishes a significantly higher level of complexity for these problems. See \cref{sec:our-results} for the precise statements of our results.

\subsection{Affine Polynomial Projection Problem} One of the key questions in Algebraic Complexity Theory is the problem of checking the equivalence of polynomials under affine transformations. For a field $\F$, the problem is defined as follows:

\begin{problem}[$\proj_{\F}$ (Affine polynomial projection)]
 Given polynomials $f \in \F[y_1,\cdots,y_m],
 g\in  \F[x_1,\cdots,x_n]$, decide if there exists an $m \times n$ matrix $A$  and a vector $b \in \F^m$ such that $f(A\mathbf{x}+b) = g(\mathbf{x})$.
\end{problem}

The central problem of $\VP$ vs $\VNP$ in Algebraic Complexity Theory is an instance of $\proj$. We will have $\VP \neq \VNP$ if the permanent polynomial $\Perm_m = \sum\limits_{\pi \in S_m}\prod\limits_{i=1}^m x_{i\pi(i)}$ cannot be written as an affine projection of the determinant polynomial
$\Det_n = \sum\limits_{\pi \in S_n}\text{sign}(\pi)\prod\limits_{i=1}^n x_{i\pi(i)}$ whenever $n = 2^{(\log m)^{O(1)}}$ \cite[Chapter 2]{burgisser}. The classical problem of matrix multiplication is also an instance of $\proj$. Matrix multiplication will have an $\tilde{O}(n^2)$ algorithm if the matrix multiplication polynomial $\mathsf{Mat}_n = \sum_{1\leq i,j,k \leq n} x_{ij}y_{jk}z_{ki}$ can be written as an affine projection of the sum of products polynomial $\mathsf{SP}_m = \sum_{i=1}^m x_{i1}x_{i2}x_{i3}$ for $m = \tilde{O}(n^2)$ \cite{STRASSEN1969, BI11}.

Kayal \cite{kayal} showed that $\proj_{\F}$ is $\NP$-hard in general. In this paper, we improve his result by showing that it is $\HN_{\F}$-hard for all fields $\F$. When $\F = \R$, it implies that the affine projection problem is $\etr$-complete, which solves \cite[A14, open question]{SCM}.

\subsection{Finding Sparse Shift of Polynomials} Since the problem of polynomial equivalence under affine projections is hard in general, different restricted formulations of the problem have been considered. Kayal in \cite{kayal} showed randomized polynomial-time algorithms for $\proj$ when $f$ is the permanent/determinant polynomial and $A$ is required to be full rank. If $A$ is required to be the identity matrix, then we obtain the following problem:
\begin{problem}[Shift Equivalence Testing (SET)]
Given $f,g \in R[x_1,\cdots,x_n]$, decide if there exists a shift $b \in R^n$ such that $f(\mathbf{x}+b) = g(\mathbf{x})$.   
\end{problem}
 For polynomials of degree $d$ and $n$ variables, Grigoriev \cite{GRIGORIEV1997} gave three algorithms for SET: a deterministic algorithm for characteristic $0$, a randomized algorithm for large enough characteristic $p>0$ and a quantum algorithm for characteristic $2$. All these algorithms run in time polynomial in $\binom{n+d}{d}$. Dvir, Oliviera and Shpilka \cite{DOS} showed that given blackbox access to $f$ and $g$, there is a randomized algorithm of running time $\poly(n,d,s)$ where $n$ is the number of variables, $d$ is the degree bound and $s$ is the size bound on the circuits for $f$ and $g$. The only randomized part of their algorithm is where they solve instances of Polynomial Identity Testing (PIT) and their algorithm can be derandomized if and only if PIT can be derandomized.

A variant of SET asks if a given polynomial $f \in \F[x_1,\cdots,x_n]$ has a shift $b \in \F^n$ such that $f(\mathbf{x}+b)$ has at most $t$ monomials for some parameter $t$. Such a shift is called a $t$-sparse shift. Lakshman and Saunders \cite{LS96} considered this problem for univariate polynomials over fields of characteristic zero and produced $\poly(t,d)$ time algorithms. Grigoriev and Lakshman extended this result to multivariate polynomials. Their algorithm computes $t$-sparse shifts for $n$-variate polynomials in deterministic running time $(nt)^{O(n^2)}$. After two decades of no improvement over this exponential time algorithm, Chillara, Grichener and Shpilka \cite{chillara} were motivated to study the hardness of this problem and they considered the following slightly more general version of it:
 \begin{problem}[$\ss_{R}$ (Sparsification of polynomial via shift)]
Given a polynomial $f \in R[x_1,\cdots,x_n]$, decide if there exists a vector $(a_1,\cdots,a_n) \in R^n$ such that $f(x_1+a_1,\cdots,x_n+a_n)$ has strictly fewer monomials  than $f(x_1,\cdots,x_n)$.     
 \end{problem}

They showed in \cite{chillara} that $\ss_R$ is $\HN_R$-hard when $R$ is an integral domain that is not a field. Their technique crucially relies on the non-invertibility of one element inside the integral domain. In this paper, we circumvent that problem and show that $\ss_\F$ is $\HN_\F$-hard for all infinite fields $\F$. 

\subsection{Convexity of Polynomials and Biquadratic Non-negativity}\label{subsec:convexbiquad}

Convexity plays a key role in modern mathematical optimization, often serving as a criterion for an optimization problem's tractability. Many combinatorial optimization problems can be formulated as polynomial optimization problems, where the objective function is a polynomial and the constraints are also described by polynomials. However, determining whether a given polynomial optimization problem exhibits convexity is not always straightforward. Convexity verification has important practical implications. In polynomial global minimization, verifying convexity ensures that every local minimum is global, simplifying optimization. We recall the formal definition of convexity now. A polynomial function $f: \mathbb{R}^n \to \mathbb{R}$ is said to be \textit{convex} if it satisfies the convexity condition:
\[
f(\lambda x + (1-\lambda) y) \leq \lambda f(x) + (1-\lambda) f(y), \quad \forall x, y \in \mathbb{R}^n, \forall \lambda \in [0,1]\ .
\]
Equivalently, $f(x)$ is convex if and only if its Hessian matrix 
\[
H_f(x) = \nabla^2 f(x)
\]
is \textit{positive semidefinite} for all $x \in \mathbb{R}^n$.
The convexity of odd-degree polynomials can be easily determined: linear polynomials are trivially convex, while polynomials of odd degree greater than one can never be convex. Moreover, for quadratic polynomials, checking the convexity means checking the positive semidefiniteness of its constant Hessian matrix, which can be done in polynomial time. Thus, quartic (degree 4) polynomials present the first non-trivial case where determining complexity of convexity testing remained open for a long time. This question  was posed as one of seven open problems in complexity theory for numerical optimization in 1992 \cite{PS} and only in 2013, \cite{AOPT} made a major advance by showing this problem to be $\NP$-hard under Turing reductions and $\coNP$-hard under Karp reductions.
\begin{problem}[Convexity Testing of Quartic Polynomials]\label{prob:ctq}
Given an $n$-variate polynomial of degree 4, decide if it is convex.
\end{problem}
The hardness proof in \cite{AOPT} proceeds through a reduction from \cref{prob:ctq} to the problem of deciding non-negativity of biquadratic forms. A biquadratic form $b(x, y)$ is a polynomial in the variables $x = (x_1, \dots, x_n)$ and $y = (y_1, \dots, y_m)$ that can be written as:
\[
B(x, y) = \sum_{i \leq j, k \leq l} \alpha_{ijkl} x_i x_j y_k y_l.
\] for some $\alpha_{ijkl} \in \R$.
\begin{problem}[Nonnegativity Testing of Biquadratic Forms]\label{prob:biqnn}
Given a biquadratic form \\ $B(x_1, \dots, x_n, y_1, \dots, y_m)$, determine whether it is non-negative, i.e., decide whether  
\[
B(x_1, \dots, x_n, y_1, \dots, y_m) \geq 0
\]
for all $(x_1, \dots, x_n, y_1, \dots, y_m) \in \mathbb{R}^{n+m}$.
\end{problem}

 \cite{LNQY} showed \cref{prob:biqnn} to be $\coNP$-hard under Karp reductions (and $\NP$-hard under Turing reductions) and this implied the hardness result for convexity testing in \cite{AOPT}. \cite{LNQY} actually treated a more general version of \cref{prob:biqnn} where one asks for the minimum objective value of a biquadratic form $b(x,y)$ over the unit bi-sphere $\{(x,y) \in \R^{2n}\ :\ \norm{x}=\norm{y}=1\}$. This biquadratic optimization problem is interesting in its own right. It arises from the strong ellipticity condition problem in solid mechanics \cite{Knowles1975, RP90, Simpson1983, Wang1996} and the fundamental quantum theoretic problem of determining if a quantum state is entangled \cite{Gurvits}.

For an upper bound, \cite{canny} showed that \cref{prob:ctq} is in $\PSPACE$. We make significant progress on both of these problems by showing their $\utr$-completeness, hence settling their complexity; see \cref{sec:our-results} for more details.

\subsection{Checking Hyperbolicity and Real Stability of Polynomials}

\textit{Real stable} polynomials have proved to be a successful tool in some of the major mathematical advances. For instance, their theory was used to resolve the well-known Kadison-Singer problem \cite{zbMATH06456012} and to construct infinite families of Ramanujan graphs \cite{doi:10.1137/16M106176X,10.1109/FOCS.2013.63}. These polynomials have also found application in  improvement of approximation algorithms for metric TSP \cite{KNO}.

A multivariate real polynomial $p$ is said to be \textit{real stable} if it has no roots in the open upper half of the complex plane. Equivalently, $p$ is \textit{real stable} if and only if the univariate polynomial  
\[
p(te_1 + x_1, te_2 + x_2, \dots, te_n + x_n)
\]
is real-rooted for all choices of $e_1, \dots, e_n > 0$ and $x_1, \dots, x_n \in \mathbb{R}$.  It can also be shown as an exercise that a real univariate polynomial is \textit{real stable} if and only if it is real-rooted, meaning all its roots are real. This concept of \textit{real stability} plays a fundamental role in probability theory, control theory, and various other areas of mathematics.  

Hyperbolic polynomials generalize the concept of \textit{real stable} polynomials. A homogeneous $n$-variate polynomial $p$ is said to be \textit{hyperbolic with respect to a vector} $e\in \R^n$ if $p(e) > 0$ and $\forall x \in \R^n$, the univariate polynomial $p(x+te) \in \R[t]$ is real-rooted. Hyperbolic polynomials are of great interest in optimization theory due to their generality. They define barrier functions for interior point methods in hyperbolic optimization, which is a powerful generalization of semidefinite programming. In this paper, we look at real stable and hyperbolic polynomials from a computational complexity perspective. Specifically, we aim to characterize the complexity of the following problems:  

\begin{problem}[Real Stability Testing]\label{prob:rstest}
Given a real multivariate polynomial $p$, decide whether $p$ is \textit{real stable}.
\end{problem}  

\begin{problem}[Hyperbolicity Testing]\label{prob:hypertest}
Given a homogeneous multivariate polynomial $p$ and a direction $e \in \mathbb{R}^n$, decide whether $p$ is hyperbolic with respect to $e$.
\end{problem}  

A polynomial-time algorithm for the univariate variant of \cref{prob:rstest} follows immediately from \cite{sturm1835memoire}. The bivariate case of \cref{prob:rstest} was shown to be decidable in polynomial time by Raghavendra, Ryder and Srivastava \cite{raghavendra17itcs}.  Then it is  a natural question to ask about the complexity of \cref{prob:rstest} and \cref{prob:hypertest} for multivariate polynomials. To this end, it was shown by Saunderson \cite{saunderson}
 that \cref{prob:hypertest} for cubic multivariate polynomials is $\coNP$-hard.

For \cref{prob:rstest}, Chin \cite{chin2024}  established the $\coNP$-hardness of determining the real stability of homogeneous polynomials. Specifically, it is proven in \cite{chin2024}   that deciding real stability for polynomials of degree $d\geq 3$ with rational coefficients is $\coNP$-hard.  For a more detailed discussion, see \cite{raghavendra17itcs, chin2024} and the references therein.  In this paper, we significantly improve these $\coNP$-hardness results and prove that both of these problems are $\utr$-complete for quartic polynomials. Note that the prior works established $\coNP$-hardness results for hyperbolicity and real stability of cubic polynomials whereas our reductions achieve $\utr$-hardness for quartic polynomials. We leave it open whether $\utr$-hardness of these problems can be shown for cubic polynomials as well.

\section{Our Results}
\label{sec:our-results}

Before stating our results, we should discuss the issue of how the input polynomials are provided. We are mainly concerned with three kinds of white-box representations of the input polynomials: the \textit{dense representation} where we are given a list of all the coefficients, the \textit{sparse representation} where we are given a list of all the non-zero coefficients along with the corresponding monomials, and the \textit{arithmetic circuit representation} where we are given a circuit computing the polynomial. For more details, see \Cref{sec:prelim}.

Our first result gives a reduction from Hilbert's Nullstellensatz to the Sparse Shift Problem under all of the above representations.
\begin{restatable}{theorem}{restatablethmsparseshift}\label{sparseshift-thm}
  For all infinite fields $\F$, $\ss_{\F}$ is $\HN_{\F}$-hard in any of the white-box representations.
\end{restatable}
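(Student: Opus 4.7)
The plan is to construct, from an $\HN_\F$ instance $(f_1,\dots,f_t)\in\F[\mathbf{x}]^t$ with $\mathbf{x}=(x_1,\dots,x_n)$, a single polynomial $F$ in $\mathbf{x}$ together with a small block of auxiliary variables $\mathbf{z}$ such that $F$ admits a sparsifying shift if and only if the $f_i$ have a common zero in $\F^n$. The reduction should produce $F$ with polynomially many monomials and a polynomial-size arithmetic circuit, so that the same construction works simultaneously in the dense, sparse, and arithmetic-circuit representations.

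The skeleton I would follow is an additive, variable-disjoint split $F(\mathbf{x},\mathbf{z}) = L(\mathbf{x}) + N(\mathbf{z})$. The advantage is that for any shift $(\mathbf{a},\mathbf{b})$, the supports of $L(\mathbf{x}+\mathbf{a})$ and $N(\mathbf{z}+\mathbf{b})$ remain disjoint, so the monomial count splits cleanly as $|L(\mathbf{x}+\mathbf{a})| + |N(\mathbf{z}+\mathbf{b})|$ with no cross-cancellations. I would take $N(\mathbf{z})$ to be a \emph{shift-rigid anchor}: a polynomial whose monomial count is strictly increased by any nonzero shift in its variables. Over an infinite field such anchors exist in abundance; one can obtain rigidity by choosing the exponent support and the coefficients of $N$ to be generic (Zariski-dense) in their parameter spaces, which prevents any ``accidental'' cancellation between translated monomials. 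Once $N$ is rigid, any sparsifying shift of $F$ must satisfy $\mathbf{b}=0$, and the problem reduces to designing $L(\mathbf{x})$ so that $L(\mathbf{x}+\mathbf{a})$ has strictly fewer monomials than $L$ precisely when $\mathbf{a}$ is a common zero of $(f_1,\dots,f_t)$.

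This last step is the main obstacle and the point at which Chillara--Grichener--Shpilka's approach for non-field integral domains breaks down: their construction certifies the absence of spurious monomial collapses by exploiting a nonzero non-unit in the ring, a device not available over a field. My plan is to build $L$ as a tailored superposition such as $\sum_i \sigma_i(\mathbf{x})\,f_i(\mathbf{x})$, with distinguishing markers $\sigma_i$ chosen so that (i) at a common zero $\mathbf{a}$ a prescribed block of at least one monomial is killed by the vanishing of the constant-in-$\mathbf{x}$ part of $f_i(\mathbf{x}+\mathbf{a})$ and no new monomials appear outside the original support, while (ii) for any $\mathbf{a}$ that is not a common zero, the generic structure of the $\sigma_i$ forces at least one new monomial to appear in $L(\mathbf{x}+\mathbf{a})$ that was absent from $L$. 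Establishing (ii) is the technical crux: one must enumerate all cancellation patterns that could potentially reduce $|L|$ and then use the infinitude of $\F$ (namely, that the set of ``bad'' shifts producing a spurious collapse is a proper algebraic subvariety of $\F^n$ and can be ruled out by picking the $\sigma_i$ or their exponents outside it) to rule out every pattern except the intended one. This genericity-based substitute for non-invertibility is precisely what allows the proof to go through over fields, and the bulk of the technical effort would lie in verifying it rigorously.
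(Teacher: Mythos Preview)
Your proposal has a genuine gap at the construction of $L$. You write $L=\sum_i \sigma_i(\mathbf{x})\,f_i(\mathbf{x})$ and appeal to genericity of the markers $\sigma_i$ to rule out spurious sparsifying shifts, but this appeal is not well-posed. The ``bad'' shifts live in $\F^n$, while the $\sigma_i$ live in a separate parameter space; observing that bad shifts form a proper subvariety of $\F^n$ does nothing to eliminate them. What you actually need is that for some constructible choice of the $\sigma_i$ the set of bad shifts is \emph{empty}, a universal statement over all $\mathbf a\in\F^n$ that is far from automatic. A one-variable example already shows the difficulty: with $t=1$, $f_1=x_1^2+c$ (take $-c$ a nonsquare so the $\HN$ instance is unsatisfiable) and $\sigma_1=x_1+d$, one has $L=x_1^3+dx_1^2+cx_1+cd$ with four monomials, yet the shift $a=-d$ gives $L(x_1-d)=x_1^3-2dx_1^2+(d^2+c)x_1$ with three. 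Thus each concrete $\sigma_1$ can manufacture its own spurious sparsifying shift, and you have not supplied any mechanism by which a choice of $\sigma_i$ avoids this simultaneously for every $\mathbf a$ and every input system. There is also a tension in your setup: if the $\sigma_i$ are genuinely functions of $\mathbf x$, the shift acts on them too, making ``no new monomials appear outside the original support'' essentially impossible to guarantee; if instead they depend on fresh variables, you lose the only lever you cite for genericity. Finally, the anchor $N(\mathbf z)$ appears to be superfluous even in your own plan: since $\mathbf z$ is disjoint from $\mathbf x$ and appears only in $N$, you could simply drop $\mathbf z$ and $N$ altogether.

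The paper's route is quite different and avoids genericity entirely. It first linearizes the (degree-$2$) system by introducing a variable $\alpha_m$ for every monomial $m$, so that the original equations become \emph{linear} in $\alpha$ (hence a shift of $\alpha$ changes only the constant term) and the nonlinear content is isolated in separate consistency constraints $\alpha_{x_ix_j}=\alpha_{x_i}\alpha_{x_j}$. The target polynomial is then assembled from three explicit parts, each multiplied by $\lin(Y_\ell)=\sum_{y\in Y_\ell} y$ for a fresh block $Y_\ell$; this \emph{amplification} makes any monomial-count change in a part scale with $|Y_\ell|$, so a short explicit case analysis (their Lemma~\ref{lemma:sparsity-shift}) shows by inequalities, not genericity, that every non-solution shift fails to sparsify. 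A second stage enforces a finite list of linear side-constraints on the shift by the same amplification device. If you want to salvage your outline, the missing ingredient is precisely such a quantitative amplification-and-accounting mechanism replacing the unproven genericity claim.
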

This extends the result of \cite{chillara} where they showed $\ss_{R}$ is $\HN_{R}$-hard for all integral domains $R$ that are not fields. When $R$ is a non-field, it contains an element without any inverse and this element was crucially used in the reduction of \cite{chillara} to build the polynomial for the $\ss$ instance. We produce a reduction that doesn't rely on the existence of such non-invertible elements.

\begin{remark}
    Our proof of \Cref{sparseshift-thm} actually shows that for a field $\F$ of size at least $4n^4$, $\HN_{\F}$ on $n$ variables can be reduced to $\ss_{\F}$ in $\poly(n,s)$ time where $s$ is the total size of the white-box representations of the input polynomials. Therefore, given an $\HN$ instance on $n$ variables over a finite field $\F_q$, we can consider an extension field $K_{(n)}$ of $\F_q$ of size at least $4n^4$. If we can show that $\HN$ over $\F_q$ reduces to $\HN$ over $K_{(n)}$, then we obtain the interesting poly-time reduction of $\HN_{\F_q}$ on $n$ variables to $\ss_{K_{(n)}}$. This is indeed possible since a system $\{f_i(x_1,\cdots,x_n) = 0\}_{i=1}^t$ has a solution over $\F_q$ if and only if the system $\{f_i(x_1,\cdots,x_n) = 0\}_{i=1}^t \cup \{x_i^q - x_i = 0\}_{i=1}^n$ has a solution over $K_{(n)}$.
\end{remark}

Under the sparse or dense representation, $\ss_R$ is in $\NP_R$, the analogue of $\NP$ for the BSS model of computation. This is because given a vector $(a_1,\cdots,a_n)$ and the sparse representation of a polynomial $g(x_1,\cdots,x_n)$, we can efficiently verify if $f(x_1+a_1,\cdots,x_n+a_n) = g(x_1,\cdots,x_n)$ using sparse polynomial interpolation and sparse polynomial identity testing. Thus, we have the following corollary of \Cref{sparseshift-thm}.

\begin{corollary}
   Let $\F$ be an infinite field such that $\HN_{\F}$ is $\NP_{\F}$-complete. Then, $\ss_{\F}$ is $\NP_{\F}$-complete and equivalent to $\HN_{\F}$.
\end{corollary}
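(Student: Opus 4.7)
The plan is to combine the hardness established in \cref{sparseshift-thm} with an $\NP_\F$ upper bound on $\ss_\F$, then close the loop via the completeness hypothesis. Concretely, I would proceed in three steps: (i) exhibit an $\NP_\F$-verifier for $\ss_\F$ in the BSS model; (ii) invoke the hypothesis that $\HN_\F$ is $\NP_\F$-complete to obtain the reduction $\ss_\F \le \HN_\F$; (iii) combine with \cref{sparseshift-thm} to obtain the reverse reduction $\HN_\F \le \ss_\F$, and hence equivalence.

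For step (i), the nondeterministic machine guesses a candidate shift $b \in \F^n$ together with a sparse representation of a polynomial $g$ whose monomial count is strictly less than that of $f$. The syntactic check on the number of monomials is trivial. The semantic check is the polynomial identity $f(x+b) = g(x)$, which amounts to sparse identity testing applied to $f(x+b) - g(x)$. The sparsity of this difference is bounded by $|\supp(f)| \cdot \binom{n+d}{d}$ in the worst case of naive expansion, but for verification purposes one does not need to materialise the expansion: over the infinite field $\F$, one can in the BSS model deterministically evaluate both sides at a polynomial-size set of points that separates sparse polynomials (for instance, a Ben--Or/Tiwari geometric progression, or any Vandermonde-style collection of substitution points), at unit cost per arithmetic operation. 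For the dense representation the verification is even simpler, since the shifted polynomial can be produced coefficient-by-coefficient in time $\poly(\binom{n+d}{d})$ and compared directly to $g$.

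Given the upper bound from step (i), steps (ii) and (iii) are essentially bookkeeping. The hypothesis that $\HN_\F$ is $\NP_\F$-complete supplies the polynomial-time many-one reduction $\ss_\F \le \HN_\F$, and \cref{sparseshift-thm} supplies $\HN_\F \le \ss_\F$; together these establish $\ss_\F \equiv \HN_\F$. Combining $\NP_\F$-hardness of $\HN_\F$ (from completeness) with $\ss_\F \in \NP_\F$ yields $\NP_\F$-completeness of $\ss_\F$. The only substantive step is step (i), and within it the only delicate point is ensuring deterministic sparse identity testing in the BSS model over an infinite field; this is well-known rather than a genuine obstacle, so I do not expect any real difficulty.
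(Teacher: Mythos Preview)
Your proposal is correct and mirrors the paper's own argument: the paper establishes $\ss_\F \in \NP_\F$ by guessing the shift together with a sparse candidate $g$ and verifying $f(\mathbf{x}+b)=g(\mathbf{x})$ via sparse interpolation/identity testing, then combines this upper bound with \cref{sparseshift-thm} and the $\NP_\F$-completeness of $\HN_\F$ exactly as in your steps (ii)--(iii). The only place you add detail beyond the paper is in naming concrete evaluation schemes (Ben--Or/Tiwari, Vandermonde) for the verification; the paper simply asserts the verification is efficient.
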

 \cite{blum1989theory} showed that $\HN_{\F}$ is $\NP_{\F}$-complete when $\F = \R$ or $\C$. Thus, $\ss$ is complete for the existential theory of these fields. Our next result shows that the affine polynomial projection problem is harder than Hilbert's Nullstellensatz over all fields.

\begin{restatable}{theorem}{restatablethmpolyproj}\label{polyproj-thm}
  For all fields $\F$, the problem $\proj_{\F}$ is $\HN_{\F}$-hard under both sparse and arithmetic circuit representations.
\end{restatable}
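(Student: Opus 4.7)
The plan is to reduce $\HN_\F$ to $\proj_\F$ by constructing, from each $\HN_\F$ instance $\{h_1,\ldots,h_t\} \subseteq \F[z_1,\ldots,z_k]$, polynomials $f$ and $g$ of polynomial size (in both the sparse and the arithmetic-circuit representations) such that $g$ is an affine projection of $f$ if and only if the $h_i$'s share a common zero in $\F^k$. The shift vector $\mathbf{b}$ of the affine map will play the role of the candidate common root, while the matrix $A$ will essentially only be used to match a rigid ``scaffold'' polynomial. Concretely, I would introduce $t$ auxiliary ``extractor'' variables $\mathbf{w} = (w_1,\ldots,w_t)$ alongside the $k$ variables $\mathbf{y}$, and set
\[
f(\mathbf{y},\mathbf{w}) \defeq R(\mathbf{w}) + \sum_{i=1}^{t} w_i\, h_i(\mathbf{y}), \qquad g(\mathbf{v}) \defeq R(\mathbf{v}),
\]
on fresh variables $\mathbf{v} = (v_1,\ldots,v_t)$, where $R$ is a sparse rigid polynomial of degree strictly larger than $1 + \max_i \deg h_i$. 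A natural first candidate is $R(\mathbf{w}) = \sum_{i=1}^{t} w_i^{D_i}$ with pairwise distinct $D_i$'s. Then $f$ and $g$ are polynomial-size in both representations: $R$ has $t$ monomials and an $O(t \log D)$-size circuit via repeated squaring, and the bilinear piece $\sum_i w_i h_i(\mathbf{y})$ reuses the input representations of the $h_i$'s.

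The forward direction is immediate: if $\mathbf{b}_y^* \in \F^k$ is a common zero of the $h_i$'s, the affine map that sends $\mathbf{y} \mapsto \mathbf{b}_y^*$ (zero block on $\mathbf{y}$ with shift $\mathbf{b}_y^*$) and $\mathbf{w} \mapsto \mathbf{v}$ (identity block with zero shift) yields $f(A\mathbf{v} + \mathbf{b}) = R(\mathbf{v}) + \sum_i v_i h_i(\mathbf{b}_y^*) = R(\mathbf{v}) = g(\mathbf{v})$. For the reverse direction, I would analyze $f(A\mathbf{v} + \mathbf{b}) = R(\mathbf{v})$ degree-by-degree in $\mathbf{v}$. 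Because $\deg R > 1 + \max_i \deg h_i$, the top-degree part of the left-hand side is entirely determined by $R(A_{\mathbf{w}} \mathbf{v} + \mathbf{b}_w)$, and the rigidity of $R$ should force the $\mathbf{w}$-block of $A$ to be a (scaled) permutation matrix and $\mathbf{b}_w = 0$, while an analogous argument at intermediate degrees forces the $\mathbf{y}$-block of $A$ to vanish so that $\mathbf{y}$ is specialized to a constant $\mathbf{b}_y$. The residual lower-degree piece then reads $\sum_i (A_{\mathbf{w}}^{(i)} \mathbf{v} + b_{w,i}) h_i(\mathbf{b}_y) = 0$ as a polynomial in $\mathbf{v}$, and the invertibility of $A_{\mathbf{w}}$ immediately forces $h_i(\mathbf{b}_y) = 0$ for every $i$.

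The main obstacle I expect is proving the rigidity of the scaffold $R$ uniformly over all fields $\F$. Over $\R$ or $\C$, the rigidity of $\sum w_i^{D_i}$ under affine substitutions is classical, but over a small finite field---where $D_i$-th powers need not be injective and $\F$ may contain nontrivial $D_i$-th roots of unity---the naive pure-power scaffold admits degenerate projections in which the $\mathbf{w}$-block of $A$ is a nontrivial root-of-unity twist of a permutation. Handling this will require either a field-independent combinatorial scaffold (for instance, a sum of monomials with pairwise distinct exponent vectors, engineered so that matching each monomial's coefficient in $f(A\mathbf{v} + \mathbf{b}) = g(\mathbf{v})$ extracts an individual entry of $A$) or an alternative encoding that sidesteps the rigidity question entirely. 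The remaining steps---bookkeeping in the forward direction and verifying polynomial size in both the sparse and arithmetic-circuit representations---are expected to be routine.
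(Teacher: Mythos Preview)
Your approach works and is genuinely different from the paper's. The paper first invokes a structure lemma to reduce to degree-$2$ systems, then sets $f = \sum_i w_i^{d_i}(f_i(X)+z_i^{d_i}) + x_0 + \sum_i x_i^{D_i}$ and $g = \sum_i w_i^{d_i}z_i^{d_i} + x_0$ on the \emph{same} variable set: the extra high-degree monomials $x_i^{D_i}$ force the $X$-images to be constants outright, and a separate argument then pins down $P_kQ_k = w_kz_k$. Your construction is leaner (no structure lemma, no $z$- or $x_0$-gadgets, $g$ on fewer variables) and recovers the root purely from the bilinear residue $\sum_i P_i\,h_i(L)$.

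Two corrections to your self-analysis, though. First, you cannot in general force $A_{\mathbf y}=0$ (e.g.\ for $h_1=y_1$, $h_2=y_2$ there are affine projections with $A_{\mathbf y}\neq 0$), but you do not need to: once $P_i=c_iv_i$ with each $c_i\neq 0$ and $\mathbf b_w=0$, plugging $v_\ell=0$ for $\ell\neq j$ into $\sum_i c_iv_i\,h_i(A_{\mathbf y}\mathbf v+\mathbf b_y)=0$ already gives $h_j(\mathbf b_y)=0$. Second, the ``roots of unity'' worry is a red herring---a diagonal $A_{\mathbf w}$ with $c_i^{D_i}=1$ is harmless, since only invertibility of $A_{\mathbf w}$ is ever used. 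The actual field-dependent care is to choose the $D_i$ pairwise distinct, strictly larger than $2+\max_j\deg h_j$, and coprime to $\operatorname{char}\F$: coprimality is what makes the degree-by-degree argument force $\mathrm{lin}(P_i)=c_iv_i$ (a sum of at least two distinct pure powers $v_\ell^{D_i}$ is never a $D_i$-th power of a linear form when $p\nmid D_i$, by looking at cross-term coefficients) and what makes the $v_i^{D_i-1}$-coefficient force $b_{w,i}=0$. With these choices the $D_i$ are $O(d+t)$, so $f,g$ remain polynomial-size in both the sparse and circuit models.
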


Kayal proved in \cite{kayal} that this problem is $\NP$-hard. \Cref{polyproj-thm} improves upon this lower bound. We also have a reduction from $\proj$ to $\HN$ when the underlying field is $\R$ or $\C$. Given $f$ and $g$, we need to guess the matrix $A$ and the vector $b$ such that $f(A\mathbf{x}+b)-g(\mathbf{x})$ is the zero polynomial and then we use the observation that polynomial identity testing reduces to the existential theory over the reals or the complex numbers. Therefore, we obtain as a corollary of \Cref{polyproj-thm} that $\proj_{\F}$ is $\HN_{\F}$-complete when $\F=\R$ or $\C$. This solves \cite[A14, open question]{SCM} which asked if $\proj$ is complete for the existential theory of reals.

Ling, Nie, Qi and Ye \cite{LNQY} showed that it is $\coNP$-hard to compute the minimum value of a biquadratic objective function over the bi-sphere. With a minor adaptation, their proof also implies $\coNP$-hardness of deciding if a biquadratic form is non-negative. The following theorem shows that this problem is in fact complete for the universal theory of reals.

\begin{restatable}{theorem}{restatablethmbpsd}\label{bpsd-thm}
    Deciding non-negativity of a biquadratic form is $\utr$-complete.
\end{restatable}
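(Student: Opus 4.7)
The upper bound is immediate: the statement ``$B(x,y) \geq 0$ for all $(x,y) \in \R^{n+m}$'' is a single universally quantified polynomial inequality and is therefore a $\utr$ sentence.

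For $\utr$-hardness, the plan is to reduce from the complement of $\HN_\R$, which is $\utr$-complete. I would exploit the equivalence between biquadratic non-negativity and matrix polynomial positive semidefiniteness: writing $B(x,y) = y^\top N(x) y$, non-negativity of $B$ on $\R^{n+m}$ is equivalent to $N(x) \succeq 0$ for every $x \in \R^n$, where $N(x)$ is a symmetric matrix polynomial whose entries are quadratic forms in $x$. So the task becomes constructing, in polynomial time from a $\HN_\R$ instance, a quadratic-entry matrix polynomial $N(x)$ whose pointwise PSD condition captures infeasibility.

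Starting from a given $\HN_\R$ instance, I would first normalize via standard gadgets (degree reduction by introducing auxiliary product variables, followed by homogenization with an extra variable $x_0$ and a compactification trick that restricts attention to a bounded region) to an equivalent system of homogeneous quadratic polynomials $f_1, \dots, f_t$ in $n$ variables, with the property that the original system has a real solution iff the $f_i$ have a non-trivial common zero on the unit sphere. By bihomogeneity of biquadratic forms, it suffices to check PSD-ness of $N(x)$ on this compact domain. I would then design $N(x)$ as a Schur-complement block matrix with off-diagonal entries carrying the $f_i$ and diagonal blocks proportional to $\|x\|^2 I$, so that Schur analysis rewrites $N(x) \succeq 0$ as an inequality between $\sum_i f_i(x)^2$ and a scalar multiple of $\|x\|^4$.

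The main obstacle is converting the strict-positivity condition ``$\sum f_i^2 > 0$ on the sphere'' (equivalent to infeasibility) into a closed PSD condition on $N(x)$. I see two natural routes. The first is an effective separation bound from real algebraic geometry: whenever $\min_{\|x\|=1}\sum f_i(x)^2 > 0$, it is bounded below by a computable $c = 2^{-\poly(n,L)} > 0$ depending polynomially on the input bit-length $L$, so the strict condition is equivalent to the non-strict condition $\sum_i f_i^2 \geq c\|x\|^4$, which is directly a scalar PSD encoding. The second is to reduce not from $\HN_\R$ directly but from an already-$\utr$-complete matrix polynomial PSD variant and then push the entry degrees down to two via standard auxiliary-variable tricks. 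Establishing the separation bound $c$ effectively---in particular making it polynomial-time computable---is the delicate step; without it, the argument recovers only the $\coNP$-hardness in the spirit of Ling, Nie, Qi, and Ye.
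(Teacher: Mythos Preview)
Your overall architecture is right, and you correctly isolate the crux: turning the strict condition ``$\sum_i f_i(x)^2>0$ on the sphere'' into a closed inequality $\sum_i f_i(x)^2 \ge c\,\|x\|^4$ that can be hard-wired into a biquadratic form. But the step you label ``delicate'' is actually broken as stated. The effective positivity gap coming from real algebraic geometry is \emph{doubly} exponential, not singly exponential: the correct bound is of the form $c \ge 2^{-2^{\,O(L+n)}}$ (this is exactly the Schaefer--\v{S}tefankovi\v{c} separation theorem used in the paper), and in general one cannot do better. Consequently you cannot place $c$ as a rational coefficient in your matrix $N(x)$ in polynomial time, since writing $c$ already takes $2^{\Theta(L+n)}$ bits. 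Your Schur-complement encoding therefore does not yield a polynomial-time reduction.

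The idea you are missing is a gadget that \emph{manufactures} a doubly-exponentially small reference quantity using only polynomially many extra variables, while keeping every monomial semi-biquadratic. The paper does this with a repeated-squaring chain: new variables $y_1,\dots,y_m,z_1,\dots,z_m$ together with the sum-of-squares penalty $\sum_k (y_k-y_{k-1}z_{k-1})^2 + \sum_k (y_k-z_k)^2$ force $y_m^2 \lesssim y_0^{2^{m}}$ whenever the penalty is dominated by $y_m^2$. Starting from $y_0=1/4$ and taking $m=\Theta(L+n)$ gives $y_m^2 < 2^{-2^m}$, which is then compared against $g(x,w)$; the whole expression stays degree $\le 2$ in each of the two variable groups and is finally homogenized with two fresh variables to a genuine biquadratic form. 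Your Schur-complement viewpoint $B(x,y)=y^\top N(x)y$ is a perfectly good packaging, but without this squaring chain (or an equivalent device) there is no way to bridge the doubly-exponential gap, and the argument indeed collapses back to the $\coNP$-hardness of Ling--Nie--Qi--Ye. Your second route, reducing from an ``already $\utr$-complete matrix polynomial PSD variant,'' just relocates the same obstacle: any such variant with quadratic entries would itself require the doubly-exponential gadget to establish.
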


In \cite[Lemma 2.8]{DBLP:journals/mst/SchaeferS24}, Schaefer and Stefankovic showed through an involved construction that the problem of deciding non-negativity of a degree $6$ polynomial is $\utr$-hard. While they also described how to reduce the degree in the construction to 4 \cite[Remark 2.9]{DBLP:journals/mst/SchaeferS24}, the polynomial constructed is not a biquadratic form. Since biquadratic forms have a very restricted structure, we need an even more intricate construction for \Cref{bpsd-thm}.

\cite{AOPT} reduced the problem of deciding non-negativity of biquadratic forms to deciding convexity of quartic polynomials, thereby obtaining $\coNP$-hardness of the latter problem from that of the former. Using their reduction, we get $\utr$-hardness of convexity testing as a corollary of \Cref{bpsd-thm}.

\begin{theorem}\label{convexity-thm}
    Deciding convexity of a degree $4$ polynomial is $\utr$-complete.
\end{theorem}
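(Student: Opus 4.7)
The plan is to establish both membership in $\utr$ and $\utr$-hardness. For membership, I would invoke the Hessian characterization of convexity: a polynomial $f \in \R[x_1,\ldots,x_n]$ of degree $4$ is convex if and only if the Hessian $H_f(x) = \nabla^2 f(x)$ is positive semidefinite at every $x \in \R^n$, which is in turn equivalent to $v^T H_f(x) v \ge 0$ for all $x, v \in \R^n$. Since $f$ has degree $4$, each entry of $H_f(x)$ is a polynomial in $x$ of degree at most $2$, and hence $v^T H_f(x) v$ is a polynomial of total degree at most $4$ in the $2n$ joint variables $(x,v)$. Convexity is therefore captured by the universal sentence $\forall x \in \R^n\, \forall v \in \R^n\colon v^T H_f(x) v \ge 0$, which is a $\utr$-formula, so the problem lies in $\utr$.

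For $\utr$-hardness, I would appeal to the polynomial-time reduction of \cite{AOPT} from \Cref{prob:biqnn} (nonnegativity of biquadratic forms) to \Cref{prob:ctq} (convexity of quartic polynomials). Given a biquadratic form $B(x,y) = \sum_{i \le j,\, k \le l} \alpha_{ijkl} x_i x_j y_k y_l$, their construction produces an explicit quartic polynomial $f_B$ in $n+m$ variables whose Hessian quadratic form $v^T H_{f_B}(z) v$ equals (up to a harmless positive constant) $B(z^{(1)}, v^{(2)})$, where $z^{(1)}, v^{(2)}$ are appropriate sub-blocks of $z,v$. The key symmetry exploited here is that the map $(z,v) \mapsto v^T H_{f_B}(z) v$ is itself a biquadratic form, and one can choose the coefficients of $f_B$ so that this form coincides with $B$. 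Consequently $f_B$ is convex iff $B$ is nonnegative. Composing this reduction with \Cref{bpsd-thm} immediately gives $\utr$-hardness of \Cref{prob:ctq}.

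The main technical obstacle is already subsumed by \Cref{bpsd-thm}: once $\utr$-hardness of biquadratic nonnegativity is established, \Cref{convexity-thm} is essentially a corollary, since the \cite{AOPT} reduction was already designed to preserve the relationship between the truth of a $\utr$-type statement about $B$ and the truth of the corresponding $\utr$-statement about the Hessian of $f_B$. The only items to verify are that the reduction is computable in polynomial time (which is immediate from the explicit formula for $f_B$ in terms of the $\alpha_{ijkl}$) and that it is a many-one reduction rather than a Turing reduction (which it is, as the construction of $f_B$ does not depend on any oracle answers). No additional $\utr$-level argument is needed beyond what is already provided by \Cref{bpsd-thm}.
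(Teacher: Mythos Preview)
Your overall approach is correct and matches the paper's: membership via the Hessian characterization $\forall x,v\colon v^T H_f(x)v \ge 0$, and hardness by composing \Cref{bpsd-thm} with the many-one reduction of \cite{AOPT}. That is exactly what the paper does.

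However, your description of the \cite{AOPT} reduction is inaccurate. You write that the Hessian quadratic form $v^T H_{f_B}(z) v$ ``equals (up to a harmless positive constant) $B(z^{(1)}, v^{(2)})$'' and that ``one can choose the coefficients of $f_B$ so that this form coincides with $B$''. This is not what happens. The construction in \cite{AOPT} (quoted explicitly in the paper as Theorem~2.3 there) sets
\[
f(X,Y) \;=\; b(X,Y) \;+\; \frac{n^2\gamma}{2}\Big(\sum_i (X_i^4+Y_i^4) + \sum_{i<j}(X_i^2X_j^2+Y_i^2Y_j^2)\Big),
\]
so the Hessian quadratic form of $f$ has substantial extra terms coming from the added quartics; it is certainly not a scalar multiple of $b$. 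The point of those extra terms is precisely to make the diagonal blocks of $H_f$ large enough that positive semidefiniteness of $H_f$ reduces (via a Schur-complement-type argument) to nonnegativity of $b$. Your sketch, as written, would suggest that $f_B=B$ already works, which it does not: a biquadratic form is never convex unless it is identically zero, since its Hessian at $(x,0)$ in the $y$-direction recovers $2B(x,\cdot)$ but in other directions can be negative.

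This does not break your proof, because you ultimately just need to invoke the \emph{statement} of the \cite{AOPT} reduction, not reprove it. But you should remove or correct the parenthetical description of the mechanism.
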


Note that the papers \cite{LNQY} and \cite{AOPT} claim to prove only $\NP$-hardness of the problems of deciding non-negativity of biquadratic forms and deciding convexity of quartic polynomials, respectively. However, their proofs use Turing reductions instead of polynomial-time many-one reductions (Karp reductions), which we are more interested in. In fact, their arguments imply $\coNP$-hardness of these problems under Karp reductions.

Saunderson \cite{saunderson} proved that the problem of deciding if a cubic homogeneous polynomial is hyperbolic with respect to a vector is $\coNP$-hard. We show that if you allow quartic polynomials in the reduction, then this problem is $\utr$-hard as well. 

\begin{theorem}\label{hyperbolic-thm}
    Given a homogeneous quartic polynomial $p$ and a vector $e$, the problem of deciding if $p$ is hyperbolic with respect to $e$ is $\utr$-complete.
\end{theorem}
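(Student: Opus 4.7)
The plan is to establish $\utr$-completeness by proving membership in $\utr$ and then reducing biquadratic non-negativity, whose $\utr$-hardness is supplied by \cref{bpsd-thm}, to quartic hyperbolicity testing. Membership is straightforward: a homogeneous polynomial $p$ is hyperbolic with respect to $e$ if and only if $p(e) > 0$ and for every $\bar{x} \in \R^n$ the univariate polynomial $q_{\bar{x}}(t) = p(\bar{x} + t e)$ is real-rooted. Real-rootedness is expressible as a polynomial-size universal formula over $\R$, for instance by requiring that for all $a, b \in \R$ with $b \neq 0$ the quantity $(\mathrm{Re}\, q_{\bar{x}}(a + bi))^2 + (\mathrm{Im}\, q_{\bar{x}}(a + bi))^2$ is strictly positive, so the hyperbolicity condition fits into a $\utr$-formula.

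For hardness, given a biquadratic form $B(x, y) = \sum_{i \le j,\, k \le l} \alpha_{ijkl}\, x_i x_j y_k y_l$, I would set $C = \sum |\alpha_{ijkl}|$ and $K = 2C + 1$, and output the polynomial
\[
  p(t, x, y) \;=\; t^4 \;-\; K\bigl(\|x\|^2 + \|y\|^2\bigr)\, t^2 \;+\; B(x, y)
\]
together with the direction $e = (1, 0, \dots, 0) \in \R^{1 + n + m}$. Every monomial has total degree $4$, so $p$ is a homogeneous quartic, and $p(e) = 1 > 0$.

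The verification uses the fact that $p(t, \bar{x}, \bar{y})$ contains only even powers of $t$: writing $u = t^2$, its roots in $t$ are $\pm \sqrt{u_1}, \pm \sqrt{u_2}$, where $u_1, u_2$ are the roots of the quadratic
\[
  u^2 \;-\; K\bigl(\|\bar{x}\|^2 + \|\bar{y}\|^2\bigr)\, u \;+\; B(\bar{x}, \bar{y}) \;=\; 0.
\]
Real-rootedness of $p(\cdot, \bar{x}, \bar{y})$ is therefore equivalent to both $u_1, u_2$ being real and nonnegative. If $B \ge 0$ everywhere, then $u_1 u_2 = B(\bar{x}, \bar{y}) \ge 0$ and $u_1 + u_2 = K(\|\bar{x}\|^2 + \|\bar{y}\|^2) \ge 0$ together force nonnegativity of both roots, while the discriminant $K^2(\|\bar{x}\|^2 + \|\bar{y}\|^2)^2 - 4 B(\bar{x}, \bar{y})$ is nonnegative because $|B| \le C$ on the unit sphere $\|\bar{x}\|^2 + \|\bar{y}\|^2 = 1$ and both sides of the inequality are $4$-homogeneous in $(\bar{x}, \bar{y})$. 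Conversely, if $B(\bar{x}_0, \bar{y}_0) < 0$ at some point, then $u_1 u_2 < 0$ forces one of $u_1, u_2$ to be negative, so $\pm\sqrt{u_i}$ is not real and $p$ fails to be hyperbolic.

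The delicate point is calibrating $K$: it must be large enough that the discriminant condition $K^2(\|x\|^2 + \|y\|^2)^2 \ge 4 B(x, y)$ holds automatically whenever $B \ge 0$, so that hyperbolicity of $p$ captures exactly the desired inequality $B \ge 0$ and nothing more, while remaining polynomial-size in the input; the crude coefficient-sum bound $K = 2C + 1$ turns out to be sufficient. Combining this reduction with the $\utr$-membership gives $\utr$-completeness of hyperbolicity testing for homogeneous quartic polynomials.
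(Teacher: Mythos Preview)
Your proposal is correct, and your construction is literally the same as the paper's: both reduce from biquadratic non-negativity (\cref{bpsd-thm}) via the quartic $p = x_0^4 - K\norm{x}^2 x_0^2 + Q$ with $e$ the first standard basis vector, calibrating $K$ so that the upper-bound condition $Q \le \tfrac{K^2}{4}\norm{x}^4$ is automatic. The difference is entirely in the analysis of when $p(t,\bar x)$ is real-rooted. The paper invokes Saunderson's parametrized B\'ezoutian criterion (\cref{subspace}), computes the $4\times 4$ matrix $B_{p,e}(x)$ explicitly, and takes a Schur complement to extract the two scalar conditions $Q\ge 0$ and $Q\le \tfrac{\beta^2}{4}\norm{x}^4$. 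You instead exploit directly that $p(t,\bar x)$ is biquadratic in $t$: substituting $u=t^2$ and reading off the sign of the product, sum, and discriminant of the resulting quadratic in $u$ yields exactly the same pair of conditions. Your route is more elementary and self-contained; the paper's route ties the argument into the general hyperbolicity machinery and would scale to constructions where the even-powers trick is unavailable. Either way the reduction and the choice of $K$ are identical.
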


Chin \cite{chin2024} showed that testing hyperbolicity of the polynomial in Saunderson's construction reduces to testing real stability of a polynomial, thereby leveraging Saunderson's result to demonstrate $\coNP$-hardness of testing real stability. Using Chin's approach, we get $\utr$-hardness of real stability testing as a corollary of \Cref{hyperbolic-thm}.

\begin{theorem}\label{stability-thm}
    Given a homogeneous quartic polynomial $p$, the problem of deciding if $p$ is real stable is $\utr$-complete.
\end{theorem}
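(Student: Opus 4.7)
The plan is to prove both containment in $\utr$ and $\utr$-hardness, obtaining $\utr$-completeness of real stability testing for homogeneous quartic polynomials.

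For containment, I would observe that a real polynomial $p(z_1,\ldots,z_n)$ fails to be real stable iff there exist complex numbers $z_1, \ldots, z_n$ with strictly positive imaginary parts satisfying $p(z_1, \ldots, z_n) = 0$. Writing $z_j = a_j + i b_j$ with $a_j, b_j \in \R$ and separating $p$ into its real and imaginary parts yields two real polynomial expressions $R(a, b)$ and $I(a, b)$ whose coefficients are computable in polynomial time from those of $p$. Hence $p$ is not real stable iff $\exists a_1, \ldots, a_n, b_1, \ldots, b_n \in \R$ with each $b_j > 0$ and $R(a,b) = I(a,b) = 0$, which is an existential $\R$-sentence, so its complement places real stability testing in $\utr$.

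For hardness, I would reduce from the quartic hyperbolicity testing problem shown to be $\utr$-hard in \Cref{hyperbolic-thm}, following Chin's approach \cite{chin2024}, who used the analogous idea to upgrade Saunderson's $\coNP$-hardness of cubic hyperbolicity to $\coNP$-hardness of real stability. Given an instance $(p, e)$ with $p$ a homogeneous quartic polynomial, the goal is to construct a homogeneous quartic polynomial $q$ (on possibly more variables) such that $q$ is real stable iff $p$ is hyperbolic with respect to $e$. The underlying structural fact is that a homogeneous real polynomial is real stable iff it is hyperbolic with respect to every direction in the positive orthant; by composing $p$ with a carefully chosen invertible linear change of variables that sends $e$ to a positive vector (and possibly symmetrizing sign choices to neutralize negative entries of $e$), one converts single-direction hyperbolicity of $p$ into positive-orthant hyperbolicity, and hence real stability, of $q$.

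The main obstacle will be verifying that Chin's reduction, which was designed around the specific cubic polynomial in Saunderson's construction, carries over cleanly to the homogeneous quartic polynomials produced by the proof of \Cref{hyperbolic-thm}. Concretely, I need to ensure that the reduction preserves homogeneity and does not increase the degree past $4$, that the bit complexity of the resulting instance remains polynomially bounded, and that the biconditional between hyperbolicity of $p$ w.r.t. $e$ and real stability of $q$ holds for the particular structure of our $(p, e)$ instance (e.g., the direction $e$ chosen in the hardness reduction, and any conditions like $p(e) > 0$ imposed by the hyperbolicity definition). Since the polynomials from \Cref{hyperbolic-thm} are already quartic and homogeneous, I expect these checks to be largely routine, and the stated corollary of \Cref{hyperbolic-thm} to go through essentially as Chin's argument does for cubics.
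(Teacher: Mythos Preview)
Your plan is correct in outline and matches the paper: the containment argument is the same, and the hardness proof likewise leverages \Cref{hyperbolic-thm} through Chin's construction, with the appendix verifying that the construction goes through for the specific quartic $p(x_0,x)=x_0^4-\beta x_0^2\norm{x}^2+Q(x)$ and direction $e_0=(1,0,\ldots,0)$.

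One point in your sketch is off, however: the passage from hyperbolicity to real stability is \emph{not} via an invertible linear change of variables. An invertible change cannot turn single-direction hyperbolicity into positive-orthant hyperbolicity, since it merely relabels directions. The actual construction sets $\tilde p(y_1,\ldots,y_{2n}) \defeq p(My)$ for a \emph{wide} $(n{+}1)\times 2n$ matrix $M$ whose columns are $e_0\pm\eps e_i$, so $\tilde p$ has more variables than $p$. The substance of the argument is a cone lemma: for sufficiently small $\eps$ (determined from $\beta$ and $n$), one shows $p(\norm{x},\eps x)>0$ for all nonzero $x$, hence the cone $K_\eps=\operatorname{cone}\{e_0\pm\eps e_i\}$ lies inside the hyperbolicity cone $\Lambda_+(p,e_0)$, so $p$ is hyperbolic w.r.t.\ $e_0$ iff w.r.t.\ every vector in $K_\eps$. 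Since $M$ maps $\R^{2n}_{>0}$ into $K_\eps$ and $M\mathbbm{1}=2n\,e_0$, this is exactly real stability of $\tilde p$. This is where the specific form of the quartic from \Cref{hyperbolic-thm} is used; your anticipated checks on degree, homogeneity, and bit size are then routine.
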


\begin{remark}
 For \Cref{bpsd-thm,convexity-thm,hyperbolic-thm,stability-thm}, we deal with constant degree polynomials, hence all the white-box representations have the same complexity up to polynomial factors and it does not matter which one we use.
\end{remark}

 \subsection{Proof Idea for \Cref{sparseshift-thm}}
 Throughout this paper, the terms ``monomial count'' and ``number of monomials'' are used interchangeably to denote the number of nonzero terms in a polynomial. 

 Our goal is to establish a reduction from the $\HN$ problem to the $\ss$ problem by constructing a polynomial $Q_S$ such that $Q_S$ has a sparse shift if and only if the given system $S$ of polynomial equations has a solution. 

 In \cite{chillara}, they show such a reduction when the underlying ring is an integral domain, but not a field, and hence contains an element $\gamma$ without any multiplicative inverse. For structured polynomial systems
\[
S: \{g_1(x_1,\cdots,x_N)=0,\cdots,g_t(x_1,\cdots,x_N)=0\},
\] 
they show that the polynomial
\[
  Q_S(x_0,\cdots,x_N,w_1,\cdots,w_t) := w_1\cdot g_1(\mathbf{x}) + \sum_{i=2}^t w_i \cdot (\gamma g_i(\mathbf{x}) + \sum_{k=0}^N x_k)
\]has a sparse shift if and only if $S$ has a solution. The non-invertibility of $\gamma$ helps in showing that after a shift, no cancellation of the monomials takes place if $S$ is not satisfiable. A straightforward idea for extending this approach to fields could be to treat $\gamma$ as a variable itself and try to make the same proof work. However, now $\gamma$ itself can be shifted as well, which presents a serious obstacle to the proof. Therefore, we needed new ideas and more complex machinery to design $Q_S$, which we briefly sketch below.

First, we utilize the \textit{Structure Lemmas} from \cite{chillara} to transform $S$ into an equivalent system where all polynomials have degree at most 2, and at most one equation has a nonzero constant term. This structured form simplifies the analysis and polynomial construction. Next, we introduce a new set of variables $\a$ corresponding to the monomials of degree at most 2. The original system is then rewritten in terms of these new variables, leading to an equivalent system $S_1 \cup S_2 \cup S_3$, where:
\begin{itemize}[nosep]
    \item $S_1$ captures the original equations reformulated in terms of the new variables.
    \item $S_2$ encodes quadratic consistency constraints.
    \item $S_3$ ensures a specific sum condition on the variables.
\end{itemize}

It is then shown that the original polynomial system $S$ has a solution if and only if the system $S_1 \cup S_2 \cup S_3$ has one. We then introduce two additional sets of variables, $\b$ and $Y$.

Next, we construct a polynomial $P_S$ over the expanded set of variables $(\a, \b, Y)$ such that a sparsifying shift (satisfying certain conditions) exists for $P_S$ if and only if $S$ has a solution. This is achieved by analyzing how the monomial count of $P_S$ changes under shifts in its variables. By carefully tracking these changes, we establish conditions under which a shift reduces monomial count.

Finally, we construct $Q_S$ such that the existence of a sparsifying shift for $P_S$ (under a particular linear constraint) directly corresponds to the existence of a sparsifying shift for $Q_S$ in the unrestricted setting, thereby completing the reduction.

One of the key ideas in our reduction is \textit{amplification of the monomial count increase} in different parts of the polynomial under shift. This allows us to enforce the satisfiability of an equation in the following way: if the equation is not satisfied by a shift, then the monomial count of the corresponding part increases and we can amplify this monomial count increase heavily so that there is an incentive against not satisfying the equation. This amplification is achieved as follows:  let $P$ be a polynomial in the variables $x_1, \dots, x_m$. If we consider the polynomial
\[
Q := \paren{\sum_{i\in[k]}y_i} P(x_1, \dots, x_m),
\]
then, any shift that decreases the monomial count of $Q$ can clearly be assumed to leave the $y$ variables unchanged. Moreover, if a shift in the $x$ variables increases the monomial count of $P$ by $\ell$, then the same shift increases the monomial count of $Q$ by $\ell k$.

We apply this idea to various parts of the polynomial constructed in our reduction to penalize the number of monomials that can be added by a shift. The full proof, which is significantly more intricate and involves substantial technical challenges, is provided in \Cref{sec:res}. It requires carefully constructing the polynomial $P_S$ and thoroughly analyzing how shifts affect its monomial count. For a complete understanding, we refer the reader to the detailed exposition in \Cref{sec:res}.

\subsection{Proof Idea for \Cref{polyproj-thm}}

We are given a system $S = \{f_1(X) = 0, f_2(X)=0, \cdots, f_t(X)=0\}$ of polynomial equations defined over the variable set $X = \{x_1,\cdots,x_n\}$. Using \Cref{structure-lemma}, we can assume that each $f_i$ is of degree at most $2$. We introduce new variables $x_0, w_1, \cdots, w_t, z_1, \cdots, z_t$. Our polynomials $f,g$ for the $\proj$ instance will be defined over the variable vector\ $Y = (x_0, \cdots, x_n, w_1,\cdots,w_t, z_1,\cdots,z_t)$.  For two increasing sequences of integers $d_1 < \cdots < d_t$ and $D_1 < \cdots < D_n$ satisfying $2d_t < D_1$, we define $f,g$ as below:
\begin{align*}
    &f(Y) = \sum_{i=1}^t w_i^{d_i}(f_i(X) + z_i^{d_i}) + x_0 + \sum_{i=1}^n x_i^{D_i} \hspace{1cm}\text{ and } \\
    &g(Y) = \sum_{i=1}^t w_i^{d_i} z_i^{d_i} + x_0.
\end{align*}
We  show that $S$ has a solution if and only if there exist a matrix $A \in \mathbb{F}^{|Y|\times|Y|}$ and a vector $b \in \mathbb{F}^{|Y|}$ such that $f(AY+b) = g(Y)$. Note that $AY+b$ is simply a vector of affine linear polynomials over $Y$.

If $S$ has a solution $(a_1,\cdots,a_n)$, then the vector\ $Y' = (x_0-\sum\limits_{i=1}^n a_i^{D_i}, a_1,\cdots,a_n,w_1\cdots,w_t,z_1,\cdots,z_t)$ satisfies
\[
f(Y') = \sum\limits_{i=1}^t w_i^{d_i}(0 + z_i^{d_i}) + x_0 -\sum_{i=1}^n a_i^{D_i} + \sum_{i=1}^n a_i^{D_i} = g(Y).
\]
Thus, we can find a matrix $A$ and a vector $b$ such that $f(AY+b) = f(Y') = g(Y)$. 

Conversely, suppose that $Y' = (L_0, \cdots, L_n, P_1, \cdots, P_t, Q_1, \cdots, Q_t)$ is a vector of affine linear polynomials satisfying $f(Y') = g(Y)$. The main goal now is to show that $L_i$ must be constant for all $i$.

The proof proceeds by showing that each $L_i$ must be a constant to maintain the degree constraints imposed by $g(Y)$. The crucial observation is that the highest degree terms in $f(Y')$ must match those in $g(Y)$, and this forces each $L_i$ to be constant.
More specifically, for $i=n$, if $L_n$ were not constant, then $f(Y')$ would have terms of degree $D_n$, contradicting with the fact that $g(Y)$ has degree much smaller than $D_n$. Similarly, for smaller values of $i$, we use the fact that the highest degree term must be controlled at each step to show that $L_i$ is constant.

Since $L_i$'s are all constants, we now analyze $P_i Q_i$. We establish that $P_i Q_i = w_i z_i$ using the fact that any additional terms in the product would lead to monomials that are not present in $g(Y)$. By induction, this property propagates across all indices. From $P_i Q_i = w_i z_i$, we deduce that $f_i(L_1, \dots, L_n) = 0$ for all $i$, meaning $(L_1, \dots, L_n)$ is a valid solution of $S$. 

\subsection{Proof Idea for \Cref{bpsd-thm}}

To prove \Cref{bpsd-thm}, we follow and generalize the ideas used in the proof of Lemma 2.8 in \cite{DBLP:journals/mst/SchaeferS24}. Specifically, we reduce an existing $\etr$-complete problem—determining whether a set of quadratic polynomials has a common root in the unit ball—to the problem of deciding whether a biquadratic form ever attains a negative value, we call this problem to be $\bineg$.

Recall that a biquadratic form over a variable set $X \sqcup Y$ is a degree $4$ polynomial such that each monomial of it contains exactly two $X$-variables and exactly two $Y$-variables. Let us call a polynomial \textit{semi-biquadratic} if every monomial of it  contains at most two $X$-variables and at most two $Y$-variables. Our construction of the final biquadratic form from the system of quadratic polynomials goes through several stages. At each stage, we maintain a semi-biquadratic polynomial whose negativity is to be checked. Only at the final step, we appropriately homogenize the semi-biquadratic polynomial to make it a biquadratic form.

First, we establish a generalization of Lemma 2.7 in \cite{DBLP:journals/mst/SchaeferS24}, stated as \Cref{chain-lemma}. This lemma provides a \textit{rapid decay bound} for the sequence $y_m$ in terms of its initial value $y_0$. It states that if the sequence $\{y_k, z_k\}$ satisfies a given inequality, then $y_m$ decreases \textit{doubly exponentially} as a function of $m$.  

Now, suppose $f_1, f_2, \dots, f_t$ are quadratic polynomials over a variable set $x$, and we want to check whether they have a common root in the unit ball. We first construct a semi-biquadratic polynomial $g(x,w)$ whose solutions correspond to common solutions of these quadratic polynomials, where $w$ is a new set of variables. This polynomial is constructed such that if the polynomials $f_i$ have a common solution inside the unit ball, then $g$ has a root $(x,w) \in \mathbb{R}^{2n+2}$ satisfying $\sum_{i=1}^n x_i^2 \leq 1$. Conversely, if the polynomials $f_i$ do not have a common solution inside the unit ball, we show that the minimum value of $g(x,w)$ on the semi-algebraic set $S := \{(x,w) \in \R^{2n+2}:\ 
  \sum_{i=1}^n (x_i^2+w_i^2) \leq 3\}$ can be lower-bounded by a doubly exponentially small number.  

Following the proof strategy of Lemma 2.8 in \cite{DBLP:journals/mst/SchaeferS24}, we now construct a new semi-biquadratic polynomial $h(x,w,y,z)$, which consists of $g(x,w)$ along with a modified version of the inequality used in \Cref{chain-lemma}.  

Next, we prove that $h(x,w,y,z)$ attains negative values if and only if the polynomials $f_i$ have a common solution inside the unit ball. The forward direction—showing that a common solution of the $f_i$'s implies that $h(x,w,y,z)$ takes a negative value for some $(x,w,y,z)$—is straightforward. Conversely, if $h$ takes negative values, then using the \Cref{chain-lemma}, we establish that $y_m^2 < 2^{-2^m}$, which forces $g(x,w)$ to be \textit{extremely small} on the set $S$. Combining this with the previously established lower bound on the minimum value of $g(x,w)$ over $S$, we conclude that the polynomials $f_i$ have a common solution inside the unit ball.  

Finally, we homogenize the polynomial $h(x,w,y,z)$ using two variables $s,t$ to obtain a biquadratic form $Q(x,w,y,z,s,t)$, completing the reduction and proving the $\etr$-hardness of $\bineg$.

\subsection{Proof Idea for \Cref{hyperbolic-thm}}

The proof that hyperbolicity is in $\utr$ follows directly from the definition of hyperbolicity. Specifically, the condition that $p(x + te)$ is real-rooted for all $x$ can be decided in $\utr$. To see this, observe that one can efficiently construct the real and imaginary parts of the evaluation of a real polynomial at a complex point. This holds even when $p$ is given as an arithmetic circuit, ensuring that the problem remains in $\utr$.  

To establish the $\utr$-hardness of hyperbolicity, we reduce the problem of deciding the non-negativity of a biquadratic polynomial to the problem of deciding hyperbolicity. By \Cref{bpsd-thm}, the problem of deciding the non-negativity of a biquadratic polynomial is $\utr$-hard.  

Given a biquadratic form $Q$, we construct a homogeneous polynomial of degree 4 and a vector $e$ such that $p$ is hyperbolic with respect to $e$ if and only if $Q$ is non-negative. In particular, we define the polynomial $p(x_0, \dots, x_n)$ as:  
\[
p(x_0, \dots, x_n) = x_0^4 - \beta x_0^2 \norm{x}^2 + Q(x_1, \dots, x_n)
\]
for a suitably large constant $\beta$, and we take $e$ to be the vector $(1, 0, \dots, 0)$.  

By \Cref{subspace}, we know that $p$ is hyperbolic with respect to $e$ if and only if $B_{p,e}(x)$ is positive semidefinite for all $x$ with $x_0 = 0$, where $B_{p,e}(x)$ is the parametrized B\'ezoutian matrix defined in \Cref{subsec:hypandrspoly}.  Now, we compute $B_{p,e}(x)$ and use the Schur complement to derive the condition for hyperbolicity. A simple calculation shows that $p$ is hyperbolic with respect to $e$ if and only if  
\[
\forall x_1, \dots, x_n \in \mathbb{R}, \quad 0 \leq Q(x_1, \dots, x_n) \leq \frac{\beta^2}{4} \norm{x}^4.
\]
The proof is completed by choosing $\beta$ to be a sufficiently large constant so that the upper bound condition $Q(x_1, \dots, x_n) \leq \frac{\beta^2}{4} \norm{x}^4$ is always satisfied.

\begin{comment}

\subsection{Proof Idea for \Cref{oracle}}

Oracle classes are defined with respect to an underlying machine model. Since $\etr$ is defined as the downward closure of some concrete problem, we first need to define how we add oracles to $\etr$. Fortunately, there is a machine model characterization of $\etr$ \cite{DBLP:journals/siamcomp/EricksonHM24}, the real RAM, which is a word RAM enhanced with registers that can store real numbers (see Appendix \ref{app:real-ram} for a short explanation). A problem $L$ is in $\etr$ if an only if there is a polynomial time real RAM verifier for it. A real RAM verifier $V$ gets an input $x$, which is discrete, together with a certificate $c$, which is a sequence of real numbers. $V$ is a verifier for $L$ if for all inputs $x$, $x \in L$ if and only if there is a certificate $c$ such that $V$ accepts $x$ with certificate $c$. Real RAMs can be equipped in a straight-forward manner with oracles, see \cite{Minkin}. 

The important observation is that the number of computation paths of a real RAM is exponentially bounded, although the number of potential certificates $c$, which is a sequence of real numbers, is uncountable. This is due to the fact that the particular certificate $c$ only has an influence on the computation when we make a comparison. But a comparison has only two outcomes. With this, a strategy similar to the oracle separation of $\PH$ and $\PSPACE$ by Furst, Saxe, and Sipser \cite{DBLP:journals/mst/FurstSS84} can be used to construct an oracle separating $\etr$ from $\PSPACE$.
\end{comment}

\section{Preliminaries}
\label{sec:prelim}

\subsection{Polynomials}

A \textit{polynomial} $f(x_1,\cdots,x_n)$ over a ring $R$ is a finite $R$-linear combination of monomials:
$$f(x_1,\cdots,x_n) = \sum_{(e_1,\cdots,e_n)\in I} \alpha_{e_1,\cdots,e_n}x_1^{e_1}\cdots x_n^{e_n}$$
where the index set $I$ is a finite subset of $\Z^n_{\geq 0}$ and the coefficients $\alpha_{e_1,\cdots,e_n}$ are elements of the ring $R$.

The degree of a monomial $x_1^{e_1}\cdots x_n^{e_n}$ is the sum $e_1+\cdots+e_n$. A polynomial is called \textit{homogeneous} if all of its monomials have the same degree.

A homogeneous degree-$4$ polynomial $Q(x_1,\cdots,x_n)$ is called a \textit{biquadratic form} if the index set $[n]$ can be partitioned into two subsets $S$ and $T$ such that $Q$ can be written as 
$$Q(x_1,\cdots,x_n) = \sum_{i,j \in S} \sum_{k,l \in T} \alpha_{ijkl} x_ix_jx_kx_l\ .$$

From the computational perspective, there are a number of models to represent polynomials. An algorithm may access the input polynomial as black-box or it may have access to one of the following white-box representations of the polynomial:
\begin{itemize}[nosep]
    \item In the \textit{dense representation}, an $n$-variate degree $d$ polynomial is given as a list of the coefficients of its $\binom{n+d}{d}$ monomials.
    \item The \textit{sparse representation} is more compact. Here, the polynomial is given as a list of pairs of monomials and coefficients. Hence, the list has length equal to the monomial count of the polynomial.
    \item In the \textit{arithmetic circuit representation}, we are given an arithmetic circuit computing the polynomial. The complexity of the algorithm is considered in terms of the bit length of the circuit's description.
\end{itemize}

\subsection{PSD-ness and Convexity}

We have the notion of positive-semidefiniteness for both polynomials as well as matrices. A polynomial $p \in \R[x_1,\cdots,x_n]$ is said to be \textit{positive semidefinite (PSD)} or \textit{non-negative} if $p(x) \geq 0$ for all $x \in \R^n$. On the other hand, a real symmetric matrix $A_{n \times n}$ is called PSD if $x^TAx \geq 0$ for all $x \in \R^n$. We denote this by $A \succeq 0$.

The \textit{Hessian} of a polynomial $f \in \R[x_1,\cdots,x_n]$ is the $n\times n$ polynomial matrix $H_f(x)$ defined by 
$[H_f(x)]_{ij} = \frac{\partial^2 f}{\partial x_i \partial x_j}$. We call a polynomial $f$ \textit{convex} if and only if $H_f(x) \succeq 0$ for all $x \in \R^n$. 

\subsection{Hyperbolic and Real Stable Polynomials}\label{subsec:hypandrspoly}

Let $\R[x_1,\cdots,x_n]_d$ denote the set of real homogeneous polynomials of degree $d$. A polynomial $p \in \R[x_1,\cdots,x_n]_d$  is said to be \textit{hyperbolic with respect to} $e \in \R^n$ if $p(e) > 0$ and $\forall x \in \R^n$, the univariate polynomial $p(x+te) \in \R[t]$ has only real roots.

If $p$ is hyperbolic with respect to $e$, then we can define the \textit{hyperbolicity cone}
$$\Lambda_+(p,e) = \{x \in \R^n \ |\ \text{all roots of $p(te-x)$ are positive}\}\ .$$
The following is a well-known result on hyperbolicity:
\begin{theorem}[\cite{garding}]\label{garding}
    If $p$ is hyperbolic with respect to $e$, then it is also hyperbolic with respect to all $x \in \Lambda_+(p,e)$. Moreover, $\Lambda_+(p,e)$ is the connected component of $\{x \in \R^n \ |\ p(x) \neq 0\}$ containing $e$.
\end{theorem}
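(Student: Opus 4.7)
The plan is to establish, in order, that $\Lambda_+(p,e)$ is open and contained in $\{p \neq 0\}$, that it is convex (the main obstacle), that hyperbolicity propagates to every point of it, and finally that it coincides with the connected component of $\{p\neq 0\}$ containing $e$.

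First I would record the immediate consequences of the definition. Since $p$ is homogeneous of degree $d$ with $p(e)>0$, the univariate polynomial $p(te-x)$ has leading coefficient $p(e)$ in $t$, so for $x \in \Lambda_+(p,e)$ we may factor $p(te-x) = p(e)\prod_{i=1}^d (t - t_i(x))$ with all $t_i(x)>0$. Setting $t=0$ and using $p(-x)=(-1)^d p(x)$ yields $p(x) = p(e)\prod_i t_i(x) > 0$, hence $\Lambda_+(p,e) \subseteq \{x : p(x)\neq 0\}$. Openness follows from continuous dependence of the $t_i$ on $x$: in a neighborhood of any $x_0 \in \Lambda_+(p,e)$ the roots stay real, bounded away from $0$, and positive.

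The hard part will be convexity of $\Lambda_+(p,e)$, which is Gårding's principal contribution. For $x,y \in \Lambda_+(p,e)$ and $\alpha \in [0,1]$, I would consider the bivariate polynomial $q(s,t) = p(s e + t(\alpha x + (1-\alpha) y))$ and argue that $p(te - (\alpha x + (1-\alpha) y))$ has only positive roots. The classical approach is to reduce to a two-dimensional problem: $p$ restricted to the affine plane through $e$ spanned by $x$ and $y$ is hyperbolic in two independent directions, from which a Hermite--Biehler/interlacing argument on $q(s,t)$ forces the hyperbolicity cone inside this plane to be convex, hence the segment from $x$ to $y$ remains in $\Lambda_+(p,e)$. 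This is where all the real analytic work sits. Once convexity is available, I would deduce hyperbolicity with respect to any $v \in \Lambda_+(p,e)$ by deforming along the segment from $e$ to $v$ inside $\Lambda_+(p,e)$: $p(v)>0$ is already known, and for real-rootedness of $p(x+tv)$ at an arbitrary $x$, a continuous family interpolates from $p(x+te)$ (real-rooted by assumption) to $p(x+tv)$, with roots varying continuously in coefficients; a collision into a complex conjugate pair would correspond to a zero of the discriminant along a curve lying in the open cone, which can be excluded using the boundary-behavior characterization of the cone.

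Finally, for the connected-component identification, $\Lambda_+(p,e)$ is open, connected (being convex), contains $e$, and is contained in $\{p\neq 0\}$, so it lies in the connected component $C$ of $\{p\neq 0\}$ through $e$. For the reverse inclusion, pick $x \in C$ and connect it to $e$ by a path $\gamma:[0,1]\to C$. The elementary symmetric polynomials of the roots $t_i(\gamma(s))$ are continuous functions of $s$ (they are, up to sign, the coefficients of $p(te-\gamma(s))$ divided by $p(e)$), hence bounded on $[0,1]$, which keeps the roots themselves bounded in modulus. No $t_i$ can cross $0$, because that would give $p(\gamma(s)) = p(e)\prod_i t_i(\gamma(s)) = 0$, contradicting $\gamma(s) \in \{p \neq 0\}$; and the same discriminant argument as above rules out escape into $\C\setminus\R$ along the path. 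Thus the positive sign of every $t_i$ is preserved from $s=0$ to $s=1$, giving $x \in \Lambda_+(p,e)$ and completing the proof.
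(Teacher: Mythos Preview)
The paper does not prove this theorem; it is quoted as a classical result of G\aa rding with a citation and no proof. So there is nothing in the paper to compare your argument against.

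Your outline is the standard G\aa rding route and is structurally sound, but two points deserve comment. In the connected-component step you invoke a ``discriminant argument'' to rule out roots of $p(te-\gamma(s))$ escaping into $\C\setminus\R$; this is unnecessary, since hyperbolicity with respect to $e$ already forces $p(te-\gamma(s))$ to be real-rooted for \emph{every} $\gamma(s)\in\R^n$, so the only thing to check is that no root crosses $0$, which you do correctly via $p(\gamma(s))\neq 0$. More seriously, in the propagation step your justification (``a collision into a complex conjugate pair would correspond to a zero of the discriminant along a curve lying in the open cone, which can be excluded using the boundary-behavior characterization of the cone'') conflates two different varieties: the boundary of $\Lambda_+(p,e)$ is where $p$ vanishes, not where the discriminant of $t\mapsto p(x+tv)$ vanishes, and there is no a priori reason the latter avoids the cone. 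The usual argument here is an open--closed continuity argument on the segment from $e$ to $v$ (using Hurwitz's theorem for openness and the fact that a limit of real-rooted polynomials of fixed degree with nonvanishing leading coefficient is real-rooted for closedness), together with $p(v_s)>0$ along the segment, which you have from the convexity step. With that fix, the outline is correct.
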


For $p \in \R[x_1,\cdots,x_n]_d$ and $e \in \R^n$, the directional derivative of $p$ in the direction of $e$ is $D_e p(x) = \frac{d}{du} p(x+ue) \vert_{u=0}$. The \textit{parametrized Bezoutian} $B_{p,e}(x)$ is defined to be the $d\times d$ symmetric matrix with polynomial entries given by the following identity:
$$ \frac{p(x+te)D_ep(x+se) - p(x+se)D_ep(x+te)}{t-s} = \sum_{i,j=1}^d [B_{p,e}(x)]_{ij} t^{i-1}s^{j-1} $$
where $s,t$ are variables.

The following theorem relates the hyperbolicity of a polynomial to the PSD-ness  of the parametrized Bezoutian.
\begin{theorem}[\cite{saunderson}, Corollary 3.9]\label{subspace}
Let $p \in \R[x_1,\cdots,x_n]_d,\ e \in \R^n$ and let $W \subseteq \R^n$ be a subspace of codimension $1$ such that $e \not\in W$. Then, $p$ is hyperbolic with respect to $e$ if and only if $B_{p,e}(x) \succeq 0$ for all $x \in W$.
\end{theorem}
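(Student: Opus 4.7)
The plan is to connect the parametrized Bezoutian $B_{p,e}(x)$ to the classical Bezout/Hermite quadratic form of a univariate real polynomial, then apply Hermite's theorem on real-rootedness and exploit the shift invariance along $e$ coming from the homogeneity of $p$.

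For a fixed $x \in \R^n$, set $q_x(t) \defeq p(x + te) \in \R[t]$; since $p$ is homogeneous of degree $d$, the coefficient of $t^d$ in $q_x$ is $p(e)$ and $\partial_t q_x(t) = D_e p(x + te)$. Substituting this into the defining identity of $B_{p,e}(x)$ gives
\[
\frac{q_x(t)\,\partial_t q_x(s) - q_x(s)\,\partial_t q_x(t)}{t - s} \;=\; \sum_{i,j=1}^{d} [B_{p,e}(x)]_{ij}\, t^{i-1} s^{j-1},
\]
which exhibits $B_{p,e}(x)$ as precisely the Bezout matrix of the pair $(q_x, q_x')$. By the classical Hermite--Bezout correspondence, this matrix is congruent to the Hermite form of $q_x$, whose signature records the numbers of distinct real roots versus distinct complex-conjugate pairs. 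Hence $B_{p,e}(x) \succeq 0$ if and only if $q_x$ is real-rooted. The $p(e) > 0$ sign condition built into hyperbolicity is compatible: the Bezoutian is invariant under $p \mapsto -p$, while the $(d,d)$-entry equals $d\,p(e)^2$, so $p(e) \neq 0$ is forced by nondegeneracy and the correct sign is picked out by the global convention.

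To pass from ``for all $x \in \R^n$'' to ``for all $w \in W$'', note that because $e \notin W$ and $W$ has codimension one, every $x \in \R^n$ decomposes uniquely as $x = w + \lambda e$ with $w \in W$ and $\lambda \in \R$. Then $q_x(t) = p(w + (\lambda + t)e) = q_w(t + \lambda)$ is a translate of $q_w$, so real-rootedness of $q_x$ is equivalent to real-rootedness of $q_w$, and the congruence $B_{p,e}(w + \lambda e) \equiv B_{p,e}(w)$ (implementing the substitution $t \mapsto t + \lambda$ on the Bezoutian) shows the PSD properties coincide. Chaining the equivalences yields: $p$ is hyperbolic with respect to $e$ iff every $q_x$ is real-rooted iff $B_{p,e}(w) \succeq 0$ for all $w \in W$.

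The main technical obstacle is the Hermite--Bezout step in the degenerate regime: when $q_x$ has multiple roots or $\gcd(q_x, q_x')$ is nontrivial, the Bezoutian drops rank and one must verify its signature still faithfully records real versus nonreal roots. This is handled by a standard continuity argument --- perturbing $q_x$ to a nearby polynomial with only simple roots, where the Hermite--Bezout theorem applies in its cleanest form, and passing to the limit using that both the PSD condition and real-rootedness are closed conditions.
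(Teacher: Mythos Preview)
The paper does not give its own proof of this statement; it is quoted verbatim as Corollary~3.9 of Saunderson and used as a black box. So there is no in-paper argument to compare against.

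Your sketch is the standard route and is essentially correct: for each fixed $x$, the defining identity of $B_{p,e}(x)$ is exactly the Bezoutiant of $(q_x,q_x')$ with $q_x(t)=p(x+te)$, and the classical Hermite--Bezout theorem says this matrix is PSD iff $q_x$ is real-rooted; the codimension-one reduction via $x=w+\lambda e$ and the translate $q_x(t)=q_w(t+\lambda)$ is exactly right. The continuity argument you give for the degenerate (repeated-root) case is also the usual fix.

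The one genuine loose end is the $p(e)>0$ condition. Your remark that ``the correct sign is picked out by the global convention'' is not an argument: the Bezoutiant is invariant under $p\mapsto -p$, so PSD-ness of $B_{p,e}$ on $W$ cannot distinguish $p(e)>0$ from $p(e)<0$, and if $p(e)=0$ the $d\times d$ Bezoutiant can still be PSD (e.g.\ $p=x_1x_2$, $e=e_1$ gives $B_{p,e}(x)=\mathrm{diag}(x_2^2,0)\succeq 0$ everywhere, yet $p$ is not hyperbolic with respect to $e$). This is a wrinkle in the theorem \emph{statement} rather than in your proof: in Saunderson's paper the result is stated under a standing normalization (effectively $p(e)>0$), and with that assumption your argument goes through cleanly. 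You should make that assumption explicit rather than appealing to a ``global convention''.
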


A polynomial $p \in \C[x_1,\cdots,x_n]$ is called \textit{stable} if $p(z_1,\cdots,z_n) \neq 0$ whenever each $z_i$ is a complex number with a positive imaginary part. Additionally, we say $p$ is \textit{real stable} if all the coefficients of $p$ are real. 

There is an equivalent definition of real stability which does not use the complex field. A polynomial $p \in \R[x_1,\cdots,x_n]$ is real stable if and only if for every point $a \in \R^n_{>0}$ and $b \in \R^n$, the univariate polynomial $q_{a,b}(t) = p(at+b)$ is not identically zero and has only real roots. If $p$ is homogeneous (of positive degree), then the above condition is equivalent to $p$ being hyperbolic with respect to $\R^n_{>0}$.

\section{Reduction from HN to SparseShift}
\label{sec:res}

In this section,  we prove \Cref{sparseshift-thm}. 
Given a system $S$ of $n$-variate polynomial equations over a field $\F$ of size more than $4n^4$, we will show how to construct a polynomial $Q_S$ such that $Q_S$ admits a sparse shift if and only if $S$ has a solution.

As mentioned in \Cref{sec:our-results}, \cite{chillara} proved \Cref{sparseshift-thm} over integral domains that are not fields. Crucially, their proof relies on the existence of a nonzero element $\gamma$ that is not invertible. A natural idea to extend their argument to fields would be to treat $\gamma$ as an additional variable. However, this introduces the complication that $\gamma$ itself can be shifted, making it difficult to preserve the desired hardness property. Consequently, the hardness proof in \cite{chillara} does not generalize to fields in a straightforward way.

In this section, we resolve this difficulty through a novel and technically involved construction. We will use the following lemma from \cite{chillara}, which lets us assume additional structure on the polynomial system $S$.

\begin{lemma}[\cite{chillara}, Lemmas $7$ and $8$]\label{structure-lemma}
    Let $S = \{f_i(X) = 0\}_{i=1}^r$ be a system of polynomial equations over an integral domain $R$ where each $f_i$ is given either as a list of coefficients or as an arithmetic circuit. In both cases, we can obtain in polynomial time a system $T$ of polynomial equations $\{g_j(Y) = 0\}_{j=1}^t$ such that 
    \begin{itemize}
        \item $S$ has a solution in $R^{|X|}$ if and only if $T$ has a solution in $R^{|Y|}$.
        \item Each $g_j(Y)$ is of degree at most $2$. 
        \item There is at most one $j \in [t]$ such that $g_j(Y)$ has a non-zero constant term.
    \end{itemize}
\end{lemma}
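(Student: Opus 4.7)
The plan is a two-stage reduction: first squash all degrees down to $2$, then consolidate every nonzero constant term into a single equation. Since a dense list of coefficients can be converted in polynomial time into an arithmetic circuit (a product subtree for each nonzero monomial, summed at the top), it suffices to treat the arithmetic-circuit representation. Given circuits computing $f_1,\ldots,f_r$, introduce a fresh auxiliary variable $y_g$ for every gate $g$ of every circuit, together with one equation per gate: $y_g - x_k = 0$ at an input gate carrying $x_k$, $y_g - c = 0$ at a constant input $c$, $y_g - y_a - y_b = 0$ at a sum gate with children $a,b$, and $y_g - y_a y_b = 0$ at a product gate. Finally, for the output gate $g_i$ of the $i$-th circuit, adjoin $y_{g_i} = 0$. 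Every new equation has degree at most $2$. Equisatisfiability follows by induction on gate depth: any solution of the new system is forced to set $y_g$ equal to the value computed by $g$ on the assignment to the original variables, so $y_{g_i} = 0$ is equivalent to $f_i = 0$; conversely, any solution of $S$ extends uniquely to a solution by circuit evaluation.

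For the second stage, let $\{h_j(Z) = 0\}_{j=1}^{t-1}$ be the degree-$\le 2$ system produced above. Write each $h_j = \widetilde{h}_j + c_j$ with $\widetilde{h}_j$ constant-free and $c_j \in R$. Introduce a single new variable $z$ and adjoin the equation $z - 1 = 0$, which, since $R$ is an integral domain with $1 \neq 0$, carries a nonzero constant term. Replace each $h_j = 0$ with $\widetilde{h}_j + c_j z = 0$; multiplying the scalar $c_j$ by the degree-$1$ indeterminate $z$ keeps the total degree at most $2$, and each modified equation now has zero constant term. Any solution of the new system must set $z = 1$, so $\widetilde{h}_j + c_j z$ evaluates back to $h_j$, preserving satisfiability in both directions. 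After this step, $z - 1 = 0$ is the unique equation with a nonzero constant term.

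Both stages run in polynomial time: stage one introduces one variable and one equation per circuit gate (and the dense-to-circuit conversion needed upstream is polynomial in the dense-representation input size), while stage two adjoins a single variable and a single equation and rewrites each existing equation in place without increasing its degree. The only mildly delicate point is the inductive verification in stage one that the gate-level equations truly pin the auxiliary variables down to the intended circuit evaluations on every solution, not merely on the extensions of solutions of $S$; this is an easy induction on gate depth using the specific form of the sum- and product-gate constraints. No genuine obstacle arises, and the construction is essentially the bookkeeping used by Chillara, Grichener, and Shpilka in the cited lemmas.
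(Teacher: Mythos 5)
Your two-stage structure---first reducing to degree at most $2$ via gate-by-gate auxiliary variables, then consolidating all nonzero constant terms into a single equation---matches the outline the paper follows: it cites Chillara et al.'s Lemmas~7 and~8 for the degree reduction and supplies only a short remark of its own for the constant-term step. The place where you genuinely diverge is that second stage. The paper fixes one equation $g_1$ with nonzero constant term $c_1$ and replaces every other $g_j$ by the linear combination $c_1 g_j - c_j g_1$, which zeroes out the constant term $c_1 c_j - c_j c_1$; the integral-domain hypothesis is then invoked in the reverse direction, to recover $g_j = 0$ from $c_1 g_j = 0$ and $c_1 \neq 0$. You instead adjoin one fresh variable $z$ with the equation $z - 1 = 0$ and rewrite each constant $c_j$ as $c_j z$, after which $z = 1$ is forced and the original equations are recovered by substitution. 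Your variant costs one extra variable but is strictly more general: it needs only $1 \neq 0$ in $R$ and no cancellation law, so it would extend beyond integral domains, whereas the paper's version is variable-conservative but tied to the integral-domain structure. Both are correct; the remaining bookkeeping you flag (fan-in-$2$ gates each yield an equation of degree at most $2$, and the dense-to-circuit conversion is polynomial because the degree is itself polynomially bounded by the size of the dense input) is handled appropriately.
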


\begin{remark}
    Lemmas $7$ and $8$ of \cite{chillara} describe how to obtain a system $T$ satisfying only the first two conditions above. However, Section 3.2 of \cite{chillara} describes a simple trick to achieve the third condition as well: if $T$ contains the equations $g_1(Y) = 0, \cdots, g_t(Y)=0$ with constant terms $c_1 \neq 0,c_2,\cdots,c_t$ respectively, then obtain $T'$ from $T$ by simply updating each polynomial $g_j(Y)$ as follows: $g_j(Y) \leftarrow c_1g_j(Y)-c_jg_1(Y)$. Then, $T'$ has a solution if and only if $T$ does (since $R$ is an integral domain) and $T'$ contains only one polynomial with a non-zero constant term.
\end{remark}
Therefore, using \Cref{structure-lemma}, we can assume that we are given a system 
\[
S = \{f_1(X) = 0, f_2(X) = 0, \cdots, f_t(X) = 0\}
\]
of polynomial equations over the variable set $X = \{x_1, \cdots, x_n\}$ where each $f_i$ is of degree at most $2$ and $f_1$ is the only polynomial with a non-zero constant term. Let 
\[
M_X \defeq \{ x_j \mid 1 \le j \le n \} \cup \{ x_j x_k \mid 1 \le j \le k \le n \}.
\]
be the set of all \emph{non-constant} monomials in the variables $X$ of degree at most $2$. 

We can write $f_i = \sum\limits_{m \in M_X} c_{i,m} m$ for $2\leq i \leq t$ and $f_1 = c + \sum\limits_{m \in M_X} c_{1,m} m$. First, note that we can assume $c \neq 0$; otherwise, the polynomial system $S$ has the trivial solution $(0,\ldots,0)$. Hence, we have $c \in \F^*$ and $c_{i,m} \in \F$.

\begin{lemma}
Define a new variable set $\a$ and new system of polynomial equations $S_1 \cup S_2 \cup S_3$ :
\begin{align*}
    \a &\defeq \{\alpha_0\} \cup \{\alpha_m \mid m \in M_X \} \\
S_1 &\defeq \left\{ c + \sum_{m \in M_X} c_{1,m} \alpha_m = 0, \quad \sum_{m \in M_X} c_{2,m} \alpha_m = 0, \quad \cdots, \quad \sum_{m \in M_X} c_{t,m} \alpha_m = 0 \right\}, \\
S_2 &\defeq \left\{ \alpha_{x_i x_j} - \alpha_{x_i} \alpha_{x_j} = 0 \mid 1 \le i,j \le n \right\}, \\
S_3 &\defeq \left\{ \alpha_0 + \sum_{j=1}^n \alpha_{x_j} = 0 \right\}.
\end{align*}
Then the original system $S$ has a solution if and only if the system $S_1 \cup S_2 \cup S_3$ has a solution.
\end{lemma}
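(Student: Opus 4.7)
The plan is to establish the equivalence by an explicit solution-level correspondence: the system $S_1 \cup S_2 \cup S_3$ is simply a linearization of $S$, where one introduces a fresh variable $\alpha_m$ for each nonconstant monomial $m \in M_X$ of degree at most $2$, imposes consistency constraints $\alpha_{x_i x_j} = \alpha_{x_i}\alpha_{x_j}$ via $S_2$, and augments with a bookkeeping variable $\alpha_0$ tied to $S_3$. I would verify the equivalence in both directions by directly exhibiting a solution on one side from a solution on the other.

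For the forward direction, given a solution $(a_1,\dots,a_n) \in \F^n$ of $S$, I would set $\alpha_{x_j} \defeq a_j$ for each $j \in [n]$, $\alpha_{x_j x_k} \defeq a_j a_k$ for $1 \le j \le k \le n$, and $\alpha_0 \defeq -\sum_{j=1}^n a_j$. The equations of $S_2$ hold by construction (with $\alpha_{x_j x_k} = \alpha_{x_k x_j}$ interpreted symmetrically), $S_3$ holds by the choice of $\alpha_0$, and substituting the assignment into the $i$-th equation of $S_1$ recovers exactly $f_i(a_1,\dots,a_n) = 0$, since by definition of the $c_{i,m}$ one has $\sum_{m \in M_X} c_{i,m}\,m(a_1,\dots,a_n) = f_i(a_1,\dots,a_n) - [i=1]\cdot c$.

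For the reverse direction, given any solution $(\alpha_0, \{\alpha_m\}_{m\in M_X})$ of $S_1 \cup S_2 \cup S_3$, I would read off $a_j \defeq \alpha_{x_j}$. The $S_2$ constraints immediately give $\alpha_{x_j x_k} = a_j a_k$ for all $1 \le j \le k \le n$, and substituting these identifications into each equation of $S_1$ yields $f_i(a_1,\dots,a_n) = 0$ for every $i \in [t]$. Hence $(a_1,\dots,a_n)$ solves $S$.

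I do not anticipate any substantive obstacle; this is a routine linearization argument. It is worth noting that the variable $\alpha_0$ and the constraint $S_3$ are in fact redundant for the equivalence itself---in the forward direction $\alpha_0$ is simply set to satisfy $S_3$, and in the reverse direction it is ignored entirely. Their inclusion is not needed for the present lemma but for the downstream construction of the polynomials $P_S$ and $Q_S$, where the sum condition $\alpha_0 + \sum_j \alpha_{x_j} = 0$ will interact with the monomial-count analysis under affine shifts.
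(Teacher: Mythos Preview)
Your proposal is correct and follows essentially the same argument as the paper: in both directions you exhibit the obvious correspondence $\alpha_{x_j}\leftrightarrow a_j$, $\alpha_{x_ix_j}\leftrightarrow a_ia_j$, $\alpha_0\leftrightarrow -\sum_j a_j$, verifying $S_1,S_2,S_3$ by construction in the forward direction and recovering $f_i(a_1,\dots,a_n)=0$ from $S_1$ together with $S_2$ in the reverse. Your additional remark that $\alpha_0$ and $S_3$ are inert for the equivalence itself, and serve only the later monomial-count analysis, is also accurate.
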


\begin{proof}
Suppose $(u_1, \ldots, u_n)$ is a solution to $S$. Set $\alpha_{x_i} := u_i$, $\alpha_{x_i x_j} := u_i u_j$, and $\alpha_0 := -\sum_j u_j$. Then each equation in $S_1$ is satisfied since the polynomials $f_i$ vanish at $(u_1, \ldots, u_n)$. The constraints in $S_2$ and $S_3$ hold by construction.

Conversely, suppose $(\alpha_0, (\alpha_m)_{m \in M_X})$ satisfies $S_1 \cup S_2 \cup S_3$, and define $u_i := \alpha_{x_i}$. Then $S_2$ implies $\alpha_{x_i x_j} = u_i u_j$, and $S_1$ ensures that $f_i(u_1, \ldots, u_n) = 0$ for all $i$. Thus, $(u_1, \ldots, u_n)$ satisfies $S$.
\end{proof}

The construction of our polynomial for the $\ss$ instance proceeds in two stages. First, we build a polynomial $P_S$ with the following property: $S_1\cup S_2\cup S_3$ has a solution if and only if $P_S$ can be sparsified through shift by a vector satisfying a certain system $\mathcal{L}$ of linear equations. In the second stage, we construct a polynomial $Q_S$ such that $P_S$ can be sparsified through shift by a vector satisfying $\mathcal{L}$ if and only if $Q_S$ has a sparsifying shift. We begin by introducing some new objects and notation required for the construction of the polynomial $P_S$:

Define $N \defeq n^2 + 2n + 1$. Define a new set of variables
  \[
  \b = \left\{ \beta_{k,x_ix_j} \mid 1 \leq k \leq N,\ 1 \leq i,j \leq n \right\}.
  \]
Next, jointly enumerate the following two families of quadratic polynomials:
  \begin{itemize}[noitemsep]
    \item the polynomials from $S_2$, namely $\{ \alpha_{x_ix_j} - \alpha_{x_i}\alpha_{x_j} \mid 1 \leq i,j \leq n \}$, and
    \item the auxiliary polynomials $\{ \beta_{k,x_ix_j} - \alpha_{x_i}\alpha_{x_j} \mid 1 \leq k \leq N,\ 1 \leq i,j \leq n \}$,
  \end{itemize}
  as a single sequence $g_1, \dots, g_r$. Define $g_0$ to be the first polynomial in the system $S_1$, written as
  \[
  g_0 \defeq c + \sum_{m \in M_X} c_{1,m} \alpha_m.
  \]

Introduce three disjoint sets of new variables:
  \begin{itemize}[noitemsep]
    \item $W = \{ w_1, \dots, w_r \}$, one weight variable per polynomial $g_i$,
    \item $Y_1$, a set of $n^2 + n + 1$ variables, and
    \item $Y_2$, a set of $n$ variables.
  \end{itemize}
Define $Y \defeq W \sqcup Y_1 \sqcup Y_2$ and supoose $\gamma_1, \dots, \gamma_r$ are distinct nonzero elements of the field $\F$. This is possible since $r \leq (N + 1)n^2 \leq 4n^4$, and we assume $|\F| > 4n^4$. For any variable set $Z$, we denote by $\lin(Z) \defeq \sum_{z \in Z} z$ the sum of all variables in $Z$.
The polynomial $P_S$ is now defined over the variable set $\a \sqcup \b \sqcup Y$ as below: 
\[
P_S(\a, \b, Y) \defeq 
\underbrace{ \lin(Y_1) \cdot g_0(\a) }_{\text{Part 1}} +
\underbrace{ \sum_{i=1}^r w_i \left[ g_i(\a, \b) + \gamma_i \left( \alpha_0 + \sum_{j=1}^n \alpha_{x_j} \right) \right] }_{\text{Part 2}} +
\underbrace{ \lin(Y_2) \cdot \sum_{i=1}^n \alpha_{x_i}^2 }_{\text{Part 3}}.
\]

\begin{lemma}
\label{lemma:sparsity-shift}
Let $P_S(\a,\b,Y)$ be the polynomial defined above, and consider a shift $(a,b,y) \in \F^{|\a|} \times \F^{|\b|} \times \F^{|Y|}$ of variables. Then, the following are true about the monomial count of $P_S(\a+a, \b+b, Y+y)$:

\begin{enumerate}
    \item \label{item:obs1} If $g_0(a) = 0$, then the monomial count of Part 1 decreases by $|Y_1| = n^2 + n + 1$; otherwise, it remains unchanged.
    \item \label{item:obs2} If $a_{x_1} = \cdots = a_{x_n} = 0$, then the monomial count of Part 3 remains unchanged; otherwise, it increases by some $v \in [2n, (n+1)n]$.
    \item \label{item:obs3} If $a_0 + \sum_{j=1}^n a_{x_j} = 0$, then the monomial count of Part 2 changes by at least $v_1 - v_2$, where
    \[
    v_1 \defeq |\{ i \in [r] : g_i(a,b) \neq 0 \}|, \quad
    v_2 \defeq |\{ j \in [n] : a_{x_j} = \gamma_i \text{ for some } i \in [r] \}|.
    \]
    \item \label{item:obs4} For any $(a,b,y) \in \F^{|\a|} \times \F^{|\b|} \times \F^{|Y|}$, the monomial count of $P_S(\a+a, \b+b, Y+y)$ is \textbf{at least} the monomial count of $P_S(\a+a, \b+b, Y)$.
\end{enumerate}
\end{lemma}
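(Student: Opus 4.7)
My plan is to exploit the key structural feature that $Y = W \sqcup Y_1 \sqcup Y_2$ consists of disjoint variable sets and each $Y$-variable appears as a degree-$1$ linear factor multiplying a distinct coefficient polynomial in $\F[\a,\b]$. Under any shift $(a,b,y)$, every monomial of $P_S(\a+a,\b+b,Y+y)$ contains at most one $Y$-variable, and the monomials containing a fixed $z\in Y_1$ (resp.\ $w_i$, resp.\ $z\in Y_2$) are in bijection with those of the associated coefficient, which depends only on $(a,b)$. I would therefore decompose the total monomial count as
\[
|Y_1|\cdot M_1 \;+\; \sum_{i=1}^r M_2^{(i)} \;+\; |Y_2|\cdot M_3 \;+\; M_0,
\]
where $M_1, M_2^{(i)}, M_3$ are the monomial counts of the three coefficient polynomials (as functions of $(a,b)$) and $M_0$ is the monomial count of the $Y$-free residual $P_S(\a+a,\b+b,y)$. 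All four items then reduce to counting monomials in explicit polynomials.

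From this decomposition, Item~\ref{item:obs4} is immediate: at $y=0$ the $Y$-free residual is $0$ (so $M_0=0$), while the first three summands are independent of $y$; hence shifting $Y$ can only add the $M_0\geq 0$ contribution. For Item~\ref{item:obs1}, since $g_0 = c+\sum_{m\in M_X} c_{1,m}\alpha_m$ is affine-linear, $g_0(\a+a)=g_0(a)+L(\a)$ with $L(\a)=g_0(\a)-c$ unchanged by the shift. Thus $M_1 = |\supp(L)|+\mathbf{1}[g_0(a)\neq 0]$ and the constant monomial is lost precisely when $g_0(a)=0$, giving a decrease of $|Y_1|$ across Part~1. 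For Item~\ref{item:obs2}, I would expand
\[
\sum_{i=1}^n (\alpha_{x_i}+a_{x_i})^2 \;=\; \sum_{i=1}^n \alpha_{x_i}^2 \;+\; 2\sum_{i=1}^n a_{x_i}\alpha_{x_i} \;+\; \sum_{i=1}^n a_{x_i}^2;
\]
with $k:=|\{i:a_{x_i}\neq 0\}|$, the middle block adds $k$ fresh linear monomials and the constant adds one more iff $\sum a_{x_i}^2\neq 0$. The key observation is that $k=1$ forces $\sum a_{x_i}^2=a_{x_i}^2\neq 0$ in a field, so we always add at least $2$ monomials per $z\in Y_2$; multiplying by $|Y_2|=n$ yields $v\in[2n,n(n+1)]$.

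Item~\ref{item:obs3} is the most intricate. The hypothesis $a_0+\sum_j a_{x_j}=0$ is exactly what kills the constant contribution of the block $\gamma_i(\alpha_0+\sum_j\alpha_{x_j})$ after shifting, so the coefficient of $w_i$ becomes
\[
C_2^{(i)} \;=\; g_i(\a+a,\b+b) + \gamma_i\paren{\alpha_0 + \sum_{j=1}^n \alpha_{x_j}}.
\]
I would expand $g_i(\a+a,\b+b)$ for a polynomial of the form $\alpha_{x_{i_1}x_{i_2}}-\alpha_{x_{i_1}}\alpha_{x_{i_2}}$ and note that only two kinds of change can occur relative to the unshifted baseline: a new constant monomial appears iff $g_i(a,b)\neq 0$ (summing to $+v_1$), and a linear monomial $\alpha_{x_{i_1}}$ or $\alpha_{x_{i_2}}$ can \emph{vanish} precisely when $\gamma_i = a_{x_{i_{3-k}}}$. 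The distinctness and nonzeroness of the $\gamma_i$'s then controls over-cancellation: for each fixed $j\in[n]$ there is at most one index $i$ with $\gamma_i=a_{x_j}$, so the total number of vanishing coefficient events across all $C_2^{(i)}$ is controlled by $v_2$. A parallel analysis handles the Type~B polynomials $\beta_{k,x_{i_1}x_{i_2}}-\alpha_{x_{i_1}}\alpha_{x_{i_2}}$ (whose $\alpha$-quadratic part is identical), and the degenerate self-paired case $i_1=i_2$ where the collision condition becomes $\gamma_i=2a_{x_j}$. Combining yields the lower bound $v_1-v_2$ on the change of Part~2.

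The main obstacle is the delicate bookkeeping in Item~\ref{item:obs3}: I must ensure that every time a coefficient vanishes it can be charged to a single bad coordinate $j$, handle the degenerate self-paired case $i_1=i_2$ cleanly, and verify no accidental cancellation between different $w_i$-blocks occurs (which is safe because the $w_i$'s are distinct fresh variables, so the $w_i$-monomial classes are disjoint). The choice $|\F|>4n^4$ guarantees the existence of enough distinct $\gamma_i$'s for the collision bound to yield the stated $v_1-v_2$ inequality.
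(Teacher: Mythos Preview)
Your proposal is correct and takes essentially the same approach as the paper: both analyze Parts~1--3 separately by expanding the shifted coefficient polynomial (in $\a,\b$) attached to each $Y$-variable, and both handle item~(\ref{item:obs4}) via the observation that $P_S$ is homogeneous linear in the $Y$-variables so that shifting $Y$ can only add monomials. Your explicit decomposition $|Y_1|M_1+\sum_i M_2^{(i)}+|Y_2|M_3+M_0$ is a tidy organizational device, and you are somewhat more careful than the paper about the $k=1$ case in item~(\ref{item:obs2}) and about surfacing the self-paired case $i_1=i_2$ in item~(\ref{item:obs3}), but the substance is identical.
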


\begin{proof}
We proceed by analyzing how the shift affects the monomial count of each corresponding part of $P_S$.

\smallskip
\noindent\textbf{Monomial count change in Part 1:} Part 1 is 
\[
 \lin(Y_1) \cdot g_0(\a) = \lin(Y_1)\cdot\bigg(c + \sum\limits_{m \in M_X} c_{1,m} \alpha_m\bigg)
\]
After the shift $\a \mapsto \a + a$, it becomes
\[
\lin(Y_1)\cdot\bigg(g_0(a) + \sum\limits_{m \in M_X} c_{1,m} \alpha_m\bigg)
\]

If $g_0(a) = 0$, then the constant term vanishes and the only remaining terms are those with $\a$, so the $|Y_1|$ monomials corresponding to $cy_i$ disappear, causing the monomial count of Part 1 to decrease by exactly $|Y_1|$. Otherwise, these monomials remain, so the monomial count is unchanged. This proves (\ref{item:obs1}).

\smallskip
\noindent\textbf{Monomial count change in Part 3:} Part 3 is
\[
\lin(Y_2) \cdot \sum_{i=1}^n \alpha_{x_i}^2.
\]
Under the shift $\a \mapsto \a + a$, this becomes
\[
\lin(Y_2) \cdot \sum_{i=1}^n (\alpha_{x_i} + a_{x_i})^2 = \lin(Y_2) \cdot \left( \sum_{i=1}^n (\alpha_{x_i}^2 + 2 a_{x_i} \alpha_{x_i} + a_{x_i}^2) \right).
\]

If $a_{x_i} = 0$ for all $i$, then no new terms are introduced, so monomial count remains the same. Otherwise, the terms $2 a_{x_i} \alpha_{x_i}$ and the constants $a_{x_i}^2$ create new monomials, increasing monomial count by at least $2n$ and at most $(n+1)n$, depending on how many $a_{x_i}$ are nonzero. Recall that $|Y_2| = n$. This proves (\ref{item:obs2}).

\smallskip
\noindent\textbf{Monomial count change in Part 2:}  Part 2 is
\[
\sum_{i=1}^r w_i \big[ g_i(\a,\b) + \gamma_i(\alpha_0 + \sum_{j=1}^n \alpha_{x_j}) \big].
\]

Each $g_i$ is of the form $\delta - \alpha_{x_p} \alpha_{x_q}$, where $\delta$ is either $\alpha_{x_p x_q}$ or some $\beta_{k,x_p x_q}$. Under $(\a,\b) \mapsto (\a+a, \b+b)$,  the $i$-th summand becomes
\[
w_i \bigg[ \delta - \alpha_{x_p}\alpha_{x_q} + g_i(a,b) - a_{x_p} \alpha_{x_q} - a_{x_q} \alpha_{x_p} + \gamma_i\Big(\alpha_0 + \sum_{j=1}^n \alpha_{x_j} + a_0 + \sum_{j=1}^n a_{x_j}\Big) \bigg].
\]

Using $a_0 + \sum_j a_{x_j} = 0$, this simplifies to
\[
w_i\bigg[\delta - \alpha_{x_p}\alpha_{x_q} + g_i(a,b) + (\gamma_i - a_{x_p})\alpha_{x_q} + (\gamma_i - a_{x_q})\alpha_{x_p} + \gamma_i\Big(\alpha_0 + \sum\limits_{j \in [n]\backslash \{p,q\}} \alpha_{x_j}\Big)\bigg]
\]
Now, we observe that if $g_i(a,b) \neq 0$, a new monomial $g_i(a,b)w_i$ appears, increasing the monomial count by $1$; these contribute to $v_1$. If $a_{x_p} = \gamma_i$ or $a_{x_q} = \gamma_i$, then the terms $w_i \alpha_{x_q}$ or $w_i \alpha_{x_p}$ vanish, decreasing monomial count; these contribute to $v_2$. Hence, the net monomial count change in Part 2 is at least $v_1 - v_2$. This proves (\ref{item:obs3}).

\smallskip
\noindent\textbf{Proof of (\ref{item:obs4})}: Since $P_S$ is linear in the $Y$ variables, shifting $Y \mapsto Y + y$ can only add new monomials (by expanding linear terms with constants $y$), but cannot remove any existing monomial. Hence,
\[
\text{Monomial count of }(P_S(\a+a, \b+b, Y+y)) \geq \text{Monomial count of }(P_S(\a+a, \b+b, Y)).
\]
This completes the proof.
\end{proof}

\Cref{lem:PsSparsify} below completes the first stage of the proof by showing that the system $S_1 \cup S_2 \cup S_3$ has a solution if and only if the polynomial $P_S$ can be sparsified using shifts that satisfy certain linear constraints.

\begin{lemma} \label{lem:PsSparsify}
    The system $S_1 \cup S_2 \cup S_3$ has a solution if and only if there exists a vector $(a,b,y) \in \F^{|\a|} \times \F^{|\b|} \times \F^{|Y|}$ satisfying the following linear equations:
    \begin{align}
        &\sum_{m \in M_X} c_{2,m} \alpha_m = 0, \cdots, \sum_{m \in M_X} c_{t,m} \alpha_m = 0, \label{eq1} \\
        &\alpha_{x_i x_j} - \beta_{k,x_i x_j} = 0 \quad \text{for all } 1 \leq k \leq N,\; 1 \leq i,j \leq n, \label{eq2} \\
        &\alpha_0 + \sum_{j=1}^n \alpha_{x_j} = 0. \label{eq3}
    \end{align}
such that the shifted polynomial $P_S(\a + a, \b + b, Y + y)$ has strictly fewer monomials than $P_S(\a, \b, Y)$.
\end{lemma}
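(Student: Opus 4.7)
The plan is to use the four observations of the preceding monomial-count lemma together with the two amplifications built into $P_S$: the factor $\lin(Y_1)$ of size $|Y_1|=n^2+n+1$ in Part~1, and the $N+1$ copies of each quadratic consistency constraint (the one from $S_2$ together with $N=n^2+2n+1$ auxiliary ones) that appear among the polynomials $g_i$ in Part~2. These two sizes are tuned so that Part~1's decrease just beats the worst-case increase of Part~3 in the forward direction, and so that a single violated $S_2$-constraint blows up Part~2 past Part~1 in the backward direction.

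For the forward implication, given a solution $(\alpha_0^*,(\alpha_m^*)_{m\in M_X})$ to $S_1\cup S_2\cup S_3$, I would choose the shift $(a,b,y)$ with $y=0$, $a_0=\alpha_0^*$, $a_m=\alpha_m^*$ for $m\in M_X$, and $b_{k,x_ix_j}=\alpha_{x_i}^*\alpha_{x_j}^*$ for all $k,i,j$. The conditions \eqref{eq1}--\eqref{eq3} then fall out directly from $S_1$, $S_2$, and $S_3$, and the same choice forces $g_0(a)=0$ and $g_i(a,b)=0$ for every $i\in[r]$. Observation~\eqref{item:obs1} then gives $\Delta_{\text{Part 1}}=-(n^2+n+1)$; observation~\eqref{item:obs2} gives $\Delta_{\text{Part 3}}\le n^2+n$; and since $v_1=0$, the proof of observation~\eqref{item:obs3} shows that Part~2 can only lose monomials, so $\Delta_{\text{Part 2}}\le 0$. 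The total change is therefore at most $-1$.

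For the converse, suppose $(a,b,y)$ satisfies \eqref{eq1}--\eqref{eq3} and strictly decreases the monomial count of $P_S$. Observation~\eqref{item:obs4} lets me replace $y$ by $0$ without reducing the decrease, so assume $y=0$. I then argue in two stages that $(a_0,(a_m)_{m\in M_X})$ solves the full system. \emph{First}, I claim $g_0(a)=0$: if not, observation~\eqref{item:obs1} gives $\Delta_{\text{Part 1}}=0$, and a short case split on whether any $a_{x_i}$ is nonzero (using that each $\gamma_i\neq 0$, hence $v_2=0$ when all $a_{x_i}=0$, together with the lower bounds $\Delta_{\text{Part 3}}\ge 2n$ from observation~\eqref{item:obs2} when some $a_{x_i}\neq 0$ and $\Delta_{\text{Part 2}}\ge v_1-v_2\ge -n$ from observation~\eqref{item:obs3}) forces $\Delta_{\text{Part 2}}+\Delta_{\text{Part 3}}\ge 0$, contradicting strict decrease. \emph{Second}, I claim $a_{x_ix_j}=a_{x_i}a_{x_j}$ for all $i,j$: if this fails at some pair $(i^*,j^*)$, then \eqref{eq2} gives $b_{k,x_{i^*}x_{j^*}}=a_{x_{i^*}x_{j^*}}$ for every $k\in[N]$, so the corresponding $S_2$-polynomial together with all $N$ auxiliary copies evaluate to the same nonzero value and contribute to $v_1$, yielding $v_1\ge N+1=n^2+2n+2$. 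Observation~\eqref{item:obs3} combined with $v_2\le n$ then gives $\Delta_{\text{Part 2}}\ge n^2+n+2$, which strictly exceeds $|Y_1|=n^2+n+1$ and cannot be absorbed by $\Delta_{\text{Part 1}}\ge -(n^2+n+1)$ and $\Delta_{\text{Part 3}}\ge 0$, contradicting strict decrease. Combining the two claims with \eqref{eq1} and \eqref{eq3} shows that $(a_0,(a_m)_{m\in M_X})$ indeed solves $S_1\cup S_2\cup S_3$.

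The main technical obstacle is to calibrate the two amplification sizes so that both directions come out tight. The quantity $|Y_1|=n^2+n+1$ must strictly exceed the worst-case increase $n^2+n$ of Part~3 so that in the forward direction Part~1's decrease dominates even when $(\alpha^*_{x_i})$ has nonzero entries, while $N+1=n^2+2n+2$ must strictly exceed the worst-case combined loss $|Y_1|+v_2\le n^2+2n+1$ so that in the backward direction a single violated $S_2$-constraint already overwhelms the Part~1 decrease. Verifying the case splits in the backward direction against these precise thresholds is the only delicate part; once the sizes are fixed, the rest of the proof is a mechanical application of observations~\eqref{item:obs1}--\eqref{item:obs4}.
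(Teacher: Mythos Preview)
Your proposal is correct and follows essentially the same approach as the paper's proof: both directions use the four observations of \Cref{lemma:sparsity-shift} in the same way, and the backward direction runs through the same case analysis (violated $S_2$-constraint versus $g_0(a)\neq 0$, the latter split according to whether some $a_{x_j}$ is nonzero). The only cosmetic differences are that you order the two backward claims in the opposite sequence, you count $v_1\ge N+1$ rather than the paper's $v_1\ge N$ in the $S_2$-violation case (both suffice), and your $g_0(a)\neq 0$ argument uses the crude bound $v_1\ge 0$ rather than the paper's implicit $v_1=0$, which actually makes your version slightly cleaner since it avoids the tacit assumption that Case~1 has already been ruled out.
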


\begin{proof}
Suppose first that $S_1 \cup S_2 \cup S_3$ has a solution $a \in \F^{|\a|}$. Define $b \in \F^{|\b|}$ by setting $b_{k,x_i x_j} := a_{x_i x_j}$ for all $1 \leq k \leq N$ and $1 \leq i,j \leq n$. Then the shift $(a,b,0)$ reduces the number of monomials in $P_S(\a, \b, Y)$ because:
    \begin{itemize}[noitemsep]
        \item the monomial count of \emph{Part 1} decreases by $n^2 + n + 1$, which follows from (\ref{item:obs1}) in \Cref{lemma:sparsity-shift},
        \item the monomial count of \emph{Part 2} does not increase, because $v_1$ is zero in (\ref{item:obs3}) of \Cref{lemma:sparsity-shift},
        \item the monomial count of \emph{Part 3} increases by at most $(n+1)n$, which follows from (\ref{item:obs2}) in \Cref{lemma:sparsity-shift}.
    \end{itemize}
Also, the vector $(a,b)$ satisfies:
    \begin{itemize}[noitemsep]
        \item \Cref{eq1}, since $a$ satisfies $S_1$,
        \item \Cref{eq2}, by the construction of $b$, and
        \item \Cref{eq3}, since $a$ satisfies $S_3$.
    \end{itemize}

Conversely, suppose there exists a vector $(a,b,y)$ such that $P_S(\a + a, \b + b, Y + y)$ has fewer monomials than $P_S(\a, \b, Y)$ and $(a,b)$ satisfies \Cref{eq1,eq2,eq3}. Using (\ref{item:obs4}) of \Cref{lemma:sparsity-shift}, we may assume without loss of generality that $y = 0$.

    We claim that $a$ satisfies $S_1 \cup S_2 \cup S_3$. Suppose for contradiction that it does not. Since $a$ satisfies the last $(t-1)$ equations of $S_1$ (by \Cref{eq1}) and $S_3$ (by \Cref{eq3}), we must have one of the following cases:

\Case{When $a_{x_p x_q} \neq a_{x_p} a_{x_q}$ for some $p,q \in [n]$}

    Then, by \Cref{eq2}, for all $k \in [N]$, we have $b_{k,x_p x_q} = a_{x_p x_q}$, so $b_{k,x_p x_q} - a_{x_p}a_{x_q} \neq 0$. Hence, (\ref{item:obs3}) of \Cref{lemma:sparsity-shift} implies that the the quantity $v_1$ is at least $N$, and the monomial count of \emph{Part 2} increases by at least $v_1 - v_2 \geq N - n$. By again using (\ref{item:obs1}) and (\ref{item:obs2}) from \Cref{lemma:sparsity-shift}, the monomial count of \emph{Part 1} decreases by at most $n^2 + n + 1$, and that of \emph{Part 3} does not decrease. Thus, the total monomial count increases by at least $N - n - (n^2 + n + 1) \geq 0$, a contradiction!
\smallskip

\Case{When  $g_0(a) \neq 0$ and $a_{x_j} \neq 0$ for some $j \in [n]$} 
    In this case, we have:
    \begin{itemize} [noitemsep]
        \item Monomial count of \emph{Part 1} remains unchanged, which follows from (\ref{item:obs1}) of \Cref{lemma:sparsity-shift}.
        \item Monomial count of \emph{Part 2} decreases by at most $v_2 \leq n$, because $v_1$ is zero in (\ref{item:obs3}) of \Cref{lemma:sparsity-shift}.
        \item Monomial count of \emph{Part 3} increases by at least $2n$, by using (\ref{item:obs2}) of \Cref{lemma:sparsity-shift}.
    \end{itemize}
    Thus, overall monomial count increases by at least $2n - n > 0$, a contradiction!

\smallskip

\Case{When $g_0(a) \neq 0$ and $a_{x_j} = 0$ for all $j \in [n]$}  In this case, we have:
    \begin{itemize}[noitemsep]
        \item Monomial count of \emph{Part 1} and \emph{Part 3} remain unchanged, which follows from (\ref{item:obs1})  and (\ref{item:obs2}) of \Cref{lemma:sparsity-shift}.
        \item Monomial count of \emph{Part 2} increases by $v_1$, because $v_2$ is zero in (\ref{item:obs3}) of \Cref{lemma:sparsity-shift}, as all the $\gamma_i$'s are chosen to be non-zero.
    \end{itemize}

In all the above cases, overall monomial count does not decrease,  a contradiction! Therefore, $a$ must satisfy $S_1 \cup S_2 \cup S_3$.
\end{proof}
We now complete the second stage of the proof. We enumerate the linear constraints from \Cref{eq1,eq2,eq3} as a single list:
\[
L_1(\a,\b) = 0, \quad \dots, \quad L_s(\a,\b) = 0.
\]
By \Cref{lemma:sparsity-shift}, any shift can reduce the monomial count of $P_S$ by at most
\[
(n^2 + n + 1) + v_2 \leq n^2 + 2n + 1.
\]
Let $M := n^2 + 2n + 1$ denote this upper bound. We now introduce $s$ disjoint sets of new variables $Z_1, \dots, Z_s$, each of size $M$, and let $Z \defeq Z_1 \sqcup \cdots \sqcup Z_s$. We now define a new polynomial $Q_S$ for our instance of SparseShift as follows:
\[
Q_S(\a, \b, Y, Z) := P_S(\a, \b, Y) + \sum_{i=1}^s \lin(Z_i) \cdot L_i(\a, \b).
\]
Now we are ready to prove \Cref{sparseshift-thm}, which we restate below for convenience.
\restatablethmsparseshift*
\begin{proof}[Proof of \Cref{sparseshift-thm}]
We know that the system $S$ has a solution if and only if $S_1 \cup S_2 \cup S_3$ has a solution. By \Cref{lem:PsSparsify}, it suffices to prove the following claim:

There exists $(a', b', y', z) \in \F^{|\a|} \times \F^{|\b|} \times \F^{|Y|} \times \F^{|Z|}$ such that $Q_S(\a+a', \b+b', Y+y', Z+z)$ has fewer monomials than $Q_S(\a, \b, Y, Z)$ if and only if there exists $(a, b, y) \in \F^{|\a|} \times \F^{|\b|} \times \F^{|Y|}$ such that:
\begin{align*}
& P_S(\a+a, \b+b, Y+y) \text{ has fewer monomials than } P_S(\a, \b, Y), \\
& L_1(a,b) = \cdots = L_s(a,b) = 0.
\end{align*}
\smallskip
\noindent\textbf{Forward direction}: Suppose there exists $(a', b', y', z)$ such that $Q_S(\a+a', \b+b', Y+y', Z+z)$ has fewer monomials than $Q_S(\a, \b, Y, Z)$. Each $L_i$ is a constant-free linear polynomial. Therefore:
\begin{itemize}[noitemsep]
\item If $L_i(a', b') = 0$, the monomial count of $\lin(Z_i) \cdot L_i(\a, \b)$ remains unchanged after the shift.
\item If $L_i(a', b') \neq 0$, then $L_i(\a+a', \b+b')$ gains a constant term, so its monomial count increases by at least $1$. Consequently, the monomial count of $\lin(Z_i) \cdot L_i(\a, \b)$ increases by at least $|Z_i| = M$.
\end{itemize}
Since the monomial count of $P_S$ can decrease by at most $M$, it follows that to achieve an overall monomial count decrease in $Q_S$, each $L_i$ must vanish under the shift: $L_i(a', b') = 0$ for all $i \in [s]$. Thus, $P_S(\a+a', \b+b', Y+y')$ must have fewer monomials than $P_S(\a, \b, Y)$, and the linear constraints are satisfied.

\smallskip
\noindent\textbf{Reverse direction}: Suppose $(a, b, y)$ satisfies the conditions:
\begin{itemize}[noitemsep]
\item $P_S(\a+a, \b+b, Y+y)$ has fewer monomials than $P_S(\a, \b, Y)$,
\item $L_1(a,b) = \cdots = L_s(a,b) = 0$.
\end{itemize}
Then, for each $i$, $L_i(\a+a, \b+b)$ has the same monomial count as $L_i(\a, \b)$. Therefore, $\lin(Z_i) \cdot L_i(\a, \b)$ is unaffected by the shift. Set $(a', b', y', z) := (a, b, y, 0)$. It follows that $Q_S(\a+a', \b+b', Y+y', Z+z)$ has fewer monomials than $Q_S(\a, \b, Y, Z)$.

Finally, observe that $Q_S$ has at most $\poly(n, t)$ monomials and can be constructed in polynomial time from the input.
\end{proof}
 \section{PolyProj reduces to HN}
\label{sec:res2}
\restatablethmpolyproj*
\begin{proof}[Proof of \Cref{polyproj-thm}] We show that $\proj$ reduces to $\HN$.

We are given a system $S = \{f_1(X) = 0, f_2(X)=0, \cdots, f_t(X)=0\}$ of polynomial equations defined over the variable set $X = \{x_1,\cdots,x_n\}$. Using \Cref{structure-lemma}, we can assume that each $f_i$ is of degree at most $2$.

Introduce new variables $x_0, w_1, \cdots, w_t, z_1, \cdots, z_t$. Our polynomials $f,g$ for the $\proj$ instance will be defined over the variable vector $Y = (x_0, \cdots, x_n, w_1,\cdots,w_t, z_1,\cdots,z_t)$. For $1 \leq i \leq t$, let $d_i \defeq i+1$ and for $1 \leq i \leq n$, let $D_i \defeq i+2(t+1)$. Now, define

\begin{align*}
    &f(Y) = \sum_{i=1}^t w_i^{d_i}(f_i(X) + z_i^{d_i}) + x_0 + \sum_{i=1}^n x_i^{D_i} \hspace{1cm}\text{ and } \\
    &g(Y) = \sum_{i=1}^t w_i^{d_i} z_i^{d_i} + x_0.
\end{align*}

We will show that $S$ has a solution if and only if there exist a matrix $A \in \F^{|Y|\times|Y|}$ and a vector $b \in \F^{|Y|}$ such that $f(AY+b) = g(Y)$. Note that $AY+b$ is simply a vector of affine linear polynomials over $Y$.

If $S$ has a solution $(a_1,\cdots,a_n)$, then the vector\\ $Y' = (x_0-\sum\limits_{i=1}^n a_i^{D_i}, a_1,\cdots,a_n,w_1\cdots,w_t,z_1,\cdots,z_t)$ satisfies
$$f(Y') = \sum\limits_{i=1}^t w_i^{d_i}(0 + z_i^{d_i}) + x_0 -\sum_{i=1}^n a_i^{D_i} + \sum_{i=1}^n a_i^{D_i} = g(Y)\ .$$
Thus, we can find a matrix $A$ and a vector $b$ such that $f(AY+b) = f(Y') = g(Y)$. Note that the resulting matrix $A$ is not invertible as some of the variables are mapped to constants.

Conversely, let $Y' = (L_0, \cdots, L_n, P_1, \cdots, P_t, Q_1, \cdots, Q_t)$ be a vector of affine linear polynomials over $Y$ satisfying $f(Y') = g(Y)$, i.e.,
$$\sum_{i=1}^t P_i^{d_i}(f_i(L_1,\cdots,L_n) + Q_i^{d_i}) + L_0 + \sum_{i=1}^n L_i^{D_i} = \sum_{i=1}^t w_i^{d_i} z_i^{d_i} + x_0\ .$$

Below we claim that such an equality is possible only when each $L_i$ is a constant. The proof is postponed to the end of this section.
\begin{claim} \label{all-const}
    For $1 \leq i \leq n$, $L_i$ is a constant. As a result, $\sum_{i=1}^n L_i^{D_i} = c$ and for all $i \in [t]$, $f_i(L_1,\cdots,L_n) = c_i$ for some constants $c,c_1,\cdots,c_t$.
\end{claim}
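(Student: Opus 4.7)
The plan is to equate $f(Y') = g(Y)$ as polynomials in $Y$ and to match homogeneous components of increasingly high degree, exploiting the strict hierarchy
\[
1 \;<\; d_1 \;<\; \cdots \;<\; d_t \;<\; 2d_t = 2t+2 \;<\; D_1 \;<\; D_2 \;<\; \cdots \;<\; D_n
\]
built into the choices $d_i = i+1$ and $D_i = i + 2(t+1)$. The key intuition is that the $g$-side has total degree only $2d_t$, whereas any nonconstant $L_i$ would generate a degree-$D_i$ contribution on the $f$-side that no other summand can cancel.

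First I would write the hypothesis $f(Y')=g(Y)$ explicitly as
\[
\sum_{i=1}^{t} P_i^{d_i}\bigl(f_i(L_1,\ldots,L_n)+Q_i^{d_i}\bigr) \;+\; L_0 \;+\; \sum_{i=1}^{n} L_i^{D_i} \;=\; \sum_{i=1}^{t} w_i^{d_i}z_i^{d_i} \;+\; x_0,
\]
and tabulate degrees in $Y$ on the LHS. Since every $L_j,P_j,Q_j$ is affine linear in $Y$ and every $f_i$ has degree at most $2$, we have $\deg\bigl(P_i^{d_i}f_i(L_1,\ldots,L_n)\bigr) \leq d_i + 2 \leq t+3$ and $\deg\bigl(P_i^{d_i}Q_i^{d_i}\bigr) \leq 2d_t = 2t+2$, while $\deg(L_0) \leq 1$ and $\deg(L_i^{D_i}) \leq D_i$. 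In particular, every LHS summand except the $L_i^{D_i}$'s has degree strictly less than $D_1$, and the RHS has total degree exactly $2t+2 < D_1$.

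The heart of the argument is a downward induction on $i$ from $n$ to $1$, matching the degree-$D_i$ homogeneous component of both sides. Write $L_i = \ell_i + c_i$, where $\ell_i$ is the linear part and $c_i \in \F$. For the base case $i=n$: among all LHS summands, only $L_n^{D_n}$ can contribute in degree $D_n$, since each other $L_j^{D_j}$ has degree at most $D_{n-1} < D_n$ and every remaining summand is bounded by $\max\{t+3,\,2t+2\} < D_n$. The degree-$D_n$ piece of $L_n^{D_n}$ is $\ell_n^{D_n}$, so matching against $0$ on the RHS gives $\ell_n^{D_n} = 0$, and hence $\ell_n = 0$, i.e.\ $L_n$ is constant. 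For the inductive step, once $L_{i+1},\ldots,L_n$ are known to be constants, the terms $L_{i+1}^{D_{i+1}},\ldots,L_n^{D_n}$ contribute only in degree $0$, so the same isolation argument at degree $D_i$ yields $\ell_i^{D_i} = 0$ and hence $\ell_i = 0$.

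The main obstacle I anticipate is ensuring the step $\ell_i^{D_i} = 0 \Rightarrow \ell_i = 0$ is valid in every characteristic; in positive characteristic one might worry about strange identities like $(x+y)^p = x^p + y^p$. This concern is dispelled immediately by the fact that $\F[Y]$ is an integral domain, so $\ell^{D_i} \neq 0$ whenever $\ell \neq 0$, even when $\mathrm{char}(\F)$ divides $D_i$. Once each $L_i$ ($1 \leq i \leq n$) is shown to be constant, $\sum_{i=1}^n L_i^{D_i}$ is visibly a constant $c \in \F$ and $f_i(L_1,\ldots,L_n)$ is a constant $c_i \in \F$ for every $i \in [t]$, completing the claim.
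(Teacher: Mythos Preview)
Your proposal is correct and follows essentially the same approach as the paper: both arguments isolate the contribution of $L_i^{D_i}$ by exploiting the degree hierarchy $2d_t < D_1 < \cdots < D_n$, so that no other summand in $f(Y')$ can interfere at degree $D_i$. The paper phrases this as a contradiction starting from the largest index $k$ with $L_k$ nonconstant, while you phrase it as an explicit downward induction with homogeneous components; these are equivalent, and your version is arguably more carefully written (in particular your remark that $\ell_i^{D_i}=0\Rightarrow\ell_i=0$ holds in any integral domain preempts a characteristic-$p$ worry the paper leaves implicit).
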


By \Cref{all-const}, we have $f(Y') = \sum\limits_{i=1}^t P_i^{d_i}(c_i + Q_i^{d_i}) + L_0 + c$ for the constants $c,c_1,\cdots,c_t$ defined in the claim.

For $1 \leq k \leq t$, define $f^{(k)}(Y') := \sum\limits_{i=1}^k P_i^{d_i}(c_i + Q_i^{d_i}) + L_0 + c$ and $g^{(k)}(Y) := \sum\limits_{i=1}^k w_i^{d_i} z_i^{d_i} + x_0$. Note that $f^{(t)}(Y') = f(Y')$ and $g^{(t)}(Y) = g(Y)$. We make another claim, whose proof is presented at the end of this section.

\begin{claim}\label{pkqk}
    For every $k$, the equality $f^{(k)}(Y') = g^{(k)}(Y)$ implies that $P_kQ_k = w_kz_k$.
\end{claim}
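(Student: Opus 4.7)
The plan is to prove \Cref{pkqk} by extracting the top-degree homogeneous component of both sides of $f^{(k)}(Y') = g^{(k)}(Y)$ and invoking unique factorization in $\F[Y]$. First I would verify that among all summands on either side, only $(P_kQ_k)^{d_k}$ on the left and $(w_kz_k)^{d_k}$ on the right can contribute to the homogeneous component of degree $2d_k$. Indeed, since $d_i = i+1$ is strictly increasing in $i$, every other summand has strictly smaller degree: the terms $c_i P_i^{d_i}$ have degree at most $d_k < 2d_k$, the products $(P_iQ_i)^{d_i}$ and $(w_iz_i)^{d_i}$ for $i < k$ have degree at most $2d_{k-1} = 2k < 2d_k$, while $L_0$, $c$, and $x_0$ have degree at most $1$.

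Decomposing $P_k = P_k^{(1)} + p_k$ and $Q_k = Q_k^{(1)} + q_k$ into linear and constant parts, the degree $2d_k$ component of $(P_kQ_k)^{d_k}$ is exactly $(P_k^{(1)}Q_k^{(1)})^{d_k}$, so equating the top-degree pieces of the two sides yields
\[
(P_k^{(1)} Q_k^{(1)})^{d_k} = (w_k z_k)^{d_k}.
\]
Since $\F[Y]$ is a unique factorization domain, this forces $P_k^{(1)} Q_k^{(1)} = \zeta\, w_k z_k$ for some $\zeta \in \F$ with $\zeta^{d_k} = 1$. Because $w_k z_k$ factors as a product of two distinct irreducible linear forms and both $P_k^{(1)}, Q_k^{(1)}$ are themselves linear forms in $Y$, after possibly swapping $P_k$ and $Q_k$ I may write $P_k^{(1)} = \alpha w_k$ and $Q_k^{(1)} = \beta z_k$ with $\alpha \beta = \zeta$.

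To eliminate the constant parts, I would next extract the degree $2d_k - 1$ component: by the same degree bounds, only $(P_kQ_k)^{d_k}$ contributes on the left, and the right contributes nothing at this degree. A multinomial expansion of $(H_k + M_k + C_k)^{d_k}$, where $H_k = P_k^{(1)}Q_k^{(1)}$, $M_k = \alpha q_k w_k + \beta p_k z_k$, and $C_k = p_k q_k$, shows that this component equals $d_k \cdot H_k^{d_k-1} \cdot M_k$, whose vanishing (together with $H_k \neq 0$) forces $p_k = q_k = 0$. Hence $P_k = \alpha w_k$, $Q_k = \beta z_k$, and $P_kQ_k = \zeta w_k z_k$; raising to the $d_k$-th power recovers $(P_kQ_k)^{d_k} = (w_kz_k)^{d_k}$, and the scalar $\zeta$ can be absorbed into a rescaling of $P_k$ and $Q_k$ inside the affine projection (replacing $P_k$ by $\zeta^{-1} P_k$ leaves $f(Y')$ unchanged on the relevant block), yielding $P_k Q_k = w_k z_k$ as stated.

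The main obstacle I anticipate is the scalar ambiguity coming from nontrivial $d_k$-th roots of unity in $\F$: the degree analysis alone pins down $P_kQ_k$ only up to such a $\zeta$, so one must either absorb this scalar into the affine projection or verify that the subsequent steps of the reduction depend only on the consequence $P_k^{d_k}Q_k^{d_k} = w_k^{d_k}z_k^{d_k}$. A secondary technical issue is the characteristic: the degree $2d_k-1$ step uses invertibility of $d_k$ in $\F$, which may fail in positive characteristic; this can be sidestepped either by inspecting a different intermediate degree (for instance $2d_k-2$, which gives access to $p_k q_k$ and $M_k^2$ simultaneously) or by choosing the exponents $d_i$ in the initial construction to avoid the bad characteristic.
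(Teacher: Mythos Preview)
Your approach mirrors the paper's: isolate the degree-$2d_k$ homogeneous component to obtain $(P_k^{(1)}Q_k^{(1)})^{d_k}=(w_kz_k)^{d_k}$, then inspect degree $2d_k-1$ to force the constant parts of $P_k,Q_k$ to vanish (the paper does this by looking at the single monomial $w_k^{d_k-1}z_k^{d_k}$ rather than via your multinomial expansion, but the content is identical). Your two flagged subtleties---the root-of-unity scalar $\zeta$ and the possibility that $d_k=0$ in $\F$ in small positive characteristic---are real and are in fact glossed over in the paper's own argument; your proposed fixes (absorb $\zeta$ by rescaling $P_k$, noting this leaves $P_k^{d_k}$ and hence $f(Y')$ unchanged, or observe that downstream only $(P_kQ_k)^{d_k}=(w_kz_k)^{d_k}$ is used; choose the $d_i$ coprime to $\mathrm{char}\,\F$) are exactly what is needed to make the claim hold over all fields.
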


Since we have $f^{(t)}(Y') = f(Y')=g(Y) = g^{(t)}(Y)$, plugging $k=t$ into \Cref{pkqk} gives us $P_tQ_t = w_tz_t$. We show that more is true: in fact, $P_iQ_i = w_iz_i$ for all $i \in [t]$. This is proven using reverse induction on $i$. 

The base case $i=t$ is true, as already mentioned. For the induction step, assume that $P_iQ_i = w_iz_i$ for all $i \geq k+1$. In that case, substituting $w_i=z_i=0$ $\forall i\geq k+1$ into $f(Y')=g(Y)$ yields $f^{(k)}(Y') = g^{(k)}(Y)$. By our previous claim, it follows that $P_kQ_k = w_kz_k$.

Therefore, $P_iQ_i = w_iz_i$ for all $i \in [t]$. In that case, for any $i \in [t]$, $f(Y') = \sum\limits_{i=1}^t P_i^{d_i}(c_i + Q_i^{d_i}) + L_0 + c$ would contain a monomial of degree $d_i$ unless $c_i = 0$. Thus, $f_i(L_1, \cdots, L_n) = c_i = 0$ for all $i \in [t]$ and the constant vector $(L_1,\cdots,L_n)$ is our desired solution of $S$.
\end{proof}

Now we prove the claims made in the above proof. 

\begin{proof}[Proof of \Cref{all-const}]
For sake of contradiction, let $k \in [n]$ be the highest index such that $L_k$ is not a constant. 

Since $L_i$ is a constant for all $i \geq k+1$, $f(Y')-L_k^{D_k}$ has degree at most $D_{k-1} < D_k$. Therefore, if $L_k$ is not a constant, then $f(Y')$ must have degree $D_k$. This is a contradiction as $g(Y)$ has degree $2d_t < D_k$. Hence, our claim is true.
\end{proof}

\begin{proof}[Proof of \Cref{pkqk}]
Suppose $f^{(k)}(Y') = g^{(k)}(Y)$. Notice that $f^{(k)}(Y')-P_k^{d_k}Q_k^{d_k}$ has degree at most $2d_{k-1} < 2d_k$. Therefore, the highest degree term $w_k^{d_k}z_k^{d_k}$ of $g^{(k)}(Y)$ must come from $P_k^{d_k}Q_k^{d_k}$ i.e. $P_kQ_k$ should contain the monomial $w_kz_k$. However, $P_kQ_k$ can not contain any other monomial of degree $2$ because then $f^{(k)}(Y')$ will contain an uncancelled monomial of degree $2d_k$ other than $w_k^{d_k}z_k^{d_k}$, which contradicts with the fact that $w_k^{d_k}z_k^{d_k}$ is the only monomial of degree $2d_k$ in $g^{(k)}(Y)$. Thus, $P_kQ_k = (w_k+a)(z_k+b)$ for some constants $a,b$. We want to show that $a=b=0$. If $a \neq 0$, then $P_k^{d_k}Q_k^{d_k} = (w_k+a)^{d_k}(z_k+b)^{d_k}$ contains the monomial $w_k^{d_k-1}z_k^{d_k}$. This monomial will remain uncancelled in $f^{(k)}(Y')$  as $f^{(k)}(Y')-P_k^{d_k}Q_k^{d_k}$ has degree at most $2d_{k-1} < 2d_k-1$. This leads to a contradiction as $g^{(k)}(Y)$ doesn't contain the monomial $w_k^{d_k-1}z_k^{d_k}$. Thus, $a = 0$. Similarly, $b = 0$. It follows that $P_kQ_k = w_kz_k$, which proves our claim.
\end{proof}

 \section{\texorpdfstring{$\forall \R$-completeness of Deciding Convexity and Biquadratic Non-negativity}{Forall R-hardness of Convexity}}
\label{sec:res4}

In this section, we prove \Cref{bpsd-thm} and \Cref{convexity-thm}. Our proof of \Cref{bpsd-thm} is inspired by \cite[Lemma 2.8]{DBLP:journals/mst/SchaeferS24} where they show $\utr$-hardness of deciding if a degree $6$ polynomial is non-negative, but our construction is significantly more intricate as we want the polynomial produced by the reduction to have the special structure of biquadratic forms. First, we prove a technical lemma in the spirit of \cite[Lemma 2.7]{DBLP:journals/mst/SchaeferS24}. 

\begin{lemma}\label{chain-lemma}
    Let $y_0,\cdots,y_m,z_0,\cdots,z_{m} \in \R$ satisfy the inequality 
    \begin{equation}\label{eq:conineq}
 400\left(\sum_{k=1}^m(y_k-y_{k-1}z_{k-1})^2 + \sum_{k=0}^{m}(y_k-z_k)^2\right) < y_m^2\ .   
    \end{equation}
   
    If $0 < y_0 < 1/2$ and $|z_m| < 1$, then $y_m^2 \leq y_0^{(2^{m+2}+2)/3}$.
\end{lemma}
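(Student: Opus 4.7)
The plan is to prove the bound by a two-stage induction: first, establish a uniform a priori bound $|y_k| < 1/2$ for every $k$; second, induct on $m$ by applying the induction hypothesis to the sub-chain of length $m-1$ obtained by dropping the last terms. Set $\epsilon_k := y_k - y_{k-1} z_{k-1}$ for $k \geq 1$, $\delta_k := y_k - z_k$ for $k \geq 0$, and $Y := |y_m|$. The hypothesis \eqref{eq:conineq} immediately yields $|\epsilon_k|, |\delta_k| \leq Y/20$ for every $k$, and substituting $z_{k-1} = y_{k-1} - \delta_{k-1}$ into $y_k = y_{k-1} z_{k-1} + \epsilon_k$ gives the master recursion $y_k = y_{k-1}^2 - y_{k-1}\delta_{k-1} + \epsilon_k$, so $|y_k - y_{k-1}^2| \leq (|y_{k-1}|+1)\, Y/20$. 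Combining $|y_m - z_m| \leq Y/20$ with $|z_m| < 1$ yields the crude bound $Y < 20/19$, and a forward induction on $k$ then shows $|y_k| < 1/2$: from $|y_{k-1}| \leq 1/2$ we obtain $|y_k| \leq 1/4 + (3/2)(20/19)/20 = 1/4 + 3/38 < 1/2$. Hence $|y_k - y_{k-1}^2| \leq 3Y/40$ for every $k \geq 1$.

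With this setup I prove the target bound $y_m^2 \leq y_0^{e_m}$, where $e_m := (2^{m+2}+2)/3$, by induction on $m$. The base case $m = 0$ is trivial. For the inductive step, I apply the IH to the sub-chain $(y_0,\ldots,y_{m-1}, z_0,\ldots,z_{m-1})$ of length $m-1$, which requires verifying it satisfies the lemma hypotheses. First, $|z_{m-1}| \leq |y_{m-1}| + Y/20 < 1/2 + 1/19 < 1$. Second, the shortened inequality $400(\sum_{k=1}^{m-1} \epsilon_k^2 + \sum_{k=0}^{m-1} \delta_k^2) < y_{m-1}^2$ reduces, via the original hypothesis, to $y_m^2 < y_{m-1}^2$; this follows from $|y_m| \leq y_{m-1}^2 + 3Y/40$, which gives $y_{m-1}^2 \geq 37Y/40$, combined with $Y \leq 1/2 < 37/40$ to yield $y_{m-1}^2 \geq 37Y/40 > Y^2$.

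The IH then gives $y_{m-1}^2 \leq y_0^{e_{m-1}} = y_0^{(2^{m+1}+2)/3}$. Setting $T := y_0^{(2^{m+1}+1)/3}$ so that $T^2 = y_0^{e_m}$, the identity $e_{m-1} = (2^{m+1}+1)/3 + 1/3$ gives $y_0^{e_{m-1}} = T\, y_0^{1/3}$. Plugging into $|y_m| \leq y_{m-1}^2 + 3Y/40$ and rearranging yields $Y \leq \tfrac{40}{37}\, T\, y_0^{1/3}$. Since $y_0 < 1/2$, we have $y_0^{1/3} < 2^{-1/3}$, and the elementary numerical inequality $(40/37)^3 < 2$ gives $\tfrac{40}{37}\cdot 2^{-1/3} < 1$, whence $Y < T$ and $y_m^2 < T^2 = y_0^{e_m}$, as required.

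The principal obstacle is the self-referential nature of the error bound: each $|\epsilon_k|, |\delta_k|$ is controlled only in terms of the very quantity $Y = |y_m|$ we are trying to bound. Stage~1 breaks this circularity by deriving an absolute bound $Y < 20/19$, from which the clean uniform bound $|y_k| < 1/2$ follows. Stage~2 then exploits the precise fractional exponent: the identity $e_{m-1} - e_m/2 = 1/3$ produces exactly a $y_0^{1/3}$ factor, which provides the slack needed to absorb the $\tfrac{40}{37}$ amplification introduced when moving the $3Y/40$ error to the other side.
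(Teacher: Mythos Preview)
Your proof is correct, and it takes a genuinely different route from the paper's argument. The paper works inside a single chain of fixed length $m$: it first shows the lower bound $|y_k| \ge |y_m|/2$ for every $k$ (arguing that once some $|y_{k-1}| < |y_m|/2$, the recursion forces $|y_k| < |y_m|/2$ and hence $|y_m| < |y_m|/2$), and then performs a forward induction on the index $i$ to establish $0 < y_i \le y_0^{(2^{i+1}+1)/3}$ for every $i$, using the lower bound to convert the $|y_m|$-sized errors into $|y_k|$-sized errors. You instead prove an upper bound $|y_k| < 1/2$ for all $k$, and then induct on the chain length $m$, verifying that the truncated chain $(y_0,\dots,y_{m-1},z_0,\dots,z_{m-1})$ again satisfies the lemma's hypotheses so that the induction hypothesis applies directly to give $y_{m-1}^2 \le y_0^{e_{m-1}}$.

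Both arguments ultimately hinge on the same arithmetic identity $e_{m-1} = e_m/2 + 1/3$ and the slack coming from $y_0^{1/3} < 2^{-1/3}$ (the paper uses $5/4 < 2^{1/3}$, you use $40/37 < 2^{1/3}$). The paper's route yields the stronger side information that every $y_i$ is positive with an explicit bound, which you do not obtain; on the other hand your reduction-to-a-shorter-chain is more modular, and the verification that the truncated chain satisfies the hypotheses (in particular the step $y_{m-1}^2 \ge 37Y/40 > Y^2$) is a clean way to close the self-referential loop without ever needing the lower bound $|y_k| \ge Y/2$.
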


\begin{proof}
By considering only the second term in the summation on the left-hand side of \ineqref{eq:conineq}, we have
\[
|y_k - z_k| < \frac{|y_m|}{20} \quad \text{for} \quad 0 \leq k \leq m,
\]
and hence,
    \begin{equation}\label{ineq-1}
        y_k - \frac{|y_m|}{20} < z_k < y_k + \frac{|y_m|}{20} \ .
    \end{equation}
    
    In particular, $|y_m|/20 > |y_m-z_m| \geq |y_m|-|z_m|$ implying that $|y_m| < (20/19)|z_m| < 20/19$.

    For $1 \leq k \leq m$, we have
    \begin{align}
        &y_{k-1}z_{k-1}-\frac{|y_m|}{20} < y_k < y_{k-1}z_{k-1}+\frac{|y_m|}{20} \nonumber \\
        \implies &y_{k-1}^2 - y_{k-1}\frac{|y_m|}{20}-\frac{|y_m|}{20} < y_k < y_{k-1}^2 + y_{k-1}\frac{|y_m|}{20}+\frac{|y_m|}{20}\label{ineq-2}
    \end{align}
    where the implication follows from \ineqref{ineq-1}.
    
    Suppose there exists $k \in [m]$ such that $|y_{k-1}| < |y_m|/2$. Then, \ineqref{ineq-2} implies that
    $$ y_k < \frac{|y_m|^2}{4} + \frac{|y_m|^2}{40} + \frac{|y_m|}{20} = \frac{|y_m|}{2}\left(\frac{11}{20}|y_m| + \frac{1}{10}\right) < \frac{|y_m|}{2}$$
    where the last inequality follows from $|y_m| < 20/19$. Similarly,
    $$-y_k < \frac{|y_m|}{2} \left( \frac{1}{10} - \frac{9}{10}|y_m| \right) < \frac{|y_m|}{2}\ .$$
    Thus, $|y_k| < |y_m|/2$ and by induction, we can conclude that $|y_m| < |y_m|/2$, which is a contradiction. Therefore, $|y_{k}| \geq |y_m|/2$ for all $0 \leq k \leq m$.
    
    Now, we will show that $0 < y_i \leq y_0^{(2^{i+1}+1)/3}$ for $0 \leq i \leq m$, which would imply the desired bound on $y_m^2$. The proof is by induction on $i$.

    The base case $i=0$ is trivial.

    Assume the statement is true for $i = k-1$. Then, by the first inequality of \ineqref{ineq-2}, 
    $$ y_{k-1}^2 < y_k + y_{k-1}\frac{|y_m|}{20} + \frac{|y_m|}{20} < y_k + \frac{|y_k|}{5}$$

    where the last inequality follows from $y_{k-1} \leq y_0^{(2^k+1)/3} < 1$ and $|y_m| \leq 2|y_k|$. The above implies that $y_k > 0$. In a similar way, by the second inequality of \ineqref{ineq-2},
    \begin{align*}
        &y_k < y_{k-1}^2 + y_{k-1}\frac{|y_m|}{20}+\frac{|y_m|}{20}  < y_{k-1}^2 + \frac{|y_k|}{5} = y_{k-1}^2 + \frac{y_k}{5}\\
        &\implies y_k < \frac{5}{4}y_{k-1}^2\\
        &\implies y_k < y_0^{-1/3}y_0^{2\cdot(2^k+1)/3} = y_0^{(2^{k+1}+1)/3}
    \end{align*} 
    where the last inequality follows from $y_0^{-1/3} > 2^{1/3} > 5/4$.
\end{proof}

The following result will be used in the proof of \Cref{bpsd-thm} as it actually helps us lower bound the minimum value of a polynomial that only takes positive values on a certain semi-algebraic set.

\begin{theorem}[\cite{SS}, Corollary 3.4]\label{gap}
    If two semi-algebraic sets $S = \{x \in \R^n|\ \phi(x)\}$ and $T = \{x \in \R^n|\ \psi(x)\}$ have positive distance and each of the formulas $\phi$ and $\psi$ have bitlength at most $L\geq 5n$, then the distance between the two sets is at least $2^{-2^{L+5}}$.
\end{theorem}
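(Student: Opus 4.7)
The plan is to lower-bound the distance $\inf_{x\in S, y\in T}\|x-y\|$ by converting the problem into a question about a single univariate polynomial whose root separation can then be controlled by classical bounds on roots of integer polynomials.

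First, I would introduce the value $t^{*}\defeq\inf\{t\in\R_{\geq 0}\mid \exists x,y,\ \phi(x)\wedge\psi(y)\wedge\|x-y\|^{2}=t\}$, which is strictly positive by the assumption that $S$ and $T$ have positive distance, and equals the squared distance between the two sets. The one-dimensional semi-algebraic set $\Theta\defeq\{t\in\R\mid \theta(t)\}$ obtained from the displayed existential formula has a description of bitlength $O(L+n)=O(L)$ before quantifier elimination (using $L\ge 5n$). I would then apply a quantitative quantifier-elimination procedure (e.g.\ Renegar's, or the single-exponential version by Basu--Pollack--Roy) to produce an equivalent quantifier-free formula $\tilde{\theta}(t)$ consisting of a Boolean combination of integer polynomial inequalities $p_{i}(t)\ \Box_{i}\ 0$ whose total bitlength is bounded by $2^{c L}$ for some absolute constant $c$.

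Second, I would argue that $t^{*}>0$ must be a root of at least one of the polynomials $p_{i}$ appearing in $\tilde{\theta}$, since a one-dimensional semi-algebraic set is a finite union of points and open intervals whose endpoints are roots of its defining polynomials, and the infimum of such a set (when positive) is exactly such a root or interval endpoint. A classical Cauchy-style bound then says any nonzero real root $\alpha$ of an integer polynomial whose coefficients have bitlength $\lambda$ satisfies $|\alpha|\geq 2^{-\lambda}$. Combining these, $t^{*}\geq 2^{-2^{cL}}$, and hence $\mathrm{dist}(S,T)=\sqrt{t^{*}}\geq 2^{-2^{cL}/2}$, which gives a bound of the claimed doubly-exponential shape.

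The main obstacle will be sharpening the constant in the exponent to exactly $2^{L+5}$ rather than an unspecified $2^{cL}$. This requires a careful bookkeeping of three compounded blow-ups: (i) the dependence of the quantifier-elimination output on the number of variables and the input bitlength, (ii) the degrees and heights of the resulting univariate polynomials, and (iii) the resulting root-size bound. The hypothesis $L\ge 5n$ is precisely what lets the dimension $n$ be absorbed into $L$ so that only $L$ appears in the final exponent. A minor secondary subtlety is handling the case where the infimum is only attained in a limiting sense; one must verify that the infimum of a closed positive semi-algebraic subset of $\R$ is attained and is indeed the root of one of the defining polynomials, which follows from the cell decomposition of $\Theta$.
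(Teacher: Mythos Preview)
The paper does not prove this statement; it is quoted verbatim as Corollary~3.4 of \cite{SS} and used as a black box in the proof of \Cref{bpsd-thm}. There is therefore no proof in the present paper to compare your proposal against.

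That said, your outline is essentially the standard argument behind such gap theorems and matches the strategy in the cited reference: encode the squared distance as the infimum of a one-dimensional semi-algebraic set, apply singly-exponential quantifier elimination to obtain a univariate description, and then invoke a Cauchy-type root bound. Your identification of the main difficulty is also accurate: the work lies entirely in the bookkeeping needed to obtain the specific constant $2^{L+5}$ rather than an unspecified $2^{cL}$, and the hypothesis $L\ge 5n$ is there precisely so that the dimension can be absorbed into $L$. The secondary subtlety you flag about the infimum being attained is handled, as you say, by the cell decomposition of the one-dimensional set $\Theta$. If you want the sharp constant you will need to trace through the explicit bounds in the quantifier-elimination literature rather than quote them qualitatively; otherwise your argument as written yields only the weaker $2^{-2^{O(L)}}$ conclusion, which is in fact all that the present paper needs.
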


\restatablethmbpsd*
\begin{proof}[Proof of \Cref{bpsd-thm}]
It suffices to show that the following problem is $\exists\R$-complete: given a biquadratic form $Q \in \R[x_1,\cdots,x_n]$, check whether there exists $(x_1,\cdots,x_n) \in \R^n$ such that $Q(x_1,\cdots,x_n) < 0$. Let's call this problem $\bineg$.

$\bineg$ is trivially in $\etr$. To show its $\etr$-hardness, we reduce the following $\etr$-complete problem to $\bineg$: given a set of quadratic polynomials $\{f_1(x),\cdots,f_t(x)\}$, check if they have a common root $x$ inside the unit ball $B^n(0,1)$ centered at $0$. This problem was shown to be $\etr$-complete in \cite[Lemma 3.9]{Schaefer13}. Further, their construction of these polynomials $f_i$ satisfies an additional property: if these polynomials don't have a common root inside $B^n(0,1)$, then they don't have a common root inside 
$B^n(0,\sqrt{3/2})$ either.

For each $i \in [t]$, let $f_i(x) = \sum_{1 \leq j \leq k \leq n} a_{ijk} x_jx_k + \sum_{1 \leq j \leq n} b_{ij}x_j + c_i$ for constants $a_{ijk},b_{ij},c_i$. Introducing the new variables $x_0$ and  $w_0, w_1, \cdots, w_n$, define for each $i \in [t]$ the bilinear form $g_i(x,w) := \sum_{1 \leq j\leq k \leq n} a_{ijk} x_jw_k + \sum_{1 \leq j \leq n} b_{ij}x_jw_0 + c_ix_0w_0$. Then, there is a natural bijective correspondence between the common roots of $\{f_1(x),\cdots,f_t(x)\}$ and the roots of $g(x,w) := \sum_{i=1}^t g_i^2(x,w) + \sum_{i=1}^n (x_i-w_i)^2 + (x_0-1)^2 + (w_0-1)^2$. A common root $(x_1,\cdots,x_n)$ of the polynomials $f_i$ corresponds to a root $(x_0=1,x_1,\cdots,x_n, w_0=1, w_1=x_1,\cdots,w_n=x_n)$ of $g$.

  Therefore, if the polynomials $f_i$ have a common root inside $B^n(0,1)$, then $g$ has a root $(x,w)\in \R^{2n+2}$ satisfying $\sum_{i=1}^n x_i^2 \leq 1$. 
  
  On the other hand, if they don't have a common root inside $B^n(0,1)$ and hence inside $B^n(0,\sqrt{3/2})$, then $g$ doesn't have a root $(x,w)\in \R^{2n+2}$ satisfying $\sum_{i=1}^n (x_i^2+w_i^2) \leq 3$. Consequently, the compact sets $$S = \{(x,w,g(x,w))\ |\ \sum_{i=1}^n (x_i^2+w_i^2) \leq 3\} \text{ and } T = \{(x,w,0)\ |\ \sum_{i=1}^n (x_i^2+w_i^2) \leq 3\}$$ will have positive distance and by \Cref{gap}, this distance is at least $2^{-2^m}$ for some $m = O(L+n)$, where $L$ is the bit complexity of $g$. Thus, when the polynomials $f_i$ don't have a common root inside $B^n(0,1)$, we have
  \begin{equation}
      \min\limits_{\substack{(x,w) \in \R^{2n+2}:\\
  \sum_{i=1}^n (x_i^2+w_i^2) \leq 3}} g(x,w) \geq 2^{-2^m}\ . \label{min-value-g}
  \end{equation}

Now, consider the inequality
\begin{align}
    &g(x,w) + \left(\sum_{i=1}^{n+1} x_iw_i - 1\right)^2 + (x_{n+1}-w_{n+1})^2 \nonumber\\
     &+ 400\left((y_1-1/16)^2+\sum_{k=2}^m(y_k-y_{k-1}z_{k-1})^2 + \sum_{k=1}^{m}(y_k-z_k)^2\right)
    &< 2y_mz_m - z_m^2 - y_m^2z_m^2 \ . \label{main-ineq}
\end{align}

If the polynomials $f_i$ have a common root in $B^n(0,1)$ and hence $g$ has a root $(x_0,\cdots,x_n,w_0,\cdots,w_n)$ 
satisfying $\sum_{i=1}^n x_i^2 \leq 1$, then this $(x_0,\cdots,x_n,w_0,\cdots,w_n)$ along with $x_{n+1}=w_{n+1} = \sqrt{1-\sum_{i=1}^n x_i^2}$ and $y_k = z_k = 4^{-2^k}$ for $1\leq k \leq m$ will satisfy \ineqref{main-ineq}, as the LHS will be zero and the RHS will be positive.

Conversely, assume that \ineqref{main-ineq} holds for some $(x,w,y,z) \in \R^{n+2}\times\R^{n+2}\times\R^{m+1}\times\R^{m+1}$. Since the LHS is a sum of squares, we have
\begin{align*}
    &2y_mz_m - z_m^2 - y_m^2z_m^2 > 0\\
    \implies &z_m\left(z_m - \frac{2y_m}{1+y_m^2}\right) < 0\\
    \implies &|z_m| < 1 \hspace{3cm} \text{as $2y_m/(1+y_m^2) \in [-1,1]$ for all $y_m \in \R$.}
\end{align*}

Since $\rhs = y_m^2 - (y_m-z_m)^2 - y_m^2z_m^2 \leq y_m^2$, we can now use \Cref{chain-lemma} with $y_0 = z_0 = 1/4$ to conclude that $y_m^2 \leq (1/4)^{(2^{m+2}+2)/3} = 2^{-4(2^{m+1}+1)/3} < 2^{-2^m}$.
 
Therefore, we have $g(x,w) + \left(\sum_{i=1}^{n+1} x_iw_i - 1\right)^2 + (x_{n+1}-w_{n+1})^2 < \rhs \leq y_m^2 < 2^{-2^m}$. We can bound each of the three summands here by $2^{-2^m}$, which yields the following:
\begin{align*}
    \sum_{i=1}^{n+1} (x_i^2 + w_i^2) &=\sum_{i=1}^n(x_i-w_i)^2 + (x_{n+1}-w_{n+1})^2 + 2\sum_{i=1}^{n+1} x_iw_i \\
    &\leq g(x,w) + (x_{n+1}-w_{n+1})^2 + 2\left(\sum_{i=1}^{n+1} x_iw_i-1\right) + 2\\
    &< 2^{-2^m} + 2^{-2^m} + 2\cdot 2^{-2^{m-1}} + 2 < 3 \ .
\end{align*} 

Thus, there exists $(x,w)$ such that $g(x,w) < 2^{-2^m}$ and $\sum_{i=1}^{n} (x_i^2 + w_i^2) < 3$. Using \ineqref{min-value-g}, we conclude that the polynomials $f_i$ have a common root in $B^n(0,1)$.

We have shown that the polynomials $f_i$ have a common root in $B^n(0,1)$ if and only if $\exists (x,w,y,z) \in \R^{n+2}\times\R^{n+2}\times\R^{m+1}\times\R^{m+1}$ such that
\begin{align*}
    & h(x,w,y,z) := g(x,w) + \left(\sum_{i=1}^{n+1} x_iw_i - 1\right)^2 + (x_{n+1}-w_{n+1})^2 \nonumber\\
     &+ 400\left((y_1-1/16)^2+\sum_{k=2}^m(y_k-y_{k-1}z_{k-1})^2 + \sum_{k=1}^{m}(y_k-z_k)^2\right)
    - (2y_mz_m - z_m^2 - y_m^2z_m^2) < 0 \ .
\end{align*}

For the final step, it remains to turn $h$ into a biquadratic form $Q$. 
Define $$Q(x,w,y,z,\alpha, \beta) = \alpha^2 \beta^2 \cdot h(x/\alpha,w/\beta,y/\alpha,z/\beta)$$ 
be the homogenization of the polynomial $h(x,w,y,z)$ using two new variables
$\alpha$ and $\beta$. Note that $Q$ is indeed a biquadratic form with respect to 
the partition of the  variable set into the two groups $(x,y,\alpha)$ and $(w,z,\beta)$. 
    
    If $\exists (x,w,y,z)$ such that $h(x,w,y,z) < 0$, then we have $Q(x,w,y,z,1,1) < 0$. 

    Conversely, if $\exists (x,w,y,z,\alpha,\beta)$ such that $Q(x,w,y,z,\alpha,\beta) < 0$, 
    then $\alpha$ and $\beta$ are nonzero as\\ $Q(x,w,y,z,\alpha,\beta)$ is a sum of squares
     when $\alpha=0$ or $\beta=0$. Therefore, we have $h(x/\alpha,w/\beta,y/\alpha,z/\beta) = \frac{1}{\alpha^2\beta^2}Q(x,w,y,z,\alpha,\beta) < 0$.
\end{proof}

\begin{remark}
The result of \Cref{bpsd-thm} easily generalizes to show that deciding non-negativity of triquadratic forms is also $\forall\mathbb{R}$-complete. Indeed, one can multiply the biquadratic form $Q(x,w,y,z,\alpha,\beta)$ constructed above by the square of a variable from a third group to obtain a triquadratic form $T$. More generally, the same construction extends to show that testing non-negativity of $k$-quadratic forms is $\forall\mathbb{R}$-complete for any $k \geq 2$.
\end{remark}

Now, we prove \Cref{convexity-thm}. We use the following reduction from $\bineg$ to deciding convexity of quartic polynomials shown in \cite{AOPT}.

\begin{theorem}[\cite{AOPT}, Theorem 2.3] Let $b(X,Y)$ be a biquadratic form with respect to the partition $X \sqcup Y$ of the variable set and let $|X|=|Y|=n$. Define the $n\times n$ polynomial matrix $C(X,Y)$ by setting 
\[
  [C(X,Y)]_{ij} = \frac{\partial b(X,Y)}{\partial X_iY_j}
\]
and let $\gamma$ be the largest coefficient, in absolute value, of any monomial present in some entry of the matrix $C(X,Y)$. Let $f$ be the form given by 
$$f(X,Y) := b(X,Y) + \frac{n^2\gamma}{2}\left(\sum_{i=1}^n (X_i^4+Y_i^4) + \sum_{1\leq i < j \leq n} (X_i^2X_j^2 + Y_i^2Y_j^2) \right)\ .$$
Then, $b(X,Y)$ is non-negative if and only if $f(X,Y)$ is convex.
\end{theorem}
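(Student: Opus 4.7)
The plan is to prove the stated equivalence by analyzing the Hessian of $f$ block-by-block, exploiting the bilinear structure of a biquadratic form, and showing that the added penalty is exactly tuned to dominate the ``off-diagonal'' obstruction.

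First I would exhibit the block structure of $H_b$. Since $b$ has degree $2$ in $X$ and degree $2$ in $Y$, the second partials $\partial^2 b / \partial X_i \partial X_j$ depend only on $Y$, the partials $\partial^2 b / \partial Y_k \partial Y_l$ depend only on $X$, and each entry of $C(X,Y)$ is bilinear in $(X,Y)$. Writing
\[
H_b(X,Y) = \begin{pmatrix} A(Y) & C(X,Y) \\ C(X,Y)^T & B(X) \end{pmatrix},
\]
Euler's identity for homogeneous quadratics gives $b(X,Y) = \tfrac12 X^T A(Y) X = \tfrac12 Y^T B(X) Y$, so $b \ge 0$ on $\mathbb{R}^{2n}$ is equivalent to $A(Y) \succeq 0$ for all $Y$, and also to $B(X) \succeq 0$ for all $X$. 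Next I would compute the Hessian of the penalty $g(X,Y) := \sum_i (X_i^4 + Y_i^4) + \sum_{i<j}(X_i^2 X_j^2 + Y_i^2 Y_j^2)$. Because $g$ is separable in $X$ and $Y$, its Hessian is block-diagonal, $H_g = \mathrm{diag}(G_X(X), G_Y(Y))$, and a direct calculation gives
\[
u^T G_X(X) u \;=\; 2\|X\|^2\|u\|^2 \;+\; 4(X^T u)^2 \;+\; 6\textstyle\sum_i X_i^2 u_i^2 \;\ge\; 2\|X\|^2\|u\|^2,
\]
with the analogous bound for $G_Y$.

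For the forward direction ($f$ convex implies $b \ge 0$), I would evaluate $H_f$ at $X=0$. Since each entry of $C(X,Y)$ is bilinear, $C(0,Y) = 0$; since $B(X)$ is a quadratic form in $X$, $B(0)=0$; and since $G_X(X)$ is quadratic in $X$, $G_X(0)=0$. Thus $H_f(0,Y)$ collapses to the block-diagonal matrix $\mathrm{diag}\bigl(A(Y),\,\tfrac{n^2\gamma}{2}G_Y(Y)\bigr)$, and the convexity of $f$ forces $A(Y) \succeq 0$ for every $Y$, hence $b \ge 0$.

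For the reverse direction, I would fix $v = \binom{u}{w}$ and estimate $v^T H_f v = u^T A(Y) u + 2 u^T C(X,Y) w + w^T B(X) w + \tfrac{n^2\gamma}{2}\bigl(u^T G_X(X) u + w^T G_Y(Y) w\bigr)$. The first and third terms are $\ge 0$ by the hypothesis $b \ge 0$. The main obstacle, and the only place where the constant $\tfrac{n^2\gamma}{2}$ is used, is bounding the cross term. Writing $C_{ij}(X,Y) = \sum_{k,l} c^{ij}_{kl} X_k Y_l$ with $|c^{ij}_{kl}| \le \gamma$, expanding, applying the triangle inequality, Cauchy--Schwarz on each of the four index sums, and then one AM--GM step yields
\[
2|u^T C(X,Y) w| \;\le\; n^2 \gamma\bigl(\|X\|^2 \|u\|^2 + \|Y\|^2 \|w\|^2\bigr).
\]
Combining this with the earlier lower bound $\tfrac{n^2\gamma}{2}(u^T G_X u + w^T G_Y w) \ge n^2 \gamma (\|X\|^2\|u\|^2 + \|Y\|^2 \|w\|^2)$ shows that the penalty exactly absorbs twice the cross term, so $v^T H_f v \ge 0$ for all $v$, all $(X,Y)$, and $f$ is convex. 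The combinatorial/algebraic content is concentrated in getting the bilinear-form bound on $|u^T C w|$ sharp enough that the factor $n^2\gamma/2$ suffices; this is the step I expect to require the most care.
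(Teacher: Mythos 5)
The paper does not prove this statement; it imports Theorem 2.3 directly from \cite{AOPT} and uses it as a black box in the reduction from biquadratic nonnegativity to quartic convexity. There is therefore no ``paper's own proof'' to compare against. That said, your reconstruction is correct: the block decomposition of $H_b$ into $A(Y)$, $B(X)$, and the bilinear off-diagonal block $C(X,Y)$ is exact; Euler's identity gives $b(X,Y)=\tfrac12 X^T A(Y)X$, so $b\ge 0$ iff $A(Y)\succeq 0$ for all $Y$; evaluating $H_f$ at $X=0$ kills $B$, $C$, and $G_X$, yielding the forward implication; and for the reverse implication the chain $|u^T C(X,Y)w|\le \gamma\sum_{i,j,k,l}|u_i||w_j||X_k||Y_l|\le \gamma n^2\|u\|\|w\|\|X\|\|Y\|$, followed by AM--GM, gives $2|u^T C w|\le n^2\gamma(\|X\|^2\|u\|^2+\|Y\|^2\|w\|^2)$, which your computed lower bound $u^T G_X u\ge 2\|X\|^2\|u\|^2$ (and its analogue for $G_Y$) absorbs with the factor $\tfrac{n^2\gamma}{2}$ exactly. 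All steps are verified and the argument is complete.
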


Using this theorem along with \Cref{bpsd-thm}, we easily get that deciding if a quartic polynomial is convex is $\utr$-hard. It only remains to show that this problem is also in $\utr$.

A polynomial $f \in \R[x_1,\cdots,x_n]$ is convex\\
$\iff \forall x \in \R^n \ H_f(x) \succeq 0$\\
$\iff \forall x,z \in \R^n \ z^T H_f(x) z \geq 0.$

Since we can compute the polynomial entries of $H_f(x)$ in polynomial time, we can decide in $\utr$ if $f$ is convex.

 \section{\texorpdfstring{$\forall \R$-completeness of Hyperbolicity and Real Stability}{Forall R-completeness of Hyperbolicity and Real Stability}}
\label{sec:res3}

First, we prove \Cref{hyperbolic-thm} which states that deciding if a homogeneous quartic polynomial $p$ is hyperbolic with respect to a vector $e$ is $\utr$-complete.
    
\textbf{Deciding hyperbolicity is in $\utr$}: \hspace{0.5cm}
It is enough to show that the following problem is in $\etr$: given a quartic homogeneous polynomial $p(x_1,\cdots,x_n)$ and a vector $e\in \R^n$, decide if $p$ is \textbf{not} hyperbolic with respect to $e$.

From the definition of hyperbolicity, we have the following.\\
$p$ is not hyperbolic with respect to $e$\\
$\iff \Big(p(e) \leq 0\Big) \vee \Big(\exists x \in \R^n \ \exists t \in \C\setminus \R \ \ \ p(x+te) = 0\Big)$\\
$\iff \Big(p(e) \leq 0\Big) \vee \Big(\exists x \in \R^n \ \exists (y,z) \in \R^2 \ \ \ z \neq 0\ \wedge\ p(x+ye+ize) = 0\Big)$\\
$\iff \exists x \in \R^n \ \exists (y,z) \in \R^2 \ \ \Big(p(e) \leq 0\Big) \vee \Big(z \neq 0\ \wedge\ \Re\big(p(x+ye+ize)\big) = 0\ \wedge\ \Im\big(p(x+ye+ize)\big) = 0\Big)$\\
where $\Re(f)$ and $\Im(f)$ denote the real and imaginary parts of the polynomial $f$ respectively. The real and imaginary parts of $p(x+ye+ize)$ can be computable in polynomial time since $p$ is of degree only $4$.

Thus, we can decide in $\etr$ if $p$ is not hyperbolic with respect to $e$.

\textbf{Deciding hyperbolicity is $\utr$-hard}: \hspace{0.5cm}
By \Cref{bpsd-thm}, it suffices to reduce the problem of deciding non-negativity of a biquadratic polynomial to the problem of deciding hyperbolicity of a homogeneous quartic polynomial with respect to a vector. 

Given a biquadratic form $Q(x_1, \cdots, x_n)$, we will construct a homogeneous polynomial $p$ of degree $4$ and a vector $e$ such that $p$ is hyperbolic with respect to $e$ iff
$\forall x_1, \cdots, x_n \in \R$, $Q(x_1,\cdots,x_n) \geq 0$.

Let $\norm{x} = \sqrt{x_1^2+\cdots+x_n^2}$. Choose $\beta$ large enough so that 
 $$\forall x_1, \cdots, x_n \in \R,\  Q(x_1,\cdots,x_n) \leq \frac{\beta^2}{4}\norm{x}^4\ .$$ It's easy to see that $\beta = 2n^2C$ satisfies this constraint where $C$ is the largest absolute value of a coefficient of $Q$.

Now, we define our polynomial $$p(x_0,\cdots,x_n) = x_0^4 - \beta x_0^2 \norm{x}^2 + Q(x_1,\cdots,x_n)$$ and vector $e = (1,0,\cdots,0) \in \R^{n+1}$. Note that $p(e)>0$, which is one of the conditions for hyperbolicity.

Let $W$ be the hyperplane in $\R^{n+1}$ given by the equation $x_0=0$. Note that $e \not\in W$. Hence, by \Cref{subspace}, $p$ is hyperbolic with respect to $e$ if and only if $B_{p,e}(x) \succeq 0$ for all $x \in W$. We compute this $B_{p,e}(x)$ for $x \in W$:

When $x \in W$, $p(x+te) = p(t,x_1,\cdots,x_n) = t^4 - \beta t^2 \norm{x}^2 + Q(x_1,\cdots,x_n)$

and $D_ep(x+te) = \frac{d}{du} p(x+te+ue) \vert_{u,x_0=0} = 4t^3  - 2\beta t \norm{x}^2$. As a result, we have
\begin{align*}
     &\frac{p(x+te)D_ep(x+se) - p(x+se)D_ep(x+te)}{t-s} \\
    &= 2\beta\norm{x}^2Q + (2\beta^2\norm{x}^4-4Q)st -4Q(s^2+t^2)+2\beta\norm{x}^2(s^2t^2-s^3t-st^3) + 4s^3t^3\ .
\end{align*}
Hence, $$B_{p,e}(x) = \begin{bmatrix}
2\beta\norm{x}^2Q &  & -4Q & \\
 & 2\beta^2\norm{x}^4-4Q & & -2\beta\norm{x}^2\\
-4Q &  & 2\beta\norm{x}^2 & \\
 & -2\beta\norm{x}^2 &  & 4
\end{bmatrix} \ .$$
Computing the Schur complement (with respect to the bottom-right $2\times 2$ block) of $B_{p,e}(x)$ and dividing by $4$, we obtain
\begin{align*}
    B_{p,e}(x) \succeq 0 &\iff \begin{bmatrix}
\frac{2Q}{\beta\norm{x}^2}(\frac{\beta^2}{4}\norm{x}^4-Q) &   \\
 &  & \frac{\beta^2}{4}\norm{x}^4-Q
\end{bmatrix} \succeq 0\\
&\iff Q \geq 0 \text{ and } \frac{\beta^2}{4}\norm{x}^4-Q \geq 0 \ .
\end{align*}
Here we have used the fact that $B_{p,e}(x) \succeq 0$ if and only if its Schur complement is $\succeq 0$. Thus, $p$ is hyperbolic with respect to $e$\\
$\iff \forall x_1, \cdots, x_n \in \R$, $0 \leq Q(x_1,\cdots,x_n) \leq \frac{\beta^2}{4}\norm{x}^4$\\
$\iff \forall x_1, \cdots, x_n \in \R$, $Q(x_1,\cdots,x_n) \geq 0$\\
as the upper bound is always satisfied by our choice of $\beta$.

Now, we prove \Cref{stability-thm} which states that deciding if a homogeneous quartic polynomial $p$ is real stable is $\utr$-complete.

\textbf{Deciding real stability is in $\utr$}: 

$p$ is real stable\\
$\iff \forall z \in \C^n \ \left(\bigwedge_{k=1}^n \Im(z_k) > 0\right)\implies p(z) \neq 0$\\
$\iff \forall x,y \in \R^n \ \left(\bigwedge_{k=1}^n y_k > 0\right)\implies \left(\Re(p(x+iy)) \neq 0 \vee \Im(p(x+iy)) \neq 0\right)$\\
where the real and imaginary parts of $p(x+iy)$ are computable in polynomial time since $p$ is of degree only $4$.

Thus, we can decide in $\utr$ if $p$ is real stable.

\textbf{Deciding real stability is $\utr$-hard}: \hspace{0.5cm} Saunderson \cite{saunderson} showed $\coNP$-hardness of deciding if a cubic homogeneous polynomial $p$ is hyperbolic with respect to a vector $e$. Chin \cite{chin2024} demonstrated a way to leverage this result to prove $\coNP$-hardness of deciding real stability of a polynomial. For the polynomial $p$ and the vector $e$ in the construction of \cite{saunderson}, they showed how to build a cubic homogeneous polynomial $\tilde{p}$ such that $p$ is hyperbolic with respect to $e$ if and only if $\tilde{p}$ is real stable.

Using the exact same method as in \cite{chin2024}, we can leverage the $\utr$-hardness result in \Cref{hyperbolic-thm}  to obtain $\utr$-hardness of deciding if a quartic homogeneous polynomial is real stable. Basically, we can show the following:

\begin{theorem}\label{hyperbolic-to-stable}
    Let $Q$ be a quartic homogeneous polynomial and $C$ be the largest absolute value of a coefficient in $Q$. Further, define the vector $e_0 = (1,0,\cdots,0) \in \R^{n+1}$ and the polynomial $p(x_0,x) = x_0^4 - \beta x_0^2 \norm{x}^2 + Q(x_1,\cdots,x_n)$ for $\beta = 2n^2C$. Then, we can construct a quartic homogeneous polynomial $\tilde{p}$ such that $\tilde{p}$ is real stable if and only if $p$ is hyperbolic with respect to $e_0$. Moreover, $\tilde{p}$ can be constructed in polynomial time in the total bit complexity of $p$.
\end{theorem}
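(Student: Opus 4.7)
The plan is to define $\tilde{p}(y) \defeq p(Ty)$ for a suitably chosen invertible linear map $T : \R^{n+1} \to \R^{n+1}$, following the strategy of \cite{chin2024}. Since $p$ is quartic and homogeneous and $T$ is linear, $\tilde{p}$ is automatically quartic and homogeneous. I will choose $T$ to satisfy two properties: (i) $T^{-1}(e_0) \in \R^{n+1}_{>0}$; and (ii) $T(\R^{n+1}_{>0}) \subseteq \Lambda_+(p, e_0)$ whenever $p$ is hyperbolic with respect to $e_0$. Recall from the preliminaries that, for homogeneous polynomials, real stability is equivalent to hyperbolicity with respect to every vector in the positive orthant.

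Assuming (i) and (ii), the desired equivalence follows. For the backward direction, real stability of $\tilde{p}$ gives hyperbolicity of $\tilde{p}$ with respect to $T^{-1}(e_0)$; unwinding the substitution $x = Ty$ in the definition of hyperbolicity shows this is equivalent to $p$ being hyperbolic with respect to $e_0$. For the forward direction, an analogous substitution shows $\Lambda_+(\tilde{p}, T^{-1}(e_0)) = T^{-1}(\Lambda_+(p, e_0))$, which by (ii) contains $\R^{n+1}_{>0}$; by G\aa{}rding's theorem (\Cref{garding}), $\tilde{p}$ is then hyperbolic with respect to every vector in $\R^{n+1}_{>0}$, hence real stable.

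I would take $T$ to be the matrix whose first row is the all-ones vector $\mathbf{1}^T$ and whose $i$-th row (for $i \geq 1$) has $-\epsilon$ in column $0$, $\epsilon$ in column $i$, and $0$ elsewhere, for a small rational parameter $\epsilon > 0$. A Schur-complement computation gives $\det(T) = (n+1)\epsilon^n \neq 0$ and $T\mathbf{1} = (n+1)\,e_0$, so $T^{-1}(e_0) = \mathbf{1}/(n+1) \in \R^{n+1}_{>0}$, establishing (i). For (ii), I would analyze the roots of the biquadratic $p(te_0 - (z_0, z)) = s^4 - \beta s^2 \norm{z}^2 + Q(-z)$ in $s = t - z_0$ by setting $u = s^2$: when $Q \geq 0$ and $\beta = 2n^2 C$, both roots $u_\pm$ of the induced quadratic in $u$ are nonnegative and $u_+ \leq \beta \norm{z}^2$, which forces $\{(z_0, z) : z_0 > \sqrt{\beta}\norm{z}\} \subseteq \Lambda_+(p, e_0)$. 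A short computation then gives $(Tw)_0 = \sum_j w_j$ and $\norm{(Tw)_{1:n}} \leq 2\epsilon \sqrt{n}\,(Tw)_0$ for every $w \in \R^{n+1}_{>0}$, so choosing $\epsilon = 1/(4n^2 C)$ yields $T(\R^{n+1}_{>0}) \subseteq \{(z_0, z) : z_0 > \sqrt{\beta}\norm{z}\}$, establishing (ii).

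The main obstacle is property (ii): identifying an explicit subcone of $\Lambda_+(p, e_0)$ via the biquadratic root analysis and then matching it against the image of the positive orthant under $T$ requires some care, though each step remains elementary. Polynomial-time constructibility of $\tilde{p}$ is immediate, since $T$ has rational entries of polynomial bit-complexity in $n$ and $\log C$.
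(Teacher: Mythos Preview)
Your argument is correct and follows the same high-level strategy as the paper (and \cite{chin2024}): exhibit a linear map whose image of the positive orthant lies inside the hyperbolicity cone $\Lambda_+(p,e_0)$, and whose preimage of $e_0$ lies in the positive orthant. The details, however, differ in two respects worth noting.

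First, the paper uses a \emph{non-square} $(n{+}1)\times 2n$ matrix $M$ whose columns are $e_0\pm\epsilon e_i$, so that $\tilde p(y)=p(My)$ lives in $2n$ variables; the backward direction then relies on $M$ having full row rank. Your choice of a square invertible $T$ keeps $\tilde p$ in $n{+}1$ variables and makes both directions clean substitution arguments, which is a modest gain in elegance. Second, the paper locates the subcone of $\Lambda_+(p,e_0)$ by showing directly that $p(\norm{x},\epsilon x)>0$ for small $\epsilon$ and invoking \Cref{garding} to identify $\Lambda_+$ with a connected component of $\{p\neq 0\}$; you instead compute the roots of the biquadratic $p(te_0-(z_0,z))$ explicitly and read off the containment $\{z_0>\sqrt\beta\,\norm{z}\}\subseteq\Lambda_+(p,e_0)$. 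Both routes are elementary; yours is perhaps more transparent about why the specific $\beta$ matters, while the paper's avoids any discussion of the sign of $Q$ (which you implicitly use via the equivalence established in \Cref{sec:res3}, or equivalently via the product $u_+u_-=Q(z)\geq 0$ forced by real-rootedness).

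One small point: your choice $\epsilon=1/(4n^2C)$ gives $2\epsilon\sqrt{n}\,\sqrt\beta<1$ only when $2nC>1$; if $C$ can be a small rational you should replace $C$ by $\max(C,1)$ in the denominator (or simply take $\epsilon$ a bit smaller). This is cosmetic and does not affect the validity of the approach.
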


We can prove this theorem by replicating \cite[Section 3]{chin2024} with minute changes and the proof is given in \Cref{app:stability}. The polynomial $p$ and the vector $e_0$ in this theorem describe our construction in the $\utr$-hardness result of \Cref{hyperbolic-thm}. Thus, we can conclude that it is $\utr$-hard to decide if a quartic homogeneous polynomial is stable.

\begingroup
\bibliography{references}
\bibliographystyle{plain}

\appendix

\section{Deciding Real Stability Reduces to Deciding Hyperbolicity}
\label{app:stability}

We show how to prove \Cref{hyperbolic-to-stable} by using the strategy of \cite[Section 3]{chin2024}. As described in \Cref{hyperbolic-to-stable}, let $Q$ be a quartic homogeneous polynomial and $C$ be the largest absolute value of a coefficient in $Q$. Define the polynomial $p(x_0,x) = x_0^4 - \beta x_0^2 \norm{x}^2 + Q(x_1,\cdots,x_n)$ for $\beta = 2n^2C$. For $i \in [n+1]$, let $e_i$ denote the vector in $\R^{n+1}$ with one in the $i$-th coordinate and zeros in the rest.

\begin{lemma}\label{eps-positivity}
    Let the polynomials $p,Q$ and the constants $C,\beta$ be as defined above. If $0 < \eps < \min\{\frac{1}{\sqrt{2\beta}},\frac{1}{2n}\}$, then $p(\norm{x}, \eps x) > 0$ for all $x \in \R^n\setminus\{0\}$.
\end{lemma}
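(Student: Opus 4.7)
My plan is to compute $p(\norm{x}, \eps x)$ in closed form and then bound the (possibly negative) contribution of $Q$ using the two hypotheses on $\eps$. Since $Q$ is homogeneous of degree $4$ and the middle term of $p$ is a product of $x_0^2$ with $\norm{x}^2$, substituting $x_0 = \norm{x}$ and replacing each $x_i$ by $\eps x_i$ gives
\[
p(\norm{x}, \eps x) \;=\; \norm{x}^4 - \beta \norm{x}^2 \cdot \eps^2 \norm{x}^2 + \eps^4 Q(x) \;=\; (1 - \beta \eps^2)\norm{x}^4 + \eps^4 Q(x).
\]
Thus the task reduces to showing the right-hand side is strictly positive for $x \neq 0$.

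Next I would bound $|Q(x)|$ crudely in terms of $C$, $n$, and $\norm{x}$. Since $Q$ is a quartic polynomial in $n$ variables with at most $n^4$ monomials (say), each coefficient bounded in absolute value by $C$, and each monomial $x^I$ of degree $4$ satisfies $|x^I| \leq \norm{x}^4$, we get $|Q(x)| \leq C n^4 \norm{x}^4$. Hence
\[
p(\norm{x}, \eps x) \;\geq\; \bigl(1 - \beta \eps^2 - \eps^4 C n^4\bigr)\norm{x}^4,
\]
and it suffices to check that the parenthesized expression is strictly positive.

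The key trick is to split $\eps^4 = \eps^2 \cdot \eps^2$ and use each bound on $\eps$ once. The assumption $\eps^2 < \tfrac{1}{2\beta}$ gives $\beta \eps^2 < \tfrac{1}{2}$, while using $\eps^2 < \tfrac{1}{2\beta}$ on one factor and $\eps^2 < \tfrac{1}{4n^2}$ (from $\eps < \tfrac{1}{2n}$) on the other yields
\[
\eps^4 C n^4 \;<\; \frac{1}{2\beta} \cdot \frac{1}{4n^2} \cdot C n^4 \;=\; \frac{C n^2}{8\beta} \;=\; \frac{1}{16},
\]
where the last equality uses $\beta = 2 n^2 C$. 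Therefore
\[
1 - \beta \eps^2 - \eps^4 C n^4 \;>\; 1 - \tfrac{1}{2} - \tfrac{1}{16} \;=\; \tfrac{7}{16} \;>\; 0,
\]
and $p(\norm{x}, \eps x) > \tfrac{7}{16}\norm{x}^4 > 0$ whenever $x \neq 0$.

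There is no real obstacle here; the only point worth care is that both bounds on $\eps$ must be used, one to tame the quadratic term $\beta \eps^2$ and the other, together with the identity $\beta = 2 n^2 C$, to tame the quartic term $\eps^4 C n^4$ so that the dependence on $C$ cancels. This is why the statement couples both inequalities instead of just requiring $\eps < 1/\sqrt{2\beta}$.
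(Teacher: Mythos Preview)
Your proof is correct and follows essentially the same route as the paper: expand $p(\norm{x},\eps x)$, bound $|Q|$ by $C n^4 \norm{x}^4$ via the crude monomial count, and use the two hypotheses on $\eps$ to make $1 - \beta\eps^2 - \eps^4 C n^4$ positive. Your splitting $\eps^4 = \eps^2\cdot\eps^2$ and using one bound on each factor is a slightly cleaner way to reach the constant than the paper's version, but the argument is the same.
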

\begin{proof}
    Fix $x \in \R^n \setminus \{0\}$. We have 
    $$p(\norm{x}, \eps x) = \norm{x}^4 - \beta \eps^2 \norm{x}^4 + Q(\eps x) = \norm{x}^4(1-\beta \eps^2 + \eps^4 Q(x/\norm{x}))\ .$$
    Since $Q$ is a quartic homogeneous polynomial, it has at most $n^4$ monomials and every monomial of $Q$ has absolute value $\leq 1$ at the point $x/\norm{x}$. Therefore, $Q(x/\norm{x}) \geq -Cn^4$.

    Thus, when $0 < \eps < \min\{1/\sqrt{2\beta},1/(2n)\}$,
    \begin{align*}
        p(\norm{x}, \eps x) &\geq  \norm{x}^4(1-\beta \eps^2 - \eps^4 Cn^4)\\
        &\geq \norm{x}^4\left(1-\beta\cdot \frac{1}{2\beta} - Cn^4 \cdot\min\left\{\frac{1}{4. 4n^4C^2},\frac{1}{16n^4}\right\}\right)> 0 \ .
    \end{align*}
\end{proof}

\begin{lemma}\label{K-lemma}
    Let $p$ be as defined above. For $0 < \eps < \min\{\frac{1}{\sqrt{2\beta}},\frac{1}{2n}\}$, define $K_{\eps} = \text{cone}\{e_0\pm\eps e_i\ |\ i \in [n]\}$. Then, $p$ is hyperbolic with respect to $e_0$ if and only if it is hyperbolic with respect to $K_{\eps}$.
\end{lemma}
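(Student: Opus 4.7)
My plan is to prove the two directions separately. The backward direction should be immediate: $e_0 = \tfrac{1}{2}(e_0 + \eps e_1) + \tfrac{1}{2}(e_0 - \eps e_1)$ realizes $e_0$ as a nonnegative combination of two generators of $K_\eps$, so $e_0 \in K_\eps$. Hence hyperbolicity of $p$ with respect to every vector of $K_\eps$ immediately forces hyperbolicity with respect to $e_0$.

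For the forward direction, assume $p$ is hyperbolic with respect to $e_0$. By \Cref{garding}, $p$ is hyperbolic with respect to every vector in $\Lambda_+(p, e_0)$, which is the connected component of $e_0$ inside $\{v \in \R^{n+1} : p(v) \neq 0\}$. The strategy is to show $K_\eps \setminus \{0\} \subseteq \Lambda_+(p, e_0)$, which I would do via two observations. First, $K_\eps$ is a polyhedral convex cone that contains $e_0$, so for any $v \in K_\eps \setminus \{0\}$ the straight segment from $e_0$ to $v$ lies entirely inside $K_\eps$. Second, $p$ is strictly positive on all of $K_\eps \setminus \{0\}$, so this segment lies in $\{p > 0\} \subseteq \{p \neq 0\}$, placing $v$ in the same connected component of $\{p\neq 0\}$ as $e_0$.

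Most of the work is in establishing the second observation. Any $v \in K_\eps$ can be written as $\sum_{i=1}^n [\lambda_i^+(e_0 + \eps e_i) + \lambda_i^-(e_0 - \eps e_i)]$ with $\lambda_i^\pm \geq 0$, giving $v_0 = \sum_i (\lambda_i^+ + \lambda_i^-)$ and $v_j = \eps(\lambda_j^+ - \lambda_j^-)$; summing $|\lambda_j^+ - \lambda_j^-| \leq \lambda_j^+ + \lambda_j^-$ yields $\sum_{j=1}^n |v_j| \leq \eps v_0$, hence $\sqrt{v_1^2 + \cdots + v_n^2} \leq \eps v_0$. Note that this also forces $v_0 > 0$ for any $v \in K_\eps \setminus \{0\}$. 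By homogeneity of $p$, we then have $p(v) = v_0^4 \cdot p(1, \eps y)$, where $y := (v_1,\dots,v_n)/(\eps v_0)$ lies in the closed unit ball of $\R^n$. The proof of \Cref{eps-positivity} applies almost verbatim in this setting: $\beta \eps^2 \norm{y}^2 \leq \beta \eps^2 < 1/2$ by the choice of $\eps$, and $|Q(y)| \leq C n^4$ since $|y_i| \leq 1$, so $\eps^4 |Q(y)|$ is controlled by the same case analysis used there, giving $p(1, \eps y) > 0$ and hence $p > 0$ on $K_\eps \setminus \{0\}$.

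The one conceptual subtlety, and the only real obstacle, is noticing that \Cref{eps-positivity} needs to be applied to points $(1, \eps y)$ with $y$ in the closed unit ball, rather than only to points $(\norm{x}, \eps x)$ as literally stated; this extension is essentially free because the bound $|Q(y)| \leq C n^4$ used in that proof only requires $|y_i| \leq 1$, so the constraint $\norm{y}_2 \leq 1$ is more than enough.
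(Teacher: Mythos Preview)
Your proof is correct and follows essentially the same route as the paper: both directions are handled identically in spirit, with the forward direction reducing via \Cref{garding} to showing $p>0$ on $K_\eps\setminus\{0\}$, which in turn comes from the observation that every $(v_0,v_1,\dots,v_n)\in K_\eps$ satisfies $\norm{(v_1,\dots,v_n)}\le \eps v_0$. The only cosmetic difference is that the paper, rather than re-running the proof of \Cref{eps-positivity} for points $(1,\eps y)$ with $\norm{y}\le 1$, instead writes such a point as $(\norm{\hat x},\delta\hat x)$ with $\hat x=x/\norm{x}$ and $\delta=\norm{x}/x_0\le\eps$, and applies \Cref{eps-positivity} directly with the smaller parameter~$\delta$.
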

\begin{proof}
    One direction is trivial as $e_0 \in K_{\eps}$. 
    
    For the other direction, assume that $p$ is hyperbolic with respect to $e_0$. Then, $p$ is hyperbolic with respect to all the vectors in the hyperbolicity cone $\Lambda_+(p,e_0)$. We want to show that the cone $K_{\eps}$ is contained in the cone $\Lambda_+(p,e_0)$. Since $p(e_0) > 0$ and by \Cref{garding}, $\Lambda_+(p,e_0)$ is exactly the connected component of $\{(x_0,x) \in \R^{n+1}: p(x_0,x) \neq 0\}$ containing $e_0$, it suffices to show that $p(x_0,x) > 0$ on $K_{\eps}$.

    For all $(x_0,x) \in K_{\eps}$, we have $\norm{x} = \delta x_0$ for some $\delta \leq \eps$. Hence, by \Cref{eps-positivity}, $p(x_0,x) > 0$ for all $(x_0,x)\in K_{\eps}$.
\end{proof}

\begin{theorem}
    Let the polynomial $p$ and the constant $\beta$ be as defined above. For $0 < \eps < \min\{\frac{1}{\sqrt{2\beta}},\frac{1}{2n}\}$, let $M$ be the $(n+1)\times 2n$ matrix whose columns are $e_0\pm\eps e_i$ for $i \in [n]$. Define 
    $$\tilde{p}(x_1,\cdots,x_{2n}) = p(Mx) \ .$$
    Then, $p$ is hyperbolic with respect to $e_0$ if and only if $\tilde{p}$ is real stable.
\end{theorem}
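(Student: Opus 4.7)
The plan is to use the characterization (stated in \Cref{subsec:hypandrspoly}) that a homogeneous polynomial is real stable if and only if it is hyperbolic with respect to every vector in the positive orthant. Since $\tilde p(x)=p(Mx)$ is homogeneous, I only need to relate hyperbolicity of $\tilde p$ with respect to $\R^{2n}_{>0}$ to hyperbolicity of $p$ with respect to $e_0$. The bridge is $M$: on one hand $M$ sends $\R^{2n}_{>0}$ into the cone $K_\eps$, and on the other hand $M$ has full row rank, so $M\R^{2n}=\R^{n+1}$.

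For the forward direction, suppose $p$ is hyperbolic with respect to $e_0$. By \Cref{K-lemma}, $p$ is hyperbolic with respect to every vector in $K_\eps$. For any $y\in\R^{2n}_{>0}$, the image $My$ is a strictly positive combination of the generators $e_0\pm\eps e_i$, hence lies in $K_\eps$. Thus $p$ is hyperbolic with respect to $My$, giving $p(My)>0$ and real-rootedness of $t\mapsto p(z+tMy)$ for every $z\in\R^{n+1}$. Specializing $z=Mx$ yields $\tilde p(x+ty)=p(Mx+tMy)$ real-rooted in $t$ for every $x\in\R^{2n}$, and $\tilde p(y)=p(My)>0$. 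So $\tilde p$ is hyperbolic with respect to every $y\in\R^{2n}_{>0}$, i.e., $\tilde p$ is real stable.

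For the reverse direction, suppose $\tilde p$ is real stable. Choose $y^\ast=\tfrac{1}{2n}\mathbf 1\in\R^{2n}_{>0}$; then $My^\ast=\tfrac{1}{2n}\sum_{i=1}^n\bigl((e_0+\eps e_i)+(e_0-\eps e_i)\bigr)=e_0$. The columns of $M$ span $\R^{n+1}$ because $e_0=\tfrac12\bigl((e_0+\eps e_1)+(e_0-\eps e_1)\bigr)$ and $e_i=\tfrac{1}{2\eps}\bigl((e_0+\eps e_i)-(e_0-\eps e_i)\bigr)$, so $M$ is surjective onto $\R^{n+1}$. Given any $z\in\R^{n+1}$, pick $x\in\R^{2n}$ with $Mx=z$; then $\tilde p(x+ty^\ast)=p(Mx+tMy^\ast)=p(z+te_0)$. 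Real stability of $\tilde p$ makes this polynomial real-rooted in $t$. Since $z$ was arbitrary and $p(e_0)=1>0$ by construction of $p$, we conclude that $p$ is hyperbolic with respect to $e_0$.

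I expect no serious obstacle: the only two facts doing real work are \Cref{K-lemma} (which replaces the rigid single-vector condition on $p$ with the more flexible cone-version that matches the image of the positive orthant under $M$) and the elementary linear-algebra observation that $M$ has full row rank. Everything else is direct substitution into the definition of hyperbolicity.
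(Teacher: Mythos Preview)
Your proof is correct and follows essentially the same approach as the paper: use \Cref{K-lemma} together with $M(\R^{2n}_{>0})\subseteq K_\eps$ for the direction hyperbolic $\Rightarrow$ real stable, and use the full row rank of $M$ with the specific preimage $y^\ast=\tfrac{1}{2n}\mathbf 1$ of $e_0$ for the converse. The only differences are cosmetic (you present the directions in the opposite order and spell out the rank computation and $p(e_0)=1$ explicitly).
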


\begin{proof}
    Let $\tilde{p}$ be real stable. We want to show that for every $z \in \R^{n+1}$, $p(te_0+z)$ has only real roots. Since $M$ has full row rank, there exists $x \in \R^{2n}$ such that $z = Mx$. Then,
    $$p(te_0+z) = p\paren{\frac{t}{2n} M\mathbbm{1}+Mx} = \tilde{p}\paren{\frac{t}{2n} \mathbbm{1} + x}$$
    where $\mathbbm{1}$ is the vector of all ones.  Since $\frac{1}{2n} \mathbbm{1} \in \R^{2n}_{>0}$ and $\tilde{p}$ is real stable, $p(te_0+z)$ has only real roots and hence $p$ is hyperbolic with respect to $e_0$. 

    Conversely, let $p$ be hyperbolic with respect to $e_0$. By \Cref{K-lemma}, $p$ is hyperbolic with respect to $K_{\eps}$. Then, for any $v \in \R^{2n}_{>0}$ and $x \in \R^{2n}$,
    $$\tilde{p}(tv+x) = p(t(Mv)+Mx)$$
    which has only real roots since $Mv \in K_{\eps}$. Therefore, $\tilde{p}$ is real stable.
\end{proof} \section{Machine Models for the Existential Theory of the Reals}
\label{app:real-ram}

\cite{DBLP:journals/siamcomp/EricksonHM24} expands the concept of word RAMs to encompass real computations. The real RAMs introduced in \cite{DBLP:journals/siamcomp/EricksonHM24} facilitate both integer and real computations simultaneously, unlike the BSS model of real computation \cite{blum1989theory}. The real RAMs in \cite{DBLP:journals/siamcomp/EricksonHM24} are more convenient to program, as they include features such as indirect memory access to real registers, which are crucial for implementing algorithms over real numbers in an efficient manner.

We give a brief outline how the real RAMs work:
A real RAM gets two input vectors: one vector of integers and one of real numbers. It has two types of registers: word registers and real registers. Word registers store integers with $w$ bits, where $w$ is the word size. The total number of registers of each type is $2^w$, since we can address at most that many registers with $w$ bits. Arithmetic operations and bitwise Boolean operations are performed on word registers, which interpret words as integers between $0$ and $2^w-1$. Real registers, however, only support arithmetic operations. Word registers can indirectly address both types of registers, and control flow is managed through conditional jumps based on comparisons between word registers or between a real register and the constant 0. For more details, refer to the original paper \cite{DBLP:journals/siamcomp/EricksonHM24}. 

Real RAMs characterize the existential theory of the reals. The main result of \cite{DBLP:journals/siamcomp/EricksonHM24} is that a problem belongs to $\exists\R$ if and only if there exists a polynomial-time real verification algorithm for it, given as a real RAM. Thus, real RAMs allows us to prove that a problem is in $\exists\R$ by giving an algorithm. Besides the input $x$, which is a sequence of words, the real verification algorithm receives a certificate $c$ comprising a sequence of real numbers and another sequence of words. The input $x$ is contained in the language recognized by the verifying algorithm if there exists a certificate $c$ that makes the real verification algorithm accept. Conversely, $x$ is not in the language if the real verification algorithm rejects all certificates. One of the main results in \cite{DBLP:journals/siamcomp/EricksonHM24} is the following:

\begin{theorem}
$L \in \exists\R$ if and only if there is a polynomial time real verification algorithm for $L$.    
\end{theorem}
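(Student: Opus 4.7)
The plan is to prove the two directions of the equivalence separately; the forward direction reduces to evaluation of an ETR formula on a certificate, while the backward direction requires a Cook--Levin-style tableau simulation of the real RAM. For the forward direction, if $L \in \etr$ then by definition there is a polynomial-time many-one reduction $f$ from $L$ to ETR. The real RAM verifier, on input $x$, first computes the sentence $f(x) = \exists y_1,\ldots,y_m\, \phi(y_1,\ldots,y_m)$ using only its word registers. It expects as certificate a tuple $(y_1,\ldots,y_m) \in \R^m$ placed in its real registers, and then evaluates $\phi$ on this certificate using its real arithmetic and comparison instructions. Because $\phi$ has polynomial size and only polynomially many atomic predicates, the evaluation runs in polynomial time; it accepts exactly when the predicate is satisfied, which happens for some choice of reals iff $x \in L$.

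For the backward direction, suppose $V$ is a polynomial-time real RAM verifier for $L$ running in time $T = \poly(|x|)$. The plan is to construct in polynomial time from $x$ an ETR sentence $\psi_x$ that is satisfiable if and only if $V$ accepts $x$ with some certificate $c$. I would existentially quantify the reals appearing in $c$ and in every real register at every time step, the contents of each word register at each time step (broken into $w$ individual bits constrained to $\{0,1\}$ via the polynomial identity $b(1-b)=0$), and auxiliary variables encoding the program counter and the addresses touched at each step. Local consistency between consecutive configurations is then expressed by a Boolean combination of polynomial equalities and strict inequalities: real arithmetic instructions become polynomial identities, word and bitwise operations become polynomial identities on the bit variables, comparisons unfold into conjunctions and disjunctions of (in)equalities on the real or bit variables, and conditional jumps are encoded by case-splitting on the sign of the compared quantity.

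The main obstacle is indirect memory access, which naively would demand a separate variable for each of the $2^w$ potential registers at each of $T$ time steps, an exponential blow-up. The key observation is that a $T$-step computation can touch at most $T$ distinct addresses, so one can maintain a sparse tableau: at each step one introduces an ``address guess'' variable together with equality constraints forcing the guess to match the contents of the indexing word register and, simultaneously, to match the address stored in exactly one row of the sparse tableau; a freshly allocated row is created when the guess is new. With this sparse bookkeeping the entire sentence $\psi_x$ has size $\poly(|x|)$ and is quantifier-free apart from the outer existentials, so $\psi_x$ is a valid ETR instance that is equivalent to the existence of an accepting computation of $V$ on $x$. This reduction runs in polynomial time and witnesses $L \in \etr$, completing the equivalence.
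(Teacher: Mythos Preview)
The paper does not prove this theorem at all: it appears in Appendix~B merely as a quoted result from Erickson, van der Hoog, and Miltzow~\cite{DBLP:journals/siamcomp/EricksonHM24}, with no accompanying argument. There is therefore no ``paper's own proof'' to compare your attempt against.

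That said, your sketch is a reasonable outline of how the EHM proof actually goes. The forward direction is essentially as you describe. For the backward direction, the Cook--Levin-style tableau is indeed the right idea; one point worth noting is that in the EHM model the word size is $w = \Theta(\log n)$, so the total number of registers $2^w$ is already polynomial in the input size, and the sparse-tableau trick you propose, while correct, is not strictly needed to keep the formula polynomial. The more delicate bookkeeping in the actual proof concerns cleanly separating the discrete part of the certificate (words) from the real part, and making sure that the comparison-and-branch instructions translate into a quantifier-free Boolean combination of polynomial sign conditions of polynomial total size; your description covers this at the right level for a sketch but would need those details spelled out to be a full proof.
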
 
\endgroup

\end{document}